\newcommand{\caA}{{\mathcal A}}
\newcommand{\caD}{{\mathcal D}}
\newcommand{\caE}{{\mathcal E}}
\newcommand{\caG}{{\mathcal G}}
\newcommand{\caH}{{\mathcal H}}
\newcommand{\caI}{{\mathcal I}}
\newcommand{\caJ}{{\mathcal J}}
\newcommand{\caK}{{\mathcal K}}
\newcommand{\caP}{{\mathcal P}}
\newcommand{\caS}{{\mathcal S}}
\newcommand{\caT}{{\mathcal T}}
\newcommand{\bbA}{{\mathbb A}}
\newcommand{\bbB}{{\mathbb B}}
\newcommand{\bbC}{{\mathbb C}}
\newcommand{\bbE}{{\mathbb E}}
\newcommand{\bbN}{{\mathbb N}}
\newcommand{\bbR}{{\mathbb R}}
\newcommand{\bbZ}{{\mathbb Z}}
\newcommand{\str}{^*}
\begin{document}

\renewcommand{\thefootnote}{\fnsymbol{footnote}}
\title{$C^1$-Classification of gapped parent Hamiltonians of quantum spin chains}

\author[S. Bachmann]{Sven Bachmann}
\address{Mathematisches Institut der Universit\"at M\"unchen, Theresienstrasse 39, D-80333
M\"unchen, Germany}
\email{sven.bachmann@math.lmu.de}

\author[Y. Ogata]{Yoshiko Ogata}
\address{Graduate School of Mathematical Sciences,
The University of Tokyo, Komaba, Tokyo, 153-8914, Japan}
\email{yoshiko@ms.u-tokyo.ac.jp}

\begin{abstract}
We consider the $C^1$-classification of gapped Hamiltonians introduced in~\cite{Fannes:1992vq, Nachtergaele:1996vc} as parents Hamiltonians of translation invariant finitely correlated states. Within this family, we show that the number of edge modes, which is equal at the left and right edge, is the complete invariant. The construction proves that translation invariance of the `bulk' ground state does not need to be broken to establish $C^1$-equivalence, namely that the spin chain does not need to be blocked.
\end{abstract}

\maketitle

\date{\today }

\newtheorem{thm}{Theorem}[section]
\newtheorem{lem}[thm]{Lemma}
\newtheorem{prop}[thm]{Proposition}
\newtheorem{cor}[thm]{Corollary}
\newtheorem{assum}[thm]{Assumption}
\newtheorem{rem}[thm]{Remark}
\newtheorem{defn}[thm]{Definition}
\newtheorem{clm}[thm]{Claim}

\newcommand{\lv}{\left \vert}
\newcommand{\rv}{\right \vert}
\newcommand{\lV}{\left \Vert}
\newcommand{\rV}{\right \Vert}
\newcommand{\la}{\left \langle}
\newcommand{\ra}{\right \rangle}
\newcommand{\ltm}{\left \{}
\newcommand{\rtm}{\right \}}
\newcommand{\lcm}{\left [}
\newcommand{\rcm}{\right ]}
\newcommand{\ket}[1]{\lv #1 \ra}
\newcommand{\bra}[1]{\la #1 \rv}
\newcommand{\lmk}{\left (}
\newcommand{\rmk}{\right )}
\newcommand{\al}{{\mathcal A}}
\newcommand{\md}{M_d({\mathbb C})}
\newcommand{\ali}[1]{{\mathfrak A}_{[ #1 ,\infty)}}
\newcommand{\alm}[1]{{\mathfrak A}_{(-\infty, #1 ]}}
\newcommand{\nn}[1]{\lV #1 \rV}
\newcommand{\br}{{\mathbb R}}
\newcommand{\dm}{{\rm dom}\mu}
\newcommand{\Ad}{\mathop{\mathrm{Ad}}\nolimits}
\newcommand{\Proj}{\mathop{\mathrm{Proj}}\nolimits}  
\newcommand{\RRe}{\mathop{\mathrm{Re}}\nolimits}
\newcommand{\RIm}{\mathop{\mathrm{Im}}\nolimits}
\newcommand{\Tr}{\mathop{\mathrm{Tr}}\nolimits}
\newcommand{\spn}{\mathop{\mathrm{span}}\nolimits}
\newcommand{\Mat}{\mathop{\mathrm{Mat}}\nolimits}
\newcommand{\GL}{\mathop{\mathrm{GL}}\nolimits}
\newcommand{\spa}{\mathop{\mathrm{span}}\nolimits}
\newcommand{\supp}{\mathop{\mathrm{supp}}\nolimits}
\newcommand{\rank}{\mathop{\mathrm{rank}}\nolimits}
\newcommand{\idd}{\mathop{\mathrm{id}}\nolimits}
\newcommand{\ran}{\mathop{\mathrm{Ran}}\nolimits}
\newcommand{\dr}{ \mathop{\mathrm{d}_{{\mathbb R}^k}}\nolimits} 
\newcommand{\dc}{ \mathop{\mathrm{d}_{\cc}}\nolimits} \newcommand{\drr}{ \mathop{\mathrm{d}_{\rr}}\nolimits} 
\newcommand{\zin}{\mathbb{Z}}
\newcommand{\rr}{\mathbb{R}}
\newcommand{\cc}{\mathbb{C}}
\newcommand{\nan}{\mathbb{N}}\newcommand{\bb}{\mathbb{B}}
\newcommand{\aaa}{\mathbb{A}}\newcommand{\ee}{\mathbb{E}}
\newcommand{\pp}{\mathbb{P}}
\newcommand{\wks}{\mathop{\mathrm{wk^*-}}\nolimits}
\newcommand{\he}{\hat {\mathbb E}}
\newcommand{\ikn}{{\caI}_{k,n}}
\newcommand{\mk}{\Mat_k(\cc)}
\newcommand{\hbb}{H^{k,\bb}_{m}}
\newcommand{\pbb}{\Phi^{k,\bb}_{m}}
\newcommand{\gbb}{\Gamma^{k,\bb}_{N,m}}
\newcommand{\mbb}{m^{k,\bb}}
\newcommand{\E}[1]{\widehat{\mathbb{E}}^{(#1)}}
\newcommand{\dist}{\dc}
\newcommand{\braket}[2]{\left\langle#1,#2\right\rangle}
\newcommand{\abs}[1]{\left\vert#1\right\vert}
\newtheorem{nota}{Notation}[section]
\def\qed{{\unskip\nobreak\hfil\penalty50
\hskip2em\hbox{}\nobreak\hfil$\square$
\parfillskip=0pt \finalhyphendemerits=0\par}\medskip}
\def\proof{\trivlist \item[\hskip \labelsep{\bf Proof.\ }]}
\def\endproof{\null\hfill\qed\endtrivlist\noindent}
\def\proofof[#1]{\trivlist \item[\hskip \labelsep{\bf Proof of #1.\ }]}
\def\endproofof{\null\hfill\qed\endtrivlist\noindent}

\renewcommand{\theenumi}{\roman{enumi}}


\section{Introduction}

A phase transition refers to a qualitative change in the properties of a family of physical systems as a parameter crosses a critical value, as for example the breaking of a continuous symmetry in thermal states as the temperature changes. The term \emph{quantum phase transition}~\cite{Sachdev:1999} refers somewhat unluckily to transitions happening at zero temperature, in particular qualitative changes in the ground states of quantum systems depending on a parameter. The archetypal example is here the transition from a unique ground state to a two-dimensional ground state space happening in the Ising model in a transverse magnetic field, which is accompanied by the closing of the spectral gap above the ground state energy. Such ground state phases and the transitions between them have received renewed attention recently both for fundamental reasons and for their potential applications in quantum information theory, with a particular focus on the structure of entanglement in the ground states.
It is a natural and important question to consider the classification of
gapped Hamiltonians, namely of Hamiltonians that have uniform spectral gap above the ground
state energy~\cite{Hastings:2005cs, Chen:2010gb, Chen:2011iq}.

In the context of quantum spin systems, a widely accepted criterion for the classification of gapped Hamiltonians
is as follows: two gapped Hamiltonians are equivalent if and only if they are connected by a continuous path of uniformly
gapped Hamiltonians
~\cite{Chen:2010gb, Chen:2011iq}.
In this article, we consider a bit stronger version of this, a {\it  $C^1$-equivalence}. We say
two gapped Hamiltonians are $C^1$-
equivalent
if and only if they are connected by a continuous {\it and piecewise $C^1$}-path of uniformly gapped Hamiltonians. We call the classification of 
gapped Hamiltonians with respect to this equivalence relation, 
the {\it $C^1$-classification of gapped Hamiltonians}.

In~\cite{Bachmann:2011kw}, 
it was shown that 
the `ground state structure' is an invariant of
this $C^1$-classification of gapped Hamiltonians.
The statement of ~\cite{Bachmann:2011kw}
is quite general and in particular does not refer to the spatial
dimension of the spin system. The need to prove the existence of a uniform spectral gap however makes the construction of relevant examples a hard problem, in particular in higher dimensions. In one dimension, the martingale method has been successfully applied to a large class of models, namely to systems with frustration free, finitely correlated ground states~\cite{Fannes:1992vq,Nachtergaele:1996vc}. They are simple, yet correlated states, and~\cite{Fannes:1992vq} gives a general recipe to construct gapped Hamiltonians which have a finitely correlated ground states, with a simple control of the spectral gap above the ground state energy. 

For translation-invariant one dimensional models, we consider the three possible infinite volume limits of finite volume ground states: on the bi-infinite chain -- the \emph{bulk ground states} -- and on the two possible half-infinite chains -- the \emph{left/right edge states}. By~\cite{Bachmann:2011kw}, the dimensions of these three ground state spaces are invariants of the $C^1$-classification. Of particular interest is the case of a unique, translation invariant bulk state, where the index is reduced to the pair of the numbers of edge modes. It is in general not clear if it is
the \emph{complete invariant}, namely if it uniquely determines the 
phase of the gapped Hamiltonians in the class. The Hamiltonians in ~\cite{Fannes:1992vq} have symmetric edge states, namely an equal number left and right edge modes, and the spin-$1$ antiferromagnetic model introduced by Affleck-Lieb-Kennedy-Tasaki in~\cite{Affleck:1988vr} is a physically relevant element of that class. On the other hand, the gapped models introduced in~\cite{Bachmann:2012uu, Bachmann:2012bf}, the `PVBS models', which also have a unique finitely correlated ground state in the bulk, have asymmetric edge modes and these Hamiltonians do not belong to the class given in ~\cite{Fannes:1992vq}.

The recurrent claim that all  such one-dimensional frustration free models
belong to the same phase and in particular that they are all equivalent to a pure product state, see e.g.~\cite{Schuch:2011ve}, was refined and partly clarified in~\cite{Bachmann:2012uu, Bachmann:2012bf} for PVBS models (but see~\cite{Wolf:2006cc} for explicit constructions of quantum phase transitions in families of finitely correlated states). There, it was shown that for the restricted class of PVBS models,
 the dimensions of edge states on the two possible half-infinite chains are indeed
complete invariant. It is further shown that the AKLT model belongs to one of these phases, namely the phase of a PVBS model with two-dimensional ground state space at each edge. 

In this paper, we prove that the number of edge modes is indeed the complete invariant of the $C^1$-classification within the family of gapped Hamiltonians of~\cite{Fannes:1992vq}. Explicitly, we construct a smooth path of uniformly gapped Hamiltonians between any two given elements of the family which have the same number of edge modes. Importantly, and unlike~\cite{Schuch:2011ve}, we do so \emph{without blocking sites}, and hence without breaking translation invariance of the ground state to a finite number of periodic states. Moreover, the interaction range along the path can be chosen to be constant and we give an explicit upper bound on the shortest such range. 
\\

\noindent{\bf Notations.}
We denote the Euclidean distance between a point $x$ and a subset $M$ in $\rr^k$ by 
$\dr(x,M)$. We also denote the Euclidean distance between two subsets $M_1$, $M_2$ in $\rr^k$ by 
$\dr(M_1,M_2)$.
Similarly,  we denote the Euclidean distance between a point $x$ and a subset $S$ in $\cc$ (resp. $\rr$) by 
$\dc(x,S)$ (resp. $\drr(x,S)$). 
For a subset $S$ of $\cc$ and $\delta>0$, the $\delta$-neighborhood of $S$ is denoted by $S_{\delta}$.
We denote the open ball in $\cc$ centered at $x\in\cc$ with radius $r$ by $B_{r}(x)$.
For a linear operator $T$ , we denote the spectrum of $T$ by $\sigma(T)$, and the spectral radius of $T$ by $r_T$.
For an isolated  subset $S$ of $\sigma(T)$, we denote the spectral projection 
of $T$ onto $S$ by $P_{S}^T$.
If $T$ is self-adjoint and $S$ is a subset of $\rr$, then $\Proj[T\in S]$ also indicates the spectral projection
of $T$ corresponding to $\sigma(T)\cap S$.
For $k\in\nan$, the set of orthogonal projections in
$k\times k$ matrices $\mk$ is denoted by ${\caP}(\mk)$
and the set of positive elements of $\mk$ by $\mk_+$.
We write $A>0$ for $A\in\mk$ if $A$ is strictly positive.
For $k\in\nan$,  $\Tr_{\mk}$ denotes the trace on $\mk$.
For a finite dimensional Hilbert space, braket $\braket{}{}$ denotes the
inner product of
the space under consideration.
For a Hilbert space $\mathfrak H$, we denote the set of all bounded liner operators on $\mathfrak H$ by
$B({\mathfrak H})$.

\section{The setup and main result}\label{sec:result}

For $\bbN\ni n\geq 2$, let $\caA$ be the finite dimensional C*-algebra $\caA = \Mat_n(\bbC)$, the algebra of $n\times n$ matrices. 
Throughout this article, this $n$ is fixed as the dimension of the spin under consideration.
We denote the set of all finite subsets in $\Gamma\subset{\bbZ}$ by ${\mathfrak S}_{\Gamma}$.
The number of elements in a finite set $\Lambda\subset {\bbZ}$ is denoted by
$|\Lambda|$.
When we talk about intervals in $\bbZ$, $[a,b]$ for $a\le b$,
means the interval in $\bbZ$, i.e., $[a,b]\cap \bbZ$.
We denote the set of all finite intervals in $\Gamma$
by ${\mathfrak I}_{\Gamma}$.
For each $z\in\bbZ$, we let $\caA_{\{z\}}$ be an isomorphic copy of $\caA$ and for any finite subset $\Lambda\subset\bbZ$, $\caA_{\Lambda} = \otimes_{z\in\Lambda}\caA_{\{z\}}$ is the local algebra of observables. 
For finite $\Lambda$, the algebra $\caA_{\Lambda} $ can be regarded as the set of all bounded operators acting on
a Hilbert space $\otimes_{z\in\Lambda}{\bbC}^n$.
We use this identification freely,
 and we denote the trace of $\caA_{\Lambda}\simeq B(\otimes_{z\in\Lambda}{\bbC}^n)$ by $\Tr_{\Lambda}$. 
Throughout this article, we fix an orthonormal basis $\{\psi_\mu\}_{\mu=1}^n$ of $\cc^n$.
If $\Lambda_1\subset\Lambda_2$, the algebra $\caA_{\Lambda_1}$ is naturally embedded in $\caA_{\Lambda_2}$ by tensoring its elements with the identity. Finally,
for an infinite subset $\Gamma$ of $\bbZ$, the algebra $\caA_{\Gamma}$ 
is given as the inductive limit of the algebras $\caA_{\Lambda}$ with $\Lambda\in{\mathfrak S}_{\Gamma}$. In particular,
$\caA_{\bbZ}$  is the chain algebra.
We denote the set of local observables in $\Gamma$ by $\caA_{\Gamma}^{\rm loc}=\bigcup_{\Lambda\in{\mathfrak S}_{\Gamma}}\caA_{\Lambda}
$.

For any $x\in\bbZ$, let $\tau_x$ be the shift operator by $x$ on $\caA_\bbZ$. 
An interaction is a map $\Phi$ from 
${\mathfrak S}_{\bbZ}$ into ${\caA}_{{\mathbb Z}}^{\rm loc}$ such
that $\Phi(X) \in {\caA}_{X}$ 
and $\Phi(X) = \Phi(X)^*$
for $X \in {\mathfrak S}_{\bbZ}$. 
An interaction $\Phi$ is translation invariant if
$
\Phi(X+j)=\tau_j\lmk
\Phi(X)\rmk,
$
for all $ j\in{\mathbb Z}$ and $X\in  {\mathfrak S}_{\bbZ}$.
Furthermore, it is of finite range if there exists an $m\in {\mathbb N}$ such that
$
\Phi(X)=0$,
for $X$ with diameter larger than $m$.
In this case, 
we say that the interaction length of $\Phi$ is less than or equal to $m$.
A Hamiltonian associated with $\Phi$ is a net of self-adjoint operators $H:=\left(H_\Lambda\right)_{\Lambda\in{\mathfrak I}_{\bbZ}}$ such that 
\begin{equation}\label{GenHamiltonian}
H_{\Lambda}:=\sum_{X\subset{\Lambda}}\Phi(X).
\end{equation}
Note that $H_{\Lambda}\in {\caA}_{\Lambda}$. Without loss of generality we consider positive interactions i.e.,  $\Phi(X)\geq 0$
for any $X\in{\mathfrak S}_{\bbZ}$, throughout this article.
We denote the set of all positive translation invariant finite range interactions by $\caJ$.
Futhermore, for $m\in\nan$, we denote by $\caJ_m$ the set of all positive translation invariant
interactions with interaction length less than or equal to $m$.

For a finite interval $\Lambda$, a ground state of
$H_{\Lambda}$ means a state on $\caA_{\Lambda}$
with support in the lowest eigenvalue space of $H_{\Lambda}$.
We denote the set of all ground states of $H_{\Lambda}$
on $\caA_{\Lambda}$ by $\caS_{\Lambda}(H)$.
For $\Lambda\in{\mathfrak I}_{\Gamma}$, any of the elements in 
 $\caS_{\Lambda}(H)$ can be extended to a 
state on $\caA_{\Gamma}$, and there exits a weak-$*$
accumulation points of such extensions,
in the thermodynamical limit $\Lambda\to\Gamma$.
We denote the set of all such accumulation points by
$\caS_{\Gamma}(H)$.

Let us specify what we mean by a \emph{gapped}  Hamiltonian:
\begin{defn}
A Hamiltonian $H:=\left(H_\Lambda\right)_{\Lambda\in{\mathfrak I}_{\bbZ}}$
associated with a positive translation invariant finite range interaction is \emph{gapped}  
if there exists $\gamma>0$ and $N_0\in\nan$ 
such that the difference 
between the smallest and the next-smallest eigenvalue of $H_\Lambda$, is bounded below by 
$\gamma$,  for all finite intervals $\Lambda\subset\bbZ$ with $|\Lambda|\ge N_0$. 
\end{defn}
We call this $\gamma$ a gap of the Hamiltonian $H$.
Note that the lowest eigenvalue may be degenerate.
\\

\noindent{\bf Gapped ground state phases.}
Now we introduce the $C^1$-classification 
of gapped Hamiltonians. 
We say  $\Phi:[0,1]\ni t\mapsto\Phi(t)\in  {\caJ}$ 
is a continuous and piecewise $C^1$-path if  for each $X\in{\mathfrak S}_{\bbZ}$,
$[0,1]\ni t\mapsto \Phi(t;X)\in {\caA}_X$ is continuous and piecewise $C^1$
with respect to the norm topology.
\begin{defn}[$C^1$-classification of gapped Hamiltonians]\label{def:phasec}
Let $H_0,H_1$ be gapped  Hamiltonians associated with interactions
$\Phi_{0},\Phi_{1}\in{\caJ}$.
We say that  $H_0,H_1$ are 
$C^1$-equivalent if the following conditions are satisfied.
\begin{enumerate}
\item There exists $m\in\nan$ and a continuous and piecewise $C^1$-path $\Phi:[0,1]\to {\caJ}_m$ such that $\Phi(0)=\Phi_0$, 
$\Phi(1)= \Phi_1$.
\item  Let   $H(t)$ be the Hamiltonian associated with $\Phi(t)$ for each $t\in[0,1]$.
There are $\gamma>0$, $N_0\in\nan$, and finite intervals $I(t)=[a(t), b(t)]$, whose endpoints $a(t), b(t)$ smoothly depending on $t\in[0,1]$,
such that for all finite intervals $\Lambda\subset\bbZ$ with $|\Lambda|\ge N_0$,
the smallest eigenvalue of $H(t)_\Lambda$ is in $I(t)$ and
the rest of the spectrum is in $[b(t)+\gamma,\infty)$.
\end{enumerate}
\end{defn}
The advantage of considering $C^1$-paths over just continuous ones is
as follows.
The following theorem is a special case of the Theorem 5.5 of ~\cite{Bachmann:2011kw}.
\begin{thm}[\cite{Bachmann:2011kw}]\label{qla}
Suppose that two gapped Hamiltonians $H_0,H_1$ are 
$C^1$-equivalent.
Then, for $\Gamma=(-\infty,-1]\cap \bbZ$, $\Gamma=[0,\infty)\cap\bbZ$ and $\Gamma=\bbZ$, there exists a quasi-local automorphism $\alpha_\Gamma$ of $\caA_\Gamma$ such that
\begin{equation*}
\caS_\Gamma(H_1) = \caS_\Gamma(H_0) \circ \alpha_\Gamma.
\end{equation*}
\end{thm}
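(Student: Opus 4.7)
The plan is to prove this via the quasi-adiabatic evolution (spectral flow) construction. Given the piecewise $C^1$-path $\Phi(t)$ in $\caJ_m$ satisfying the uniform gap condition of Definition~\ref{def:phasec}, on any finite interval $\Lambda$ with $|\Lambda|\geq N_0$ we have a smooth family of Hamiltonians $H(t)_\Lambda$ with ground state spectral projection $P_\Lambda(t):=\Proj[H(t)_\Lambda\in I(t)]$, a finite rank projection that is smooth in $t$ by the Riesz formula (using the fact that $I(t)$ and $[b(t)+\gamma,\infty)$ are separated by a gap of size $\geq\gamma$). The time derivative $\dot H(t)_\Lambda=\sum_{X\subset\Lambda}\dot\Phi(t;X)$ is well-defined on each piece of the $C^1$-partition.

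First, I would choose a real odd function $w_\gamma\in L^1(\rr)$ whose Fourier transform equals $-i/E$ for $|E|\geq\gamma$ and which decays faster than any polynomial (in fact as $\exp(-c|s|/\log^2|s|)$, following Hastings); such a $w_\gamma$ exists by standard constructions. Set
\begin{equation*}
D_\Lambda(t):=\int_{\rr}w_\gamma(s)\,e^{is H(t)_\Lambda}\dot H(t)_\Lambda\,e^{-is H(t)_\Lambda}\,ds.
\end{equation*}
The algebraic key is that $i[H(t)_\Lambda,D_\Lambda(t)]$ equals the off-diagonal part of $\dot H(t)_\Lambda$ between the two spectral subspaces, which is precisely $i\dot P_\Lambda(t)$ up to a commutator with $P_\Lambda(t)$. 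Consequently, the unitary $U_\Lambda(t)$ solving $\dot U_\Lambda(t)=iD_\Lambda(t)U_\Lambda(t)$ with $U_\Lambda(0)=\mathbf 1$ satisfies $U_\Lambda(t)P_\Lambda(0)U_\Lambda(t)^*=P_\Lambda(t)$ for every $t\in[0,1]$ on each smooth piece; continuity glues them across the finitely many breakpoints.

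Second, I would establish the quasi-locality of the automorphism $\Ad U_\Lambda(t)$. Because $\Phi(t)\in\caJ_m$ with uniformly bounded norm along $[0,1]$, the Lieb--Robinson bounds apply to $e^{is H(t)_\Lambda}$ with a group velocity independent of $\Lambda$. The sub-exponential decay of $w_\gamma$ combined with the Lieb--Robinson estimate allows one to approximate $D_\Lambda(t)$ by a sum $\sum_{Z\subset\Lambda}D_\Lambda(t;Z)$ of local terms whose norm decays faster than any polynomial in $\mathrm{diam}(Z)$, with rates independent of $\Lambda$. Standard estimates on time-dependent local interactions then imply that for each of the three regions $\Gamma\in\{(-\infty,-1]\cap\bbZ,[0,\infty)\cap\bbZ,\bbZ\}$ and each $A\in\caA_\Gamma^{\mathrm{loc}}$, the net $\bigl(U_\Lambda(t)AU_\Lambda(t)^*\bigr)_\Lambda$ converges in norm as $\Lambda\nearrow\Gamma$ to an element of $\caA_\Gamma$, and the limit defines a quasi-local $*$-automorphism $\alpha_\Gamma(t)$ of $\caA_\Gamma$.

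Finally, I would translate the intertwining property $U_\Lambda(t)P_\Lambda(0)U_\Lambda(t)^*=P_\Lambda(t)$ into the statement about states: any $\omega_1\in\caS_\Gamma(H_1)$ is a weak-$*$ accumulation point of extensions of finite-volume ground states of $H(1)$, and pulling back along $\alpha_\Gamma(1)$ produces an accumulation point of extensions of finite-volume ground states of $H(0)$, hence an element of $\caS_\Gamma(H_0)$; continuity of the composition with a fixed automorphism in the weak-$*$ topology makes this rigorous. Setting $\alpha_\Gamma:=\alpha_\Gamma(1)$ yields the claim. The main obstacle is the locality analysis of $D_\Lambda(t)$: controlling the kernel $w_\gamma$ together with the Lieb--Robinson light cone uniformly in $\Lambda$ and $t$, so that the thermodynamic limits on the two half-infinite chains exist simultaneously with the limit on $\bbZ$ and give genuine automorphisms rather than mere endomorphisms.
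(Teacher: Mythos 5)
The paper does not prove this theorem itself; it cites it as a special case of Theorem~5.5 of~\cite{Bachmann:2011kw}. Your sketch is precisely the spectral-flow (quasi-adiabatic continuation) argument used there: the odd filter $w_\gamma$ with $\widehat{w}_\gamma(E)=-i/E$ for $|E|\ge\gamma$ and sub-exponential tails, the generator $D_\Lambda(t)$, the intertwining $U_\Lambda(t)P_\Lambda(0)U_\Lambda(t)^*=P_\Lambda(t)$, the Lieb--Robinson-based locality decomposition of $D_\Lambda(t)$, and the volume-independent estimates giving norm convergence to a quasi-local automorphism $\alpha_\Gamma$ on each of the three regions, hence the claimed bijection of the ground state sets. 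Modulo a sign convention in the identity $\dot P_\Lambda = i[D_\Lambda,P_\Lambda]$ (you wrote $i[H,D]$ gives the off-diagonal part of $\dot H$, where the literature typically phrases it as $-i[H,D]$ or directly as $\dot P = i[D,P]$), the outline is a faithful account of the cited proof, so there is nothing to compare against a distinct in-paper argument.
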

See ~\cite{Bachmann:2011kw} for the general statement and the definition of quasi-locality. In other words, the structure of the bulk and of the left/right edge ground state spaces are invariants of the $C^1$-classification.
\\

\noindent{\bf Intersection property and parent Hamiltonians.}
Let $\caD_N$ be a subspace of $\otimes_{i=0}^{N-1}\cc^n$, for each $N\in \nan$.
We say that the sequence of subspaces $\{\caD_N\}_{N\in \nan}$ satisfies the {\it intersection property},
if there exists an $m\in\nan$, such that the relation 
\begin{equation}\label{intersection property}
\caD_N = \bigcap_{x=0}^{N-m} (\bbC^{n})^{\otimes x}\otimes \caD_{m}\otimes (\bbC^{n})^{\otimes N-m-x}, 
\end{equation} holds for all $N\ge m$.
In order to specify the number $m\in \nan$, we will say that $\{\caD_N\}_{N\in\nan}$ satisfies  Property~(I,$m$) 
when (\ref{intersection property}) holds for $m$
and all $N\ge m$.
Note that Property~(I,$m$) implies Property~(I,$m'$)  for all $m'\ge m$.

Given a sequence of nonzero spaces $\{\caD_N\}$ satisfying Property~(I,$m$),
there is a natural positive interaction for which $\caD_N$ are the ground state spaces
of the corresponding Hamiltonian.
Namely let $Q_m$ be the orthogonal projection onto the orthogonal complement of $\caD_m$ in $\otimes_{i=0}^{m-1}\cc^n$, and define
\begin{align*}
\Phi(X):=
\begin{cases}
\tau_x\lmk Q_m\rmk, & \text{if }  X=[x,x+m-1]  \text{ for some }  x\in\bbZ\\
0,&\text{otherwise}
\end{cases}\,.
\end{align*}
By (\ref{intersection property}), we see that $\ker H_{[0,N-1]}=\caD_N$,
for $N\ge m$, for the Hamiltonian $H=(H_{\Lambda})$ associated with $\Phi$.
We shall refer to that particular Hamiltonian as the \emph{parent Hamiltonian}
of $\{\caD_N\}$.
\\

\noindent{\bf Gapped Hamiltonians in~\cite{Fannes:1992vq}.}
Now we recall the class of Hamiltonians introduced
in ~\cite{Fannes:1992vq}. For $k\in\nan$, let 
$\caT_k$ be the set of all primitive completely positive maps on $\mk$
with spectral radius $1$.
It is well known that for each $T\in \caT_k$, $e_T:=P_{\{1\}}^T(1)$
is positive and invertible.
Furthermore, there exists a faithful state $\varphi_T$ given by
$P_{\{1\}}^T(a)=\varphi_T(a)e_T$ for $a\in\mk$.
There exists a positive invertible element $\rho_T$ in $\mk$ with
$\varphi_T=\Tr_{\mk}(\rho_T\cdot )$.
Note that $\varphi_T$ is $T$-invariant and $\varphi_T(e_T)=1$ (see Appendix \ref{sec:PathMaps}). For $T\in \caT_k$, we denote
$a_{T}:=\lV e_T^{-1}\rV$ and $c_{T}:=\lV \rho_T^{-1}\rV$.
Clearly, 
$0<a_{T},c_{T}<\infty$.

For each $n$-tuple of $k\times k$-matrices $\bb=(B_1,\cdots,B_n)\in\Mat_{1,n}(\mk)$, we define a completely positive map $\widehat \ee^{\bb}$
on $\mk$ by
\[
\widehat \ee^{\bb}:=\sum_{\mu=1}^n B_\mu\cdot B_\mu^*,
\]
and with this,
\begin{equation*}
B_{n,k} := \left\{
\bb=(B_1,\cdots,B_n)\in\Mat_{1,n}(\mk)\mid
\widehat \ee^{\bb}\in\caT_k
\right\}.
\end{equation*}

Now, for $N\in\nan$ and given $\bb\in \Mat_{1,n}(\mk)$, define 
$\Gamma_{N}^{k,\bb}
:\mk\to \otimes_{i=0}^{N-1}\cc^n$ by
\begin{equation}\label{Gammapq}
\Gamma_{N}^{k,\bb}(C):=\sum_{\mu_1,\ldots,\mu_N=1}^n
\Tr_{\mk} \lmk
CB_{\mu_N}^*\cdots B_{\mu_1}^* \rmk
\psi_{\mu_1}\otimes\cdots\otimes \psi_{\mu_N},\qquad
C\in\mk.
\end{equation}
Furthermore, set
$
{\caG}_{N}^{k,\bb}:=\ran \Gamma_{N}^{k,\bb},
$
and let $G^{k,\bb}_{N}$ be the orthogonal projection onto 
${\mathcal G}_{N}^{k,\bb}$ in $\otimes_{i=0}^{N-1}\cc^n$. 
This gives us a sequence of subspaces $\{{\mathcal G}_{N}^{k,\bb}\}_N$, and we shall say that $(k,\bbB)$ satisfies Property~(I,$m$) if the spaces $\{{\mathcal G}_{N}^{k,\bb}\}$ do so. For  $(k,\bbB)$, we set
\[
m^{k,\bb}:=\min \left\{m\in\nan\mid  (k,\bbB)\;{\rm satisfies} \; {\rm Property~(I},m)\right\}.
\]

Let ${\mathfrak E}_d$ denote the set of states on $\mathrm{Mat}_d(\bbC)$
for $d\in\nan$. 
\begin{prop}\label{thm:phases}
Let $k\in\bbN$, and $\bb\in B_{n,k}$.
\begin{enumerate}
\item $(k,\bbB)$ satisfies the intersection property.
\item For each $m\in\nan$, 
let $\Phi^{k,\bb}_{m}$ be a positive interaction  given by
\begin{align}
\Phi^{k,\bb}_{m}(X):=
\begin{cases}
\tau_x\lmk 1-G^{k,\bb}_{m}\rmk, & \text{if }  X=[x,x+m-1]  \text{ for some }  x\in\bbZ\\
0,&\text{otherwise}
\end{cases}\,,
\end{align}
and $H^{k,\bb}_{m}$ the Hamiltonian associated with 
$\Phi^{k,\bb}_{m}$.
Then for any $m\ge m^{k,\bb}$,
\begin{enumerate}
\item $H^{k,\bb}_{m}$ is gapped,
\item$\caS_{\bbZ}(H^{k,\bb}_{m})$ consists of  a unique state $\omega^{\bb}_{\infty}$ on $\caA_{\bbZ}$,
\item there exist affine bijections 
\[
\Xi_L:{\mathfrak E}_{k}\to \caS_{(-\infty,-1]}(H^{k,\bb}_{m}),\qquad
\Xi_R:{\mathfrak E}_{k}\to \caS_{[0,+\infty)}(H^{k,\bb}_{m}).
\]
 \end{enumerate}
 \end{enumerate}
\end{prop}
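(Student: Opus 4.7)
The proposition collects the central results of the Fannes--Nachtergaele--Werner construction, and my plan follows the standard three-step strategy: intersection property, uniform gap, and parameterization of ground states via the transfer operator.

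For (i), I would first use primitivity of $\widehat\ee^{\bb}$ to establish that $\Gamma_m^{k,\bb}$ is injective for every sufficiently large $m$, equivalently that the products $B_{\mu_m}\cdots B_{\mu_1}$ span $\mk$ as $\mu\in\{1,\dots,n\}^m$ varies. The inclusion $\caG_N^{k,\bb}\subseteq\bigcap_x(\cc^n)^{\otimes x}\otimes\caG_m^{k,\bb}\otimes(\cc^n)^{\otimes N-m-x}$ is then immediate from the explicit form of $\Gamma_N^{k,\bb}$: restricting an MPS vector to any $m$-window produces a linear combination of $\Gamma_m^{k,\bb}$-images with the flanking tensor factors absorbed into effective boundary matrices in $\mk$. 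The reverse inclusion is the nontrivial direction: given $\xi$ in the intersection, one reconstructs a unique $C\in\mk$ with $\xi=\Gamma_N^{k,\bb}(C)$ by extracting $C$ from $\xi$'s component on the leftmost $m$ sites (using injectivity of $\Gamma_m^{k,\bb}$) and propagating consistency through overlapping $m$-windows.

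For (ii)(a), I would apply Nachtergaele's martingale method. Primitivity of $\widehat\ee^{\bb}$ provides the exponential decay $\|(\widehat\ee^{\bb})^N-P_{\{1\}}^{\widehat\ee^{\bb}}\|\le Cr^N$ for some $r<1$, which is the correlation decay that drives the gap. Combined with (i), this produces a Knabe--Nachtergaele type estimate $\|P_xP_{x+j}-P_xP_{x+j}P_x\|\le\varepsilon_j$ with $\varepsilon_j\to 0$ on the local projections $P_x:=\tau_x(1-G_m^{k,\bb})$, and a uniform lower bound on the gap of $H^{k,\bb}_{m,\Lambda}$ follows by the standard squaring of partial sums.

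For (ii)(b) and (ii)(c), I would parameterize ground states through the MPS formula. To each $\eta\in\mathfrak E_k$ I associate a state $\Xi_R(\eta)$ on $\caA_{[0,\infty)}$ by evaluating a chain of generalized transfer operators $X\mapsto\sum_{\mu\nu}a_{\mu\nu}B_\mu XB_\nu^*$ (indexed by the one-site inputs $a$) against $\eta$ on the left and against $e_T$ at infinity. Its finite-interval reductions are supported in $\caG_N^{k,\bb}$ by construction, so $\Xi_R(\eta)\in\caS_{[0,\infty)}(H^{k,\bb}_m)$, and both affineness and injectivity are transparent. Surjectivity of $\Xi_R$ follows from (i), which forces any half-chain ground state to be of matrix-product form with a free left boundary parameterized exactly by $\mathfrak E_k$. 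The unique translation-invariant bulk state $\omega^\bb_\infty$ is recovered by pinning both sides by $\varphi_T$, and $|\caS_{\bbZ}(H^{k,\bb}_m)|=1$ is precisely primitivity of $\widehat\ee^{\bb}$. The map $\Xi_L$ is constructed symmetrically, with the roles of $e_T$ and $\rho_T$ exchanged.

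The main obstacle is the reverse direction of (i). It is the mathematical heart of the intersection property, and the place where primitivity must be quantified effectively; every later step, and ultimately the construction of uniform-$m$ paths between parent Hamiltonians with equal edge-mode counts, will rely on an explicit control of $m^{k,\bb}$.
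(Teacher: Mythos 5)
Your plan tracks the paper's own proof of this proposition, which is assembled from three ingredients developed in Section~3: Lemma~\ref{lem:intersection}(b) (injectivity of $\Gamma_s^{k,\bb}$ and Property~(I,$s{+}1$) from primitivity, quantified via the quantum Wielandt inequality so that $s^{k,\bb}\le k^4$), Proposition~\ref{Prop:Gapped pq} (the FNW martingale-method gap estimate, which the paper imports rather than re-deriving via Knabe-type squaring), and Proposition~\ref{Prop:edge spaces} (the $R^\bb$, $L^\bb$, $\Xi_R$, $\Xi_L$ parameterization). One step worth making explicit in your sketch: surjectivity of $\Xi_R$ is not a direct consequence of (i) alone. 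In Proposition~\ref{Prop:edge spaces} one writes the finite-volume density matrix $D_N$ of a ground state $\omega$ as $\sum_i\ket{\Gamma_N^{k,\bb}(X_{i,N})}\bra{\Gamma_N^{k,\bb}(X_{i,N})}$ (possible by the intersection property), then uses the two-sided comparison of Lemma~\ref{lemma:injectivity},
\[
(1-E^{k,\bb}(N))\braket{B}{B}_\bb \le \lV\Gamma_N^{k,\bb}(B)\rV^2 \le (1+E^{k,\bb}(N))\braket{B}{B}_\bb,
\]
to bound $\sum_i\Tr_{\mk}(X_{i,N}^*X_{i,N})$ uniformly, extracts a convergent subsequence $X_{i,\infty}$, and only then identifies $\omega(A)=\sum_i\varphi^\bb(X_{i,\infty}^* R^\bb(A) X_{i,\infty})\in\caE_R^{k,\bb}$. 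Your phrase ``forces any half-chain ground state to be of matrix-product form with a free left boundary'' is the right conclusion but elides this compactness mechanism, which is also what yields the uniqueness of the bulk state $\omega^\bb_\infty$.
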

Let $\caH$ denote the set of all gapped Hamiltonians given by the recipe of above proposition. We note that $\caH$ corresponds to the Hamiltonians introduced in~\cite{Fannes:1992vq}.

\noindent{\bf Classification of the parent Hamiltonians.}
We now introduce a class of Hamiltonians labeled by $k\in\bbN$ as
\begin{equation*}
{\mathcal H}_{k}
:=\ltm
H^{k,\bb}_{m}\mid
\bb\in B_{n,k},\;
\bbN\ni m \geq m^{k,\bb}
\rtm.
\end{equation*}
By the definition, we have
\[
{\mathcal H}=\cup_{k\in\nan}{\mathcal H}_{k},
\]
and from Proposition~\ref{thm:phases}(iii),
${\mathcal H}_k$ is the class of Hamiltonians in $\mathcal H$
whose left and right edge state spaces are both of dimension $k$. 

The following theorem which is the main result of this paper is that this dimension
$k$ is the complete invariant for the $C^1$-classification within the set $\caH$:
\begin{thm}\label{mainclass}
For $k, k'\in\nan$, let
$H$, $H'$ be gapped Hamiltonians on $\caA_{\bbZ}$ such that
$H\in {\mathcal H}_{k}$ and 
$H'\in {\mathcal H}_{k'}$. 
Then  $H$ and $H'$ are 
$C^1$-equivalent gapped Hamiltonians
if and only if $k=k'$.
Furthermore, the path of interactions
in Definition~\ref{def:phasec}(i)
can be taken to be in $\caJ_m$, for
$m=\max\{M,M'\}$, where
$M,M'\in\nan$ are numbers which depend only on $H$, $H'$
respectively.
 \end{thm}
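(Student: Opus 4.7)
The easy direction comes from the invariants provided by Theorem~\ref{qla}. If $H\in\caH_k$ and $H'\in\caH_{k'}$ were $C^1$-equivalent, the quasi-local automorphism $\alpha_{(-\infty,-1]}$ would induce an affine bijection between $\caS_{(-\infty,-1]}(H)$ and $\caS_{(-\infty,-1]}(H')$. Proposition~\ref{thm:phases}(iii) identifies these state spaces affinely with ${\mathfrak E}_{k}$ and ${\mathfrak E}_{k'}$ respectively, and since the real affine dimension of ${\mathfrak E}_{d}$ is $d^{2}-1$ this forces $k=k'$.

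The substantive content is the converse. The plan is to fix $k\in\nan$, pick a convenient reference point $\bb_{\star}\in B_{n,k}$ lying deep in the primitive locus (for instance, a Kraus decomposition of a normalization of the depolarizing channel on $\mk$), and to show that any $H=\hbb\in\caH_{k}$ is $C^{1}$-equivalent to a fixed reference Hamiltonian $H^{k,\bb_{\star}}_{M}$ at a sufficiently large common range $M$; the theorem will then follow by concatenation. To build the path I would first exploit the gauge freedom $\bb\mapsto(SB_{1}S^{-1},\dots,SB_{n}S^{-1})$ with $S\in\GL_{k}$: a short computation using cyclicity of the trace shows that the map $C\mapsto S^{*}C(S^{-1})^{*}$ is a bijection on $\mk$ and intertwines $\Gamma^{k,\bb'}_{N}$ with $\Gamma^{k,\bb}_{N}$, so that each $\caG_{N}^{k,\bb}$, and hence $\hbb$ itself, is left unchanged. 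Using a smooth path $S(t)\in\GL_{k}$ from $\mathbf 1$ to a suitable $S$, I can therefore deform $\bb$ into a gauge-fixed form with $e_{\widehat{\mathbb E}^{\bb}}=\mathbf 1$ and $\varphi_{\widehat{\mathbb E}^{\bb}}=k^{-1}\Tr_{\mk}$. Within this normalized slice I interpolate towards $\bb_{\star}$: when the straight line stays in $B_{n,k}$ I use it, otherwise I reroute via a small convex combination with $\bb_{\star}$ so that primitivity, an open condition robust near the depolarizing interior point, is preserved throughout.

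The main obstacle is showing that the family $H^{k,\bb(t)}_{M}$ is uniformly gapped and that a common interaction range $M$ suffices. For the range, Property~(I,$m$) is inherited monotonically in $m$, so it is enough to bound $\sup_{t\in[0,1]}\mbb$ evaluated along $\bb(t)$ from above; this I would derive from upper semicontinuity of the intersection-property exponent on the primitive locus combined with compactness of $[0,1]$. For the gap itself I would invoke the quantitative martingale estimate behind Proposition~\ref{thm:phases}(ii)(a), which controls the gap of $H^{k,\bb}_{m}$ from below in terms of $1-r'$, where $r'$ is the subleading spectral radius of $\widehat{\mathbb E}^{\bb}$ on the complement of its fixed-point line, together with the generation exponent and quantitative versions of the norms $a_{T},c_{T}$ introduced in the excerpt. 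Each of these depends continuously on $\bb$ on the primitive locus, so compactness of $[0,1]$ once more yields a uniform lower bound $\gamma>0$. The resulting $M$ depends only on $k$ and on the endpoint Hamiltonian, which is precisely the uniform range $m=\max\{M,M'\}$ claimed in the theorem.
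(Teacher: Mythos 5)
The easy direction of your argument (using Theorem~\ref{qla} and the affine dimension of $\mathfrak{E}_k$) is correct and matches the paper. The converse, however, has a genuine gap at exactly the point where the paper does its real work.

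Your key claimed step is the interpolation ``within the normalized slice\ldots when the straight line stays in $B_{n,k}$ I use it, otherwise I reroute via a small convex combination with $\bb_{\star}$ so that primitivity, an open condition robust near the depolarizing interior point, is preserved.'' This does not work. First, a convex combination of \emph{tuples}, $\bb(t) = (1-t)\bb + t\bb_\star$, does not give a convex combination of the \emph{maps} $\widehat{\bbE}^{\bb}$ and $\widehat{\bbE}^{\bb_\star}$: you get cross terms $B_\mu A B_{\star,\mu}^* + B_{\star,\mu} A B_\mu^*$, so neither the ``primitivity robust near an interior point'' heuristic nor any convexity of the primitive locus in the space of CP maps can be invoked. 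Second, openness of $X_{n,k,m}$ in $\Mat_{1,n}(\mk)$ tells you nothing about connectedness; the whole obstruction is that $B_{n,k}$ could a priori have many components. Third, the constraint that the path must remain in $\Mat_{1,n}(\mk)$ — i.e.\ have Kraus rank at most the fixed $n$ — is the very reason the naive interpolation of CP maps fails here (as the paper explicitly discusses right after Theorem~\ref{mainclass}: that argument only covers $k^2 \le n$). The paper's entire Section~\ref{sec:PathMatrices} (Proposition~\ref{skc}, the sets $S_k$, $Y_{n,k}$, $Z_{n,k}$, the Jordan-form perturbation in Lemma~\ref{gtz}, the transversality argument in Lemma~\ref{deformation}) is devoted to constructing a path of \emph{tuples} through $X_{n,k,m}$; your proposal replaces this with a one-sentence heuristic that does not withstand scrutiny.

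Two smaller points. You do not address how to pass from the original interaction ranges $m,m'$ of $H,H'$ to the common range $M$; the paper does this via the explicit convex interpolation of interactions at fixed $\bb$ in Lemma~\ref{ci}, which works because both interactions have the same kernel $\caG_N^{k,\bb}$ for $N$ large. You should add this. Finally, for the uniform bound on the range along the path, ``upper semicontinuity of the intersection-property exponent plus compactness'' is plausible but would need a proof; the paper instead uses the uniform quantitative bound $s^{k,\bb} \le k^4$ from quantum Wielandt (Lemma~\ref{lem:intersection}(c)), which gives a clean $m \ge k^4 + 1$ valid for every $\bb$ on the path. The uniform gap estimate via continuity and compactness is fine and matches Lemma~\ref{lem:B(t)} in spirit.
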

 
A classification of these Hamiltonians in ~\cite{Fannes:1992vq, Nachtergaele:1996vc}
is considered in ~\cite{Schuch:2011ve}, but there, the path is allowed to 
take {\it periodic} interactions, instead of {\it translation invariant} interactions.
Allowing periodic interactions corresponds to considering the situation $n\ge k^2$, which is much easier.
To find a path with uniform gap connecting 
$H^{k,\bb_0}_{m}$ and  $H^{k,\bb_1}_{m}$,
we need to find a path of $\widehat\bbE^{\bb(t)}$ satisfying 
$\bb(0)=\bb_0$, $\bb(1)=\bb_1$.
By the definition, these
completely positive maps $\widehat\bbE^{\bb(t)}$ should have Kraus rank less than or equal to $n$, and the gap remains
open if they are primitive.
But if $n\ge k^2$, the Kraus rank is less than or equal to $n$ for any kind of completely positive map on $\mk$.
Therefore, we may just take a path $(1-t)\widehat \bbE^{\bb_0}+t\widehat\bbE^{\bb_1}$, which is primitive because 
of the primitivity of $\widehat \bbE^{\bb_0}$ and $\widehat\bbE^{\bb_1}$, and with Kraus rank less than or equal to $k^2\le n$.
Translation invariance requires an additional work, because we can not take such a simple path: Section~\ref{sec:PathMatrices} is devoted to this problem, covering the case $k^2<n$.

This paper is organized as follows. In Section \ref{edge}, we summarize the ground state structure of Hamiltonians $H^{k,\bbB}_m$.
The classification Theorem~\ref{mainclass} is proven in Section~\ref{sec:PathHamiltonians}. 
The most nontrivial ingredient of the proof is the deformation of $\bb$, namely the pathwise connectedness of $B_{n,k}$, which we shall prove in Section~\ref{sec:PathMatrices}.

\section{The ground state structure of $H^{k,\bb}_{m}$}\label{edge}
In this section, we recall the spectral property of $H^{k,\bb}_{m}$ and
characterize the edge ground states on left and right half-infinite chains. For $k\in\nan$ and $\bb\in B_{n,k}$, we shall denote $e_\bb = e_{\widehat{\bbE}^\bb}$, $\varphi^\bb = \varphi_{\widehat{\bbE}^\bb}$, and 
$a^{\bb}=a_{\widehat\ee^{\bb}}$, 
$c^{\bb}=c_{\widehat\ee^{\bb}}$,
$r_{\bb}=r_{\widehat{\bbE}^\bb}$.

\subsection{The parent Hamiltonian and its spectral gap}\label{hamgap}
We first recall the results proven in~\cite{Fannes:1992vq}. Let $k\in\nan$ and $\bb\in B_{n,k}$. For each $N\in \nan$, let
 \begin{align*}
&E^{k,\bb}(N) :=k a^{\bb} c^\bb\left \Vert (\widehat\bbE^\bbB)^{N}(1-P^\bb_{\{1\}}) \right \Vert.
\\
&F^{k,\bb}:=
\frac{4}{a^{\bb} c^\bb}\lmk
\sup_N E^{k,\bb} (N)+c^\bb +a^{\bb}\Tr_{\mk} e_\bb
\rmk,\\
&L^{k,\bb}:=\min\left\{L\in\nan\;:\;
\sup_{N\ge L} E^{k,\bb}(N)<\frac 12\right\}.\\
&\bar l^{k,\bb}:=\min\ltm
l\in\nan \;:\; 
\sup_{N\ge l}\lmk 
\sqrt{N+1}\lmk
3E^{k,\bb}(N)F^{k,\bb}+2
\rmk
\lmk
E^{k,\bb}(N)F^{k,\bb}
\rmk
+ E^{k,\bb}(N)
\rmk<1\rtm.
\end{align*}
As $\widehat{\bbE}^\bb\in{\caT}_k$, its spectrum satisfies 
$\sigma(\widehat{\bbE}^\bb)\setminus\{1\}\subset \{z\in\cc\mid |z|<1\}$. Therefore, $E^{k,\bb}(N)$ goes to $0$ exponentially fast with respect to
$N$. It follows that $F^{k,\bb}$ $L^{k,\bb}$, $\bar l^{k,\bb}$ are well-defined and finite. By the definition, clearly, $L^{k,\bb}\le \bar l^{k,\bb}$.

\begin{prop}\label{Prop:Gapped pq}
Let $k\in\bbN$, and $\bb\in B_{n,k}$.
For each $m\in\nan$, 
let $\Phi^{k,\bb}_{m}$ be a positive interaction  given by
\begin{align}
\Phi^{k,\bb}_{m}(X):=
\begin{cases}
\tau_x\lmk 1-G^{k,\bb}_{m}\rmk,& \text{if }   X=[x,x+m-1]   \text{ for some }   x\in\bbZ\\
0,&\text{otherwise}
\end{cases}\,,
\end{align}
and $H^{k,\bb}_{m}$ be the translation invariant
Hamiltonian associated with 
$\Phi^{k,\bb}_{m}$.
Then for any $N,m\in\nan$ with
$N\ge m\ge m^{k,\bb}$,
the lowest eigenvalue of 
$
(H^{k,\bb}_{m})_{[0,N-1]}
$
is $0$ and the corresponding spectral projection is $G^{k,\bb}_{N}$.
Furthermore, for all $l\in\nan$ with $\max\{\bar l^{k,\bb},m\}< l<N$,
\begin{align*}
\frac{\gamma_{l,m}^{k,\bb}}{4(l+2)}\lmk 1-G^{k,\bb}_{N}\rmk
\le
(H^{k,\bb}_{m})_{[0,N-1]}.
\end{align*}
Here,
\[
\gamma_{l,m}^{k,\bb}=
\drr\lmk\sigma\lmk\lmk 
H^{k,\bb}_{m}\rmk_{[0,l-1]}
\rmk \setminus \{0\},0\rmk
\]
is the spectral gap of the finite volume Hamiltonian on an interval of length $l$.
\end{prop}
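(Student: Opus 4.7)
The plan is to separate the two parts of the statement. The identification of the ground state space is a direct consequence of Property~(I,$m$): since $\Phi^{k,\bb}_m$ is positive, the Hamiltonian $(H^{k,\bb}_m)_{[0,N-1]} = \sum_{x=0}^{N-m}\tau_x(1-G^{k,\bb}_m)$ is a sum of positive operators, so its kernel equals the intersection of the kernels of its summands; for $m\ge m^{k,\bb}$ and $N\ge m$, Property~(I,$m$) identifies this intersection with $\caG^{k,\bb}_N$. Hence the lowest eigenvalue of $(H^{k,\bb}_m)_{[0,N-1]}$ is $0$ with corresponding spectral projection $G^{k,\bb}_N$.

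For the gap bound, I would implement the Fannes--Nachtergaele--Werner martingale method. Fix $l$ with $\max\{\bar l^{k,\bb},m\}<l<N$ and cover $[0,N-1]$ by the sliding windows $[y,y+l-1]$, $y=0,\dots,N-l$. Each elementary term $\tau_x(1-G^{k,\bb}_m)$ is contained in at most $l-m+1$ such windows, so
\[
\sum_{y=0}^{N-l}(H^{k,\bb}_m)_{[y,y+l-1]} \le (l-m+1)(H^{k,\bb}_m)_{[0,N-1]}.
\]
Applying the first part of the proposition to a window of length $l$ together with the definition of $\gamma_{l,m}^{k,\bb}$, each window Hamiltonian obeys $(H^{k,\bb}_m)_{[y,y+l-1]}\ge\gamma_{l,m}^{k,\bb}(1-\tau_yG^{k,\bb}_l)$, so the statement reduces to a Knabe-type inequality of the form
\[
\sum_{y=0}^{N-l}(1-\tau_yG^{k,\bb}_l)\ge\frac{l-m+1}{4(l+2)}(1-G^{k,\bb}_N),
\]
uniformly in $N>l$.

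The heart of the argument is this last inequality. Its input is a quantitative near-orthogonality of the shifted ground-state projectors, namely a bound of the form $\lV (G^{k,\bb}_{l+1}\otimes 1)G^{k,\bb}_N-G^{k,\bb}_N\rV\le\epsilon(l)$ with $\epsilon(l)\to 0$. Such a bound is obtained by expressing the projectors through the maps $\Gamma^{k,\bb}_N$ of \eqref{Gammapq} and translating norm estimates on them into estimates on the transfer operator $\widehat\bbE^\bb$. Primitivity of $\widehat\bbE^\bb\in\caT_k$ gives $\lV(\widehat\bbE^\bb)^N(1-P^\bb_{\{1\}})\rV\to 0$ exponentially, so $E^{k,\bb}(N)\to 0$; the prefactors $F^{k,\bb}$ and $a^\bb c^\bb$ absorb the geometric bounds relating $\Gamma^{k,\bb}_N$ to the unique invariant state $\varphi^\bb$. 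The definition of $\bar l^{k,\bb}$ is calibrated so that, for $l>\bar l^{k,\bb}$, this near-orthogonality is small enough for a telescoping martingale expansion to close with the clean constant $\frac{1}{4(l+2)}$. The main obstacle is precisely this near-orthogonality estimate: the factor $\sqrt{N+1}$ appearing in the definition of $\bar l^{k,\bb}$ reflects an $\ell^2$-type summation built into the recursion for $\lV G^{k,\bb}_{l+1}G^{k,\bb}_N-G^{k,\bb}_N\rV$, and preserving the explicit prefactor through the martingale step is the delicate bookkeeping. Once these norm estimates are in place, the remaining assembly is routine.
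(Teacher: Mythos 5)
The paper does not actually prove this proposition; it refers the reader to FNW~\cite{Fannes:1992vq} for the argument. So there is no in-paper proof to compare against, only the cited martingale method.

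Your Part~1 is correct: positivity of the interaction reduces the kernel of $(H^{k,\bb}_m)_{[0,N-1]}$ to the intersection of the kernels of the summands, which Property~(I,$m$) identifies with $\caG^{k,\bb}_N$.

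For Part~2, the sliding-window overcounting and the reduction $(H^{k,\bb}_m)_{[y,y+l-1]}\ge\gamma_{l,m}^{k,\bb}(1-\tau_yG^{k,\bb}_l)$ are both legitimate, so the reduction to your Knabe-type inequality is correct arithmetic. The problem is the alleged input. The quantity you name as ``the heart of the argument,''
\[
\bigl\lVert (G^{k,\bb}_{l+1}\otimes 1)\,G^{k,\bb}_N-G^{k,\bb}_N \bigr\rVert ,
\]
is identically zero: by Property~(I,$m$) one has $\caG^{k,\bb}_N\subset\caG^{k,\bb}_{l+1}\otimes(\cc^n)^{\otimes(N-l-1)}$, hence $G^{k,\bb}_N\le G^{k,\bb}_{l+1}\otimes 1$ and $(G^{k,\bb}_{l+1}\otimes 1)G^{k,\bb}_N=G^{k,\bb}_N$ exactly. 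So this cannot be the nontrivial near-orthogonality estimate; the projectors you are comparing are nested, not overlapping. The genuine FNW/Nachtergaele martingale input concerns a \emph{non-nested} pair: one introduces the martingale differences $E_j=(G^{k,\bb}_{j}\otimes 1^{\otimes(N-j)})-(G^{k,\bb}_{j+1}\otimes 1^{\otimes(N-j-1)})$ (which telescope to $1-G^{k,\bb}_N$ modulo the first term) and then bounds the overlap of $E_j$ with a \emph{shifted} window projector $\tau_{j-l+1}G^{k,\bb}_l$, which does not contain or get contained in $\mathrm{Ran}\,E_j$. It is exactly this overlap, quantified via $E^{k,\bb}(N)$, $F^{k,\bb}$ and the $\sqrt{N+1}$ factor in the definition of $\bar l^{k,\bb}$, that is the real work, and it is not reachable from the identity you wrote. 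As it stands, the claim that you would ``close the martingale step with the clean constant $\frac{1}{4(l+2)}$'' rests on an estimate that is trivially satisfied and provides no information, so the key step of Part~2 is missing rather than merely under-detailed.
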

We refer to~\cite{Fannes:1992vq} for the proof of the proposition. One crucial ingredient is the proof of the Intersection Property for the sets $\caG_N^{k,\bbB}$, which itself is a consequence of the \emph{injectivity} of the map $\Gamma_N^{k,\bbB}$, see~(\ref{Gammapq}), for $N$ larger than some $N_0$. We give here an alternative proof of this fact, which yields a quantitative upper bound on $N_0$, and that will be useful later.

\subsection{The injectivity and the intersection property}
For any $n,k,m\in\nan$ and $\bb\in \Mat_{1,n}(\Mat_{k}(\cc))$, let ${\mathcal K}_m(\bb)$ be the following span of monomials of degree $m$ in the $B_\mu$'s,
\begin{equation}\label{KB}
{\mathcal K}_m(\bb) :=\spn\left\{B_{\mu_m}B_{\mu_{m-1}}\ldots B_{\mu_1}\mid
(\mu_1,\ldots,\mu_m)\subset\{1,\ldots n\}^{\times m}\right\}.
\end{equation}

Furthermore, let
\begin{equation}\label{Xnkm}
X_{n,k,m}:=
\left\{
\bb\in \Mat_{1,n}(\Mat_{k}(\cc))\mid {\mathcal K}_m(\bb)=\Mat_{k}(\cc)
\right\}.
\end{equation}
\begin{lem}\label{lem:intersection}
Let $k\in\nan$, and $\bb\in \Mat_{1,n}(\mk)$.
Then 
the followings are equivalent.
\begin{enumerate}
\item[(i)]
$r_{\bb}>0$ and $r_{\bb}^{-\frac 12}\bb \in B_{n,k}$,
\item[(ii)]
there exists an $m\in\nan$ such that $\bb\in X_{n,k,m'}$, for all
$m'\ge m$,
\item[(iii)]
there exists an $m\in\nan$ such that $\bb\in X_{n,k,m}$.
\end{enumerate}
Furthermore, if these (equivalent) conditions hold,
set
\[
s^{k,\bb}:=
\min\{
m\in \nan\mid \bb\in X_{n,k,m}
\}.
\]
Then,
\begin{enumerate}
\item[(a)]
for any $s\ge s^{k,\bb}$, ${\mathcal K}_s(\bb)=\mk$,
\item[(b)]
for any $s\ge s^{k,\bb}$,
$\Gamma_{s}^{k,\bb}$ is injective, and
$(k,\bbB)$ satisfies Property~(I,$s+1$),
\item[(c)]
$s^{k,\bb}\leq  k^4$
\item[(d)]if $B_1$ is invertible, then $s^{k,\bb}\leq k^2$.
\end{enumerate}
\end{lem}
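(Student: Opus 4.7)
First I would dispatch the easy equivalences together with (a), (b), and (iii)$\Rightarrow$(i). Part (ii)$\Rightarrow$(iii) is tautological, and (iii)$\Rightarrow$(a) I would argue directly: assuming $\caK_m(\bb) = \mk$, write $1 = \sum_\alpha c_\alpha B_{\alpha_m}\cdots B_{\alpha_1}$. Multiplying by $B_\nu$ on the left gives $\caK_1 \subseteq \caK_{m+1}$, while reading off the columns of the identity from this expansion shows $\sum_\mu \mathrm{col}(B_\mu) = \cc^k$, so $\sum_\mu B_\mu \mk = \mk$ as a right ideal. Therefore $\caK_{m+1} = \sum_\mu B_\mu \caK_m = \mk$, and iterating yields (a). For (iii)$\Rightarrow$(i), the hypothesis gives $\caK_m \psi = \cc^k$ for every nonzero $\psi$, whence $(\widehat\bbE^\bb)^m(\ket{\psi}\bra{\psi}) = \sum_\alpha \ket{B_\alpha\psi}\bra{B_\alpha\psi}$ is strictly positive, so $\widehat\bbE^\bb$ is primitive and not nilpotent, giving $r_\bb > 0$ and $r_\bb^{-1/2}\bb \in B_{n,k}$. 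Given (a), I would prove (b) by the standard matrix product state manipulations: injectivity of $\Gamma_s^{k,\bb}$ is equivalent to $\caK_s = \mk$ separating points under the Hilbert--Schmidt pairing, and Property~(I,$s+1$) is the usual parenting argument --- any element of the intersection has, on each length-$(s+1)$ window, a bond matrix $C_x \in \mk$ uniquely determined by the injective $\Gamma_{s+1}^{k,\bb}$, and overlapping-window compatibility forces these to glue into a single $C \in \mk$ realizing the element as $\Gamma_N^{k,\bb}(C)$.

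The substantive direction would be (i)$\Rightarrow$(iii) together with~(c). After normalizing $r_\bb = 1$, primitivity forces irreducibility of the action of $\{B_\mu\}$ on $\cc^k$ (a common invariant subspace $V$ would confine $(\widehat\bbE^\bb)^m(\ket{\psi}\bra{\psi})$ to $V$ for $\psi \in V$ and contradict positivity-improvement). By Burnside the unital algebra $\caA = \sum_{j \ge 0} \caK_j(\bb)$ equals $\mk$, and because the filtration $\caN_m := \sum_{j=0}^m \caK_j(\bb)$ strictly grows in dimension until stabilization, I would conclude $\caN_{m_1} = \mk$ for some $m_1 \le k^2 - 1$. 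Aperiodicity of the primitive map then says that the numerical semigroup $T := \{m \ge 1 \mid 1 \in \caK_m(\bb)\}$ has $\gcd T = 1$: a larger common divisor $p$ would produce a $p$-cyclic decomposition $\cc^k = \bigoplus_i V_i$ with $B_\mu V_i \subseteq V_{i+1 \bmod p}$, putting a $p$-th root of unity in the peripheral spectrum and contradicting primitivity. Extracting two coprime elements $m_a, m_b \in T$ of size $O(k^2)$ and invoking the Sylvester--Frobenius estimate, I would obtain $T \supseteq [L,\infty)$ with $L \le m_a m_b \le k^4$. Finally, for $m \ge L + m_1$, every $b \in \mk$ decomposes as $\sum_{j \le m_1} b_j$ with $b_j \in \caK_j(\bb)$, and since $m - j \ge L$ forces $1 \in \caK_{m-j}(\bb)$, one has $b_j = b_j \cdot 1 \in \caK_j \cdot \caK_{m-j} \subseteq \caK_m$, so $\caK_m = \mk$ and $s^{k,\bb} \le k^4$.

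For (d), I would use invertibility of $B_1$ to bypass the aperiodicity step entirely. Setting $N_m := \caK_m(\bb)\cdot B_1^{-m} \subseteq \mk$, appending $B_1$ on the right of any length-$m$ word shows $N_m \subseteq N_{m+1}$, and a short recursion using $\caK_{m+1} = \caK_1\cdot\caK_m$ shows that stabilization $N_{m+1} = N_m$ propagates to all later steps. Since $\dim N_m$ is bounded by $k^2$, the chain stabilizes at some $m_* \le k^2$; under~(iii) the stable value must be $\mk$, giving $\caK_{m_*} = B_1^{m_*} \mk = \mk$ and $s^{k,\bb} \le k^2$. The principal obstacle will be the extraction of the bounded coprime elements of $T$ in~(c): without an invertible Kraus operator one cannot translate word lengths by simple multiplication, and one must combine the Burnside stabilization with a quantum Wielandt-type analysis of the cyclic peripheral-spectrum structure to produce these bounded representations of the identity, which is the only genuinely combinatorial step in the argument and accounts for the jump from $k^2$ in~(d) to $k^4$ in~(c).
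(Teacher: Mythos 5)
Your proposal follows a genuinely different, more self-contained route than the paper for most of the lemma, and the parts that work are nice: the purely algebraic proof that $\caK_m(\bb)=\mk$ propagates to $\caK_{m+1}(\bb)=\mk$ (via $1\in\sum_\nu B_\nu\mk$, hence $\caK_{m+1}=\sum_\nu B_\nu\mk=\mk$) is more elementary than the paper's argument for (a), which instead deduces $A^*(\widehat\bbE^\bb)^{s-s^{k,\bb}}(e_\bb)=0$ and uses invertibility of $e_\bb$. Your direct (iii)$\Rightarrow$(i) via positivity-improvement of $(\widehat\bbE^\bb)^m$ on rank-one projections is also correct in spirit, though you should be careful that \emph{primitive} in the paper's sense is defined for unital CP maps, so you still need the passage from positivity-improvement of the non-unital $\widehat\bbE^\bb$ to the conclusion $r_\bb^{-1}\widehat\bbE^\bb\in\caT_k$ — this is exactly what the paper's similarity transformation $\widehat T_\bb = r_\bb^{-1}h_\bb^{-1/2}\widehat\bbE^\bb(h_\bb^{1/2}\cdot h_\bb^{1/2})h_\bb^{-1/2}$ is for. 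Your argument for (d) via the increasing chain $N_m=\caK_m(\bb)B_1^{-m}$ with stabilization propagating is a clean self-contained derivation of the invertible-Kraus-operator case; the paper simply cites the quantum Wielandt inequality of Sanz--P\'erez-Garc\'ia--Wolf--Cirac for both (c) and (d).

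The genuine gap is in the argument for (i)$\Rightarrow$(iii) together with (c), and it is more fundamental than the ``extraction of bounded coprime elements'' you flag at the end. The numerical semigroup $T=\{m\geq1 : 1\in\caK_m(\bb)\}$ must first be shown to be \emph{nonempty}, and Burnside only gives you $1\in\sum_{j=0}^{m_1}\caK_j(\bb)$ — a representation of $1$ as a mix of word-lengths, not as a single-length span. Establishing that some $\caK_m(\bb)$ contains $1$ (equivalently, that $\caK_m(\bb)=\mk$ for some $m$) \emph{is} the assertion (i)$\Rightarrow$(iii), so the Burnside-plus-aperiodicity sketch is circular as written: the aperiodicity step takes $T\neq\emptyset$ for granted and only controls its $\gcd$. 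The paper sidesteps this entirely by reducing to the unital case via the similarity transform and invoking Theorem~\ref{pm3} for the equivalence (i)$'\Leftrightarrow$(iii)$'$, and then cites \cite{Sanz:2010aa} for the quantitative bounds in (c), (d). Your last paragraph correctly senses that something is missing here, but frames it as a bound-extraction issue when the existence issue comes first; a careful version of your argument would have to reproduce the Wielandt-type combinatorics from the cited reference, at which point you are no longer shortcutting the paper's appeal to the literature.
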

\proof
To prove the first part, we note that any of (i), (ii), (iii) implies the irreducibility of $\widehat\bbE^\bb$.
For if (i) holds, then for any orthogonal projection $P\in{\mathcal P}(\mk)$ with
$\widehat\bbE^\bb(P\mk P)\subset P\mk P$, 
we have
\[
\varphi_{(r_{\bb})^{-1}\widehat\bbE^\bb}(P)e_{(r_{\bb})^{-1}\widehat\bbE^\bb}=\lim_{N\to\infty} r^{-N}_\bb(\widehat\bbE^\bb)^N(P)\in P\mk P.
\]
As $(r_{\bb})^{-1}\widehat\bbE^\bb\in\caT_k$, if $P\neq 0$, the left hand side is strictly positive, hence $P=1$.
This means the irreducibility of $\widehat\bbE^\bb$.
If (iii) holds, it is easy to check that $(\widehat\bbE^\bb)^m(A)>0$ for any nonzero $A\in\mk_+$.
Hence, for any $t>0$ and nonzero $A\in\mk_+$, we have $\exp(t\widehat\bbE^\bb)(A)>0$, hence $\widehat\bbE^\bb$ is irreducible
(see Theorem \ref{pm1}).
As (ii) implies (iii), this case is clear. 

Therefore, throughout the proof, we may assume that $\widehat\bbE^\bb$ is irreducible.
In this case, the spectral radius $r_{\bb}$ is strictly positive and a nondegenerate eigenvalue of
$\widehat\bbE^\bb$ with some strictly positive eigenvector $h_\bb$ (see Theorem~\ref{pm2}).
Hence,
\[
\widehat T_\bb:=(r_{\bb})^{-1}h_\bb^{-\frac 12} \widehat\bbE^\bb\lmk 
{h_\bb}^{\frac 12}(\cdot){h_\bb}^{\frac 12}
\rmk{h_\bb}^{-\frac 12}
\]
is a well-defined unital completely positive map.
Furthermore, we set \[
\bar \bb :=
\lmk (r_{\bb})^{-\frac 12}h_\bb^{-\frac 12} B_1 h_\bb^{\frac 12}, (r_{\bb})^{-\frac 12}h_\bb^{-\frac 12} B_2 h_\bb^{\frac 12},\ldots,
(r_{\bb})^{-\frac 12}h_\bb^{-\frac 12} B_n h_\bb^{\frac 12}\rmk,
\]
and note that $\widehat T_\bb=\widehat\bbE^{\bar \bb}$.

By Theorem \ref{pm3}, the followings are equivalent:
\begin{enumerate}
\item[(i)'] There exists a unique faithful $\widehat T_\bb$-invariant state $\psi_{\bb}$, and it satisfies
 \[
\lim_{l\to\infty}(\widehat T_\bb)^l(A)=\psi_\bb(A) 1,\qquad
A\in \Mat_k(\cc),
\]
\item[(ii)']
there exists an $m\in\nan$ such that ${\mathcal K}_{m'}(\bar \bb)=\mk$, for all
$m'\ge m$,
\item[(iii)']
there exists an $m\in\nan$ such that ${\mathcal K}_{m}(\bar \bb)=\mk$.
\end{enumerate}
Clearly, (ii), resp. (iii), is equivalent to (ii)', resp. (iii)'.
Hence it suffices to show that  (i) is equivalent to (i)'.

As (i) implies $(r_{\bb})^{-1}\widehat\bbE^\bb\in {\mathcal T}_k$,
there exists a $(r_{\bb})^{-1}\widehat\bbE^\bb$-invariant faithful state $\varphi^{(r_{\bb})^{-\frac 12}\bb}$ on $\mk$,
and we have
\[
\lim_{N\to\infty} ( \widehat T_\bb)^N(A)
=\psi_\bb(A) 1,
\qquad A\in\mk.
\]
Here, 
\[
\psi_\bb(A) 
=\frac{\varphi^{(r_{\bb})^{-\frac 12}\bb}\lmk
h_\bb^{\frac 12} A h_\bb^{\frac 12}
\rmk}{\varphi^{(r_{\bb})^{-\frac 12}\bb}(h_\bb)},\qquad A\in\mk
\]
is the unique $\widehat T_\bb$-invariant faithful state.

On the other hand, if (i)' holds, then as $\widehat T_\bb$ is similar to $(r_{\bb})^{-1}\widehat\bbE^\bb$,
 the spectral radius of $(r_{\bb})^{-1}\widehat\bbE^\bb$ is $1$,
and it is a strictly positive  non degenerate eigenvalue of  $(r_{\bb})^{-1}\widehat\bbE^\bb$.
Furthermore, we have
$\sigma((r_{\bb})^{-1}\widehat\bbE^\bb)\setminus \{1\}=\sigma(\widehat T_\bb)\setminus\{1\}\subset
\{z\in{\mathbb C}\;:\; |z|<1\}$, where the last inclusion is by the primitivity of the unital completely positive map $\widehat T_\bb$
(see Theorem \ref{pm3}).
Hence we have $(r_{\bb})^{-1}\widehat\bbE^\bb\in{\caT}_k$.
By the similarity of $\widehat T_\bb$ with $(r_{\bb})^{-1}\widehat\bbE^\bb$,
we have
\[
P_{\{1\}}^{(r_{\bb})^{-1}\widehat\bbE^\bb}(A)=\psi_{\bb}\lmk h_\bb^{-\frac 12} A h_\bb^{-\frac 12}\rmk h_\bb,\qquad
A\in \mk.
\]
From this, $e_{{(r_{\bb})^{-1}\widehat\bbE^\bb}}=P_{\{1\}}^{(r_{\bb})^{-1}\widehat\bbE^\bb}(1)
=\psi_{\bb}\lmk h_\bb^{-1} \rmk h_\bb$, is invertible in $\mk$. Furthermore, we have 
\[
P_{\{1\}}^{(r_{\bb})^{-1}\widehat\bbE^\bb}(A)=
\frac{\psi_\bb(h_\bb^{-\frac 12} A h_\bb^{-\frac 12})}{\psi_\bb(h_\bb^{-1})}
e_{{(r_{\bb})^{-1}\widehat\bbE^\bb}},\qquad
A\in \mk.
\]
Here, the state 
\[
\varphi^{(r_{\bb})^{-\frac 12}\bb}=\frac{\psi_\bb(h_\bb^{-\frac 12} \cdot h_\bb^{-\frac 12})}{\psi_\bb(h_\bb^{-1})}
\] is faithful on $\mk$, because $\psi_\bb$ is faithful.

Next we consider the latter half.
To see (a), assume that $\caK_s(\bb)\neq\mk$ for some $s>s^{k,\bb}$.
This means that there exists a nonzero $A\in\mk$ such that $\Tr_{\mk}(A^*B_{\mu_s}\cdots B_{\mu_{1}})=0$
for any
$\mu_1,\ldots,\mu_s\in\{1,\cdots,n\}$.
As $\caK_{s^{k,\bb}}(\bb)=\mk$, this implies $A^*B_{\mu_s}\cdots B_{\mu_{s^{k,\bb}+1}}=0$, for any 
$\mu_{s^{k,\bb}+1},\ldots,\mu_s\in\{1,\cdots,n\}$.
Hence we have $0=A^*(\widehat\bbE)^{s-s^{k,\bb}}(e_{\bb})=(r_{\bb})^{s-s^{k,\bb}}A^* e_{\bb}$.
Because $e_{\bb}$ is invertible, this means $A=0$, a contradiction.

To see the injectivity stated  in (b) for $s\ge s^{k,\bb}$, suppose that 
$\Gamma_{s}^{k,\bb}(C)=0$, for $C\in \mk$. Then, 
$\Tr_{\mk}(C B^*_{\mu^{(s)}})=0$, for any $s$-tuple $\mu^{(s)}\in\{1,\dots,n\}^{\times s}$.
As $s\ge s^{k,\bb}$, ${\mathcal K}_s(\bb)=\mk$ by (a), hence this implies $C=0$, and the injectivity holds.
Property~(I,$s+1$) in (b) can then be checked as in ~\cite{Fannes:1992vq}. Finally, (c), (d) is the quantum Wielandt's inequality in the least optimal case of only one linearly independent
Kraus operator~\cite{Sanz:2010aa}.
\endproof

\subsection{Edge ground states}\label{sec:edge}
Now we consider in detail the edge ground states, namely the half-chain analog of~\cite{Fannes:1992vq}.
Throughout this section we fix an orthonormal basis $\{e_{\alpha}\}_{\alpha=1}^k$ of $\cc^k$.
Furthermore, we define a sesquilinear form 
$\braket{}{}_{\bb}$ on $ \mk$ by
\begin{equation*}
\braket{B}{C}_{\bb} := \varphi^\bb (B\str e_\bb C),\quad B,C\in  \mk.
\end{equation*}
As $\varphi^\bb$ is faithful on $\mk$ and $e_\bb$ is invertible in $\mk$, this gives an inner product on
$\mk$. Furthermore,
\begin{equation}\label{bieq}
\Tr_{\mk}(X\str X)\leq a^{\bb} c^\bb \braket{X}{X}_{\bb},\quad
\Tr_{\mk}(X\str e_\bb X)\leq  c^\bb \braket{X}{X}_{\bb},\quad
\varphi^\bb(X\str X)\leq a^{\bb} \braket{X}{X}_{\bb},
\end{equation}
for any $X\in\mk$.

The following estimate can be found in~\cite{Fannes:1992vq}. We repeat its proof for completeness here.
\begin{lem}\label{lemma:injectivity}
Let $\bb\in B_{n,k}$.
 Then for any $B,C\in \mk$ and $N\in \nan$,
\begin{equation*}
\left\vert \braket{\Gamma_{N}^{k,\bb}(B)}{\Gamma_{N}^{k,\bb}(C)} - \braket{B}{C}_{\bb} \right\vert \leq E^{k,\bb}(N)\braket{B}{B}^{1/2}_{\bb}\braket{C}{C}_{\bb}^{1/2}.
\end{equation*}
Here $\braket{}{}$indicates the inner product of $\bigotimes_{i=0}^{N-1}\bbC^n$.
In particular, for any $N\in\nan$,
\begin{equation}\label{nb}
(1-E^{k,\bb}(N))\braket{B}{B}_{\bb}
\le \lV \Gamma_{N}^{k,\bb}(B)\rV^2
\le (1+E^{k,\bb}(N))\braket{B}{B}_{\bb},\quad B\in \mk,
\end{equation}and
$\Gamma_{N}^{k,\bb}$ is injective if
$E^{k,\bb}(N)<1$.
\end{lem}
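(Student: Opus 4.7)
The plan is to rewrite $\braket{\Gamma_N^{k,\bb}(B)}{\Gamma_N^{k,\bb}(C)}$ in a form that exposes the iterated channel $(\widehat\bbE^\bb)^N$ explicitly, then to split $(\widehat\bbE^\bb)^N=P^\bb_{\{1\}}+(\widehat\bbE^\bb)^N(1-P^\bb_{\{1\}})$ along the spectral decomposition of the primitive map. The projection part will be identified with $\braket{B}{C}_\bb$ exactly, and the complementary part will be bounded by two applications of Cauchy--Schwarz combined with the norm conversion~(\ref{bieq}).

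Fix an orthonormal basis $\{e_\alpha\}_{\alpha=1}^k$ of $\cc^k$ with matrix units $E_{jl}=\ket{e_j}\bra{e_l}\in\mk$, and write $M_{\vec\mu}:=B_{\mu_1}\cdots B_{\mu_N}$. The definition of $\Gamma_N^{k,\bb}$ gives
\begin{equation*}
\braket{\Gamma_N^{k,\bb}(B)}{\Gamma_N^{k,\bb}(C)}=\sum_{\vec\mu}\Tr(B^*M_{\vec\mu})\,\Tr(CM_{\vec\mu}^*).
\end{equation*}
Applying $\Tr(A)=\sum_{j}\langle e_j,Ae_j\rangle$ to each of the two factors and using the rank-one identity $M_{\vec\mu}E_{jl}M_{\vec\mu}^*=\ket{M_{\vec\mu}e_j}\bra{M_{\vec\mu}e_l}$ reassembles the sum into the key identity
\begin{equation*}
\braket{\Gamma_N^{k,\bb}(B)}{\Gamma_N^{k,\bb}(C)}=\sum_{j,l=1}^{k}\langle Be_j,\,(\widehat\bbE^\bb)^N(E_{jl})\,Ce_l\rangle.
\end{equation*}
A parallel direct computation using $P^\bb_{\{1\}}(X)=\varphi^\bb(X)e_\bb$, linearity of $\varphi^\bb$, and the matrix-unit expansion $X=\sum_{jl}X_{jl}E_{jl}$ applied to $X=B^*e_\bb C$ shows that when $(\widehat\bbE^\bb)^N$ is replaced by $P^\bb_{\{1\}}$ in the right-hand side, the result is exactly $\varphi^\bb(B^*e_\bb C)=\braket{B}{C}_\bb$.

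Setting $R_N:=(\widehat\bbE^\bb)^N(1-P^\bb_{\{1\}})$, subtraction yields
\begin{equation*}
\braket{\Gamma_N^{k,\bb}(B)}{\Gamma_N^{k,\bb}(C)}-\braket{B}{C}_\bb=\sum_{j,l=1}^{k}\langle Be_j,\,R_N(E_{jl})\,Ce_l\rangle.
\end{equation*}
A termwise Cauchy--Schwarz bound gives $|\langle Be_j,R_N(E_{jl})Ce_l\rangle|\leq\|R_N\|\,\|Be_j\|\,\|Ce_l\|$, using $\|E_{jl}\|=1$ and the operator norm of $R_N$ as a linear map on $\mk$. A second Cauchy--Schwarz over the index sums produces $\sum_{j}\|Be_j\|\leq\sqrt{k}\,\|B\|_{\mathrm{HS}}$ and the analogous bound for $C$, so the difference is bounded by $k\,\|R_N\|\,\|B\|_{\mathrm{HS}}\|C\|_{\mathrm{HS}}$. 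The first inequality of (\ref{bieq}), $\|X\|_{\mathrm{HS}}^2\leq a^\bb c^\bb\braket{X}{X}_\bb$, then converts each Hilbert--Schmidt norm into the $\bb$-norm, delivering exactly the prefactor $k\,a^\bb c^\bb\,\|R_N\|=E^{k,\bb}(N)$ on the right.

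The two-sided estimate (\ref{nb}) is then the diagonal specialisation $C=B$, and injectivity of $\Gamma_N^{k,\bb}$ whenever $E^{k,\bb}(N)<1$ follows at once because $\braket{\cdot}{\cdot}_\bb$ is a genuine inner product on $\mk$ (as observed just above (\ref{bieq})). The main non-obvious step in the argument is the first identity: converting the seemingly factorised product of traces $\Tr(B^*M_{\vec\mu})\Tr(CM_{\vec\mu}^*)$ into a single sesquilinear expression on which $(\widehat\bbE^\bb)^N$ acts as the $N$-fold iterate of the channel on the rank-one matrix units $E_{jl}$ is precisely what makes the spectral decomposition $(\widehat\bbE^\bb)^N=P^\bb_{\{1\}}+R_N$ available and reduces the estimate to controlling a single operator norm.
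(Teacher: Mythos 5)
Your proposal is correct and follows essentially the same route as the paper: both expand the inner product into a sum that exposes $(\widehat\bbE^\bb)^N$, split off $P^\bb_{\{1\}}$ to extract $\braket{B}{C}_\bb$ exactly, and bound the remainder by two applications of Cauchy--Schwarz followed by the norm conversion~(\ref{bieq}). The only difference is a trivial rearrangement of where $B$ and $C$ sit (you keep the matrix unit $E_{jl}$ as the channel argument with $B$, $C$ outside; the paper feeds $B^*E_{\alpha\beta}C$ into the channel), which leads to the identical bound $E^{k,\bb}(N)\braket{B}{B}_\bb^{1/2}\braket{C}{C}_\bb^{1/2}$.
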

\begin{proof}
First, recall that $P^\bb_{\{1\}}(a)=\varphi^\bb(a)e_{\bb}$. Then
\begin{equation*}
\braket{\Gamma_{N,p,q}^{k,\bb}(B)}{\Gamma_{N,p,q}^{k,\bb}(C)} =
 \sum_{\alpha,\beta=1}^k\left[\braket{{e_{\alpha}}}{(\widehat\bbE^\bbB)^{N}(1-P^\bb_{\{1\}})\left(B\str \ket{{e_{\alpha}}}\bra{{e_{\beta}}} C \right){e_{\beta}}} + \varphi^\bb(B\str \ket{{e_{\alpha}}}\bra{{e_{\beta}}} C) \braket{{e_{\alpha}}}{e_\bb{e_{\beta}}}
\right].
\end{equation*}
The second term is equal to $\varphi^\bb (B\str e_\bb C)=\braket{B}{C}_{\bb}$. The first term can be bounded from above by
\begin{equation*}
\sum_{\alpha,\beta=1}^k \Vert {e_{\alpha}} \Vert \Vert (\widehat\bbE^\bbB)^{(N)}(1-P^\bb_{\{1\}})  \Vert \Vert B\str {e_{\alpha}} \Vert \Vert C\str {e_{\beta}} \Vert \Vert {e_{\beta}} \Vert \le
k \Vert (\widehat\bbE^\bbB)^{(N)} (1-P^\bb_{\{1\}})  \Vert \Tr_{\mk} (B\str B)^{1/2}\Tr_{\mk} (C\str C)^{1/2}
\end{equation*}
using the Cauchy-Schwarz inequality for $\cc^k$. The first part of the lemma follows from this combined with the observation (\ref{bieq}). The second inequality (\ref{nb}) can be immediately checked from the first inequality. Finally, if $E^{k,\bb}_{p,q}(N)<1$, then from (\ref{nb}) , $\Gamma_{N,p,q}^{k,\bb}(B)=0$ implies
$\braket{B}{B}_{\bb}=0$. As $\braket{}{}_{\bb}$
is an inner product, this means $B=0$.
Therefore, $\Gamma_{N,p,q}^{k,\bb}$ is injective.
\end{proof}

For integers $b\le a$ and
an $a-b+1$-tuple $\mu^{(a-b+1)}=(\mu^{(a-b+1)}_1,\ldots,\mu^{(a-b+1)}_{a-b+1})\in\{1,\ldots,n\}^{\times (a-b+1)}$,
$\bar \psi_{ \mu^{(a-b+1)}}^{[b,a]}$ indicates the vector $\psi_{\mu_1^{(a-b+1)}}\otimes \psi_{\mu_{2}^{(a-b+1)}}\otimes\cdots 
\otimes \psi_{\mu_{a-b+1}^{(a-b+1)}}$
in $\otimes_{i=b}^{a}\cc^n$.

\begin{lem}\label{rba}
Let $\bb\in B_{n,k}$ and let $R^{\bb}:{\mathcal A}^{\rm loc}_{[0,\infty)}\to \mk$ be defined by
\begin{equation*}
R^{\bb}(A):=\sum_{\mu^{(a)},\nu^{(a)}\in \{1,\ldots,n\}^{\times a}}
\braket{\bar \psi^{[0,a-1]}_{\mu^{(a)}}}{A\bar\psi^{[0,a-1]}_{\nu^{(a)}}}
B_{\mu^{(a)}} e_\bb B_{\nu^{(a)}}\str,
\end{equation*}
if $A\in {\mathcal A}_{[0,a-1]}$ for $a\in\nan$.
Then $R^{\bb}$ is well-defined and extends to a completely positive map from the half-infinite chain ${\mathcal A}_{[0,\infty)}$
onto $\mk$, which we will denote by the same symbol $R^\bb$.
Furthermore, for any $A\in{\mathcal A}^{\rm loc}_{[0,\infty)}$ and $C\in \mk$, we have
\[
\lim_{N\to\infty}\braket{\Gamma_{N}^{k,\bb}(C)}{A\Gamma_{N}^{k,\bb}(C)}
=\varphi^\bb \lmk C^*R^{\bb}(A) C\rmk.
\]\end{lem}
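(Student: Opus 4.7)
The plan is to represent $R^\bb$ in a Stinespring-like compressed conjugation form, which makes complete positivity, well-definedness, and uniform norm bounds cheap, and then verify the limit formula by splitting each transfer-matrix product $\mu=(\mu_1,\ldots,\mu_N)$ into a head $\mu'$ on $[0,a-1]$ and a tail $\mu''$ on $[a,N-1]$, recognizing the tail as an iterate of $\widehat\bbE^\bb$ applied to a matrix unit. Concretely, for each $a\in\nan$, define $V_a^\bb:\cc^k\to(\cc^n)^{\otimes a}\otimes\cc^k$ by
\[
V_a^\bb v:=\sum_{\mu\in\{1,\ldots,n\}^{\times a}}\bar\psi_{\mu}^{[0,a-1]}\otimes B_\mu^* v.
\]
A direct calculation gives $R^\bb(A)=(V_a^\bb)^*(A\otimes e_\bb)V_a^\bb$ for every $A\in\caA_{[0,a-1]}$, so complete positivity follows from $e_\bb\geq 0$. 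Well-definedness under the embedding $\caA_{[0,a-1]}\hookrightarrow\caA_{[0,a]}$ reduces to $\widehat\bbE^\bb(e_\bb)=\sum_\mu B_\mu e_\bb B_\mu^*=e_\bb$, which holds because $e_\bb=P_{\{1\}}^{\widehat\bbE^\bb}(1)$ sits in the eigenspace of $\widehat\bbE^\bb$ at the simple eigenvalue~$1$.

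For boundedness and surjectivity, compute $(V_a^\bb)^*V_a^\bb=(\widehat\bbE^\bb)^a(1)$, which converges in norm to $e_\bb$ as $a\to\infty$ by primitivity of $\widehat\bbE^\bb$; hence $\sup_{a}\nn{(\widehat\bbE^\bb)^a(1)}<\infty$ and $\nn{R^\bb(A)}\leq \nn{e_\bb}\sup_a\nn{(\widehat\bbE^\bb)^a(1)}\nn{A}$, which allows continuous extension to a bounded completely positive map on $\caA_{[0,\infty)}$. Surjectivity onto $\mk$ then uses Lemma~\ref{lem:intersection}(a): for $a\geq s^{k,\bb}$, $\caK_a(\bb)=\mk$, and evaluating $R^\bb$ on matrix units $\ket{\bar\psi_\nu^{[0,a-1]}}\bra{\bar\psi_\mu^{[0,a-1]}}$ produces every element of $\mk\cdot e_\bb\cdot\mk=\mk$, where the last equality uses invertibility of $e_\bb$.

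For the last identity, fix $A\in\caA_{[0,a-1]}$ and $N\geq a$, and split each $N$-tuple as $\mu=(\mu',\mu'')$ with $|\mu'|=a$, $|\mu''|=N-a$, so that $B_\mu=B_{\mu'}B_{\mu''}$. Because $A$ is trivial on $[a,N-1]$, the inner product enforces matching tail indices and
\[
\braket{\Gamma_N^{k,\bb}(C)}{A\Gamma_N^{k,\bb}(C)}=\sum_{\mu',\nu'}\braket{\bar\psi_{\mu'}^{[0,a-1]}}{A\bar\psi_{\nu'}^{[0,a-1]}}\sum_{\mu''}\Tr(B_{\mu'}B_{\mu''}C^*)\Tr(CB_{\mu''}^* B_{\nu'}^*).
\]
Expanding the inner $\mu''$-sum in coordinates on $\cc^k$ and using the identity $\sum_{\mu''}(B_{\mu''})_{bc}(B_{\mu''}^*)_{fg}=((\widehat\bbE^\bb)^{N-a}(\ket{e_c}\bra{e_f}))_{bg}$ rewrites it as $\sum_{c,f}\bigl[C^*B_{\mu'}\cdot(\widehat\bbE^\bb)^{N-a}(\ket{e_c}\bra{e_f})\cdot B_{\nu'}^*C\bigr]_{cf}$. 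Since $\widehat\bbE^\bb\in\caT_k$ is primitive with spectral radius $1$, $(\widehat\bbE^\bb)^{N-a}(X)\to\varphi^\bb(X)e_\bb$ in norm as $N\to\infty$, and the inner sum converges to $\varphi^\bb(C^*B_{\mu'}e_\bb B_{\nu'}^*C)$. Substituting back and comparing with the definition of $R^\bb$ yields $\varphi^\bb(C^*R^\bb(A)C)$. The chief technical difficulty is this last step: separating the transfer-matrix product into the local head where $A$ acts and the bulk tail that collapses to a matrix unit under iteration of the transfer map takes careful index bookkeeping, after which the convergence is free from primitivity of $\widehat\bbE^\bb$.
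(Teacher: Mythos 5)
Your proof is correct and takes essentially the same approach as the paper: the head/tail decomposition of each $N$-tuple, the identification of the tail sum as $(\widehat\bbE^\bb)^{N-a}$ applied to matrix units, and the convergence to $\varphi^\bb(\cdot)e_\bb$ via primitivity are exactly the paper's own computation, and well-definedness via $\widehat\bbE^\bb(e_\bb)=e_\bb$ and surjectivity via $\caK_a(\bb)=\mk$ coincide as well. The Stinespring-type formula $R^\bb(A)=(V_a^\bb)^*(A\otimes e_\bb)V_a^\bb$ is a clean way to make explicit the ``direct check'' of complete positivity that the paper only alludes to, though since $R^\bb(1)=(\widehat\bbE^\bb)^a(e_\bb)=e_\bb$ for every $a$ you can bypass the $\sup_a$ bound and simply quote $\lV R^\bb\rV=\lV R^\bb(1)\rV=\lV e_\bb\rV$ (and your final clause should read that the convergence \emph{follows from}, not ``is free from'', primitivity).
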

\proof
Note that from the relation $\sum_{\mu=1}^n B_{\mu}e_\bb B_{\mu}^*=e_\bb$,
$R^\bb$ is well-defined.
It can be checked directly that $R^\bb\vert_{{\caA}_{[0,a-1]}}$ defines
a completely positive map on ${\caA}_{[0,a-1]}$ 
with norm $\lV R^\bb\vert_{{\caA}_{[0,a-1]}}\rV=\lV R^\bb(1)\rV=\lV e_\bb\rV$, for any $a\in\nan$.
Therefore, we can extend it to 
a completely positive map on ${\mathcal A}_{[0,\infty)}$.

To see that $R^\bb$ is surjective, recall from Lemma \ref{lem:intersection}, that for $\bb\in B_{n,k}$ 
and $s^{k,\bb}\le a\in\nan$, we have $\caK_a(\bb)=\mk$.
Therefore, for any $\xi_1,\xi_2\in\cc^k$ and $\eta\in p\cc^k\setminus \{0\}$,
there exist $\{\alpha_{\mu^{(a)}}\}_{\mu^{(a)}\in\{1,\ldots,n\}^{\times a}}\subset \cc$
$\{\beta_{\nu^{(a)}}\}_{\nu^{(a)}\in\{1,\ldots,n\}^{\times a}}\subset \cc$
\[
\sum_{\mu^{(a)}}\alpha_{\mu^{(a)}} B_{\mu^{(a)}}
=\frac{1}{\braket{\eta}{e_\bb\eta}}\ket{\xi_1}\bra{\eta},\qquad
\sum_{\nu^{(a)}}\beta_{\nu^{(a)}} B_{\nu^{(a)}}^*
=\ket{\eta}\bra{\xi_2}.
\]
Then for
\[
A=\sum_{\mu^{(a)},\nu^{(a)}}\alpha_{\mu^{(a)}} \beta_{\nu^{(a)}} 
\ket{\bar \psi^{[0,a-1]}_{\mu^{(a)}}}\bra{\bar \psi^{[0,a-1]}_{\nu^{(a)}}} ,
\]
we have $R^\bb(A)=\ket{\xi_1}\bra{\xi_2}$. Hence, $R^\bb$ is surjective.

For the latter equality, if $A\in \caA_{[0,a-1]}$,
as in Lemma \ref{lemma:injectivity} , we have
\begin{align*}
&\braket{\Gamma_{N}^{k,\bb}(C)}{A\Gamma_{N}^{k,\bb}(C)}\\
&=
 \sum_{\substack{\mu^{(a)},\nu^{(a)}\in \{1,\ldots,n\}^{\times a} \\ \mu^{(N-a)}\in \{1,\ldots,n\}^{\times (N-a)}}}
\braket{\bar\psi_{\mu^{(a)}}^{[0,a-1]}}{A {\bar\psi_{\nu^{(a)}}^{[0,a-1]}}}
 \Tr_{\mk}\lmk C^* B_{\mu^{(a)}}B_{\mu^{(N-a)}}\rmk
\Tr_{\mk} \lmk B_{\mu^{(N-a)}}^* B_{\nu^{(a)}}^* C\rmk
\\
&=
\sum_{\alpha,\beta=1}^k \sum_{\mu^{(a)},\nu^{(a)}\in \{1,\ldots,n\}^{\times a}}
\braket{e_{\alpha}}
{\lmk \widehat \bbE^{\bb}\rmk^{N-a} \lmk C^*  B_{\mu^{(a)}} \ket{e_\alpha}\bra{e_{\beta}} B_{\nu^{(a)}}^*  C\rmk e_{\beta}}\braket{\bar\psi_{\mu^{(a)}}^{[0,a-1]}}{A {\bar\psi_{\nu^{(a)}}^{[0,a-1]}}}\\
&\to 
\sum_{\alpha,\beta=1}^k \sum_{\mu^{(a)},\nu^{(a)}\in \{1,\ldots,n\}^{\times a}}
\braket{e_{\alpha}}{e_\bb 
 e_{\beta}}\varphi^\bb \lmk C^*  B_{\mu^{(a)}} \ket{e_\alpha}\bra{e_{\beta}} B_{\nu^{(a)}}^*  C\rmk\braket{\bar\psi_{\mu^{(a)}}^{[0,a-1]}}{A {\bar\psi_{\nu^{(a)}}^{[0,a-1]}}}\\
 &=\sum_{\mu^{(a)},\nu^{(a)}\in \{1,\ldots,n\}^{\times a}}\varphi^\bb \lmk C^*  B_{\mu^{(a)}} e_\bb B_{\nu^{(a)}}^*  C\rmk\braket{\bar\psi_{\mu^{(a)}}^{[0,a-1]}}{A {\bar\psi_{\nu^{(a)}}^{[0,a-1]}}}
 =\varphi^\bb\lmk C^* R^\bb (A) C\rmk .
\end{align*}
\endproof
Similarly,
\begin{lem}
Let $\bb\in B_{n,k}$ and let $L^{\bb}:{\mathcal A}^{\rm loc}_{(-\infty,-1]}\to \mk$ be defined by
\begin{equation*}
L^{\bb}(A):=\sum_{\mu^{(b)},\nu^{(b)}\in \{1,\ldots,n\}^{\times {b}}}
\braket{\bar \psi^{[-b,-1]}_{\mu^{(b)}}}{A\bar \psi^{[-b,-1]}_{\nu^{(b)}}}
B_{\nu^{(b)}}\str \rho^\bb B_{\mu^{(b)}},
\end{equation*}
if $A\in {\mathcal A}_{[-b,-1]}$ for $b\in\nan$. Then $L^{\bb}$ is well-defined and 
extends to a completely positive map from the half-infinite chain ${\mathcal A}_{(-\infty,-1]}$
onto $\mk$, which we will denote by the same symbol $L^\bb$.
Furthermore, for any $A\in{\mathcal A}^{\rm loc}_{(-\infty,-1]}$ and $C\in \mk$, we have
\[
\lim_{N\to\infty}\braket{\Gamma_{N}^{k,\bb}(C)}{\tau_N(A)\Gamma_{N}^{k,\bb}(C)}
=\Tr_{\mk}\lmk  e_\bb CL^\bb (A) C^* \rmk.
\]
\end{lem}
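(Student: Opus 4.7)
The plan is to parallel the proof of the preceding lemma for $R^\bb$, with the role of $e_\bb$ now played by $\rho^\bb$ and the identity $\widehat\bbE^\bb(e_\bb) = e_\bb$ that underlies that proof replaced by its dual $\sum_\mu B_\mu^* \rho^\bb B_\mu = \rho^\bb$, which expresses the $\widehat\bbE^\bb$-invariance of the state $\varphi^\bb = \Tr_{\mk}(\rho^\bb\,\cdot\,)$.

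First I would verify the consistency of $L^\bb$ under the embedding $\caA_{[-b,-1]}\hookrightarrow\caA_{[-b-1,-1]}$ that adjoins the identity at the new left site $-b-1$: the extra sum over that index collapses by $\sum_\mu B_\mu^* \rho^\bb B_\mu = \rho^\bb$ back to the original formula, so in particular $L^\bb(1) = \rho^\bb$. Complete positivity on each $\caA_{[-b,-1]}$ is checked by a direct calculation: writing a positive $\tilde A \in \caA_{[-b,-1]}\otimes\Mat_d(\cc)$ as $\sum_i \ket{x_i}\bra{x_i}$ with $x_i = \sum_\mu \bar\psi_\mu^{[-b,-1]} \otimes y_{i,\mu}$, one obtains
\begin{equation*}
\braket{\xi\otimes\zeta}{(L^\bb\otimes\idd)(\tilde A)(\xi\otimes\zeta)} = \sum_i \lV \sum_\mu \braket{\zeta}{y_{i,\mu}}\,(\rho^\bb)^{1/2} B_\mu\,\xi \rV^2 \ge 0.
\end{equation*}
Since $\lV L^\bb(1)\rV = \lV\rho^\bb\rV$ uniformly in $b$, $L^\bb$ extends by continuity to a CP map on the C*-inductive limit $\caA_{(-\infty,-1]}$.

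Surjectivity follows exactly as in the $R^\bb$ case: by Lemma~\ref{lem:intersection}(a), $\caK_b(\bb) = \mk$ for $b\ge s^{k,\bb}$, so given any $\xi_1,\xi_2\in\cc^k$ and a fixed $\eta$ with $\braket{\eta}{\rho^\bb\eta}\ne 0$, one can choose coefficients $\{\alpha_{\mu^{(b)}}\}$ and $\{\beta_{\nu^{(b)}}\}$ such that $\sum\alpha_{\mu^{(b)}} B_{\mu^{(b)}} = \ket{\eta}\bra{\xi_2}/\braket{\eta}{\rho^\bb\eta}$ and $\sum\overline{\beta_{\nu^{(b)}}} B_{\nu^{(b)}}^* = \ket{\xi_1}\bra{\eta}$; the corresponding $A$ then satisfies $L^\bb(A) = \ket{\xi_1}\bra{\xi_2}$.

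For the limit identity, I fix $A \in \caA_{[-b,-1]}$, so that $\tau_N(A) \in \caA_{[N-b,N-1]}$ for $N\ge b$. Expanding the pairing via~(\ref{Gammapq}) and splitting each index tuple as $\mu = (\tilde\mu^{(N-b)}, \hat\mu^{(b)})$ with $\tilde\mu$ on the bulk sites $[0,N-b-1]$ and $\hat\mu$ on the boundary sites $[N-b,N-1]$, the orthonormality of the basis on the bulk forces $\tilde\mu = \tilde\nu$, and the remaining bulk sum reassembles into $(\widehat\bbE^\bb)^{N-b}$ applied to a rank-one matrix built from $C$ and the boundary Kraus products $B_{\hat\mu}$, $B_{\hat\nu}^*$. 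Letting $N\to\infty$ and using $(\widehat\bbE^\bb)^{N-b}(X) \to \varphi^\bb(X) e_\bb = \Tr(\rho^\bb X)\,e_\bb$ produces, after rearranging the trace, the claimed limit $\Tr_{\mk}(e_\bb C L^\bb(A) C^*)$. The main technical obstacle is bookkeeping: because $\Gamma_N^{k,\bb}$ involves the reversed product $B^*_{\mu_N}\cdots B^*_{\mu_1}$, one must carefully identify which boundary operators end up on the bra versus the ket side so that $\rho^\bb$ emerges sandwiched as $B_{\nu^{(b)}}^*\rho^\bb B_{\mu^{(b)}}$, precisely the dual of the pattern $B_{\mu^{(a)}} e_\bb B_{\nu^{(a)}}^*$ appearing in $R^\bb$.
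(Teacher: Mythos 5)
Your plan is right in outline and the parts that track the $R^\bb$ case do work: the consistency of $L^\bb$ under enlarging $b$ correctly uses $\sum_\mu B_\mu^*\rho^\bb B_\mu=\rho^\bb$ (the dual of $\widehat\bbE^\bb(e_\bb)=e_\bb$), the surjectivity argument via $\caK_b(\bb)=\mk$ goes through, and the sketch of the limit computation — forcing $\tilde\mu=\tilde\nu$ on the bulk sites, reassembling the bulk sum into $(\widehat\bbE^\bb)^{N-b}$, and applying $(\widehat\bbE^\bb)^{N-b}(X)\to\Tr_{\mk}(\rho^\bb X)\,e_\bb$ — is correct, including your remark that the reversed order of factors in $\Gamma_N^{k,\bb}$ is what places $\rho^\bb$ between $B_{\nu^{(b)}}^*$ and $B_{\mu^{(b)}}$.

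The gap is in the complete positivity check. Your displayed inequality verifies $\braket{\xi\otimes\zeta}{(L^\bb\otimes\idd)(\tilde A)(\xi\otimes\zeta)}\ge0$ only for \emph{product} vectors $\xi\otimes\zeta$, and nonnegativity of a self-adjoint operator against product vectors does not imply positivity on the whole tensor product (the swap operator on $\cc^2\otimes\cc^2$ has nonnegative product expectations but eigenvalue $-1$). This is not a repairable slip in your argument: the index pattern in $L^\bb$ is ``crossed'' (the ket-index $\nu$ of $A$ lands on the \emph{left} in $B_{\nu^{(b)}}^*\rho^\bb B_{\mu^{(b)}}$, the bra-index $\mu$ on the right), so $L^\bb$ is the Stinespring CP map $A\mapsto\sum_{\mu,\nu}\braket{\bar\psi_\mu}{A\bar\psi_\nu}B_\mu^*\rho^\bb B_\nu$ precomposed with a transpose, which preserves positivity but in general not complete positivity. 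Concretely, for $n=k=2$ with $B_1$ diagonal with entries $\sqrt{9/10},\sqrt{7/10}$ and $B_2$ antidiagonal with entries $\sqrt{1/10},\sqrt{3/10}$ (a genuine element of $B_{2,2}$, with $\rho^\bb$ diagonal with entries $3/4,1/4$), the Choi matrix $\sum_{\mu,\nu}\ket{\bar\psi_\mu}\bra{\bar\psi_\nu}\otimes B_\nu^*\rho^\bb B_\mu$ has a negative eigenvalue, so $L^\bb$ is not CP. The word ``completely'' in the lemma is thus an overstatement (by contrast, $R^\bb$ is honestly CP because there the Stinespring indices line up). What your computation \emph{does} correctly establish, by specializing to $d=1$, is that $L^\bb$ is positive with $L^\bb(1)=\rho^\bb$; together with surjectivity, this is precisely what Proposition~\ref{Prop:edge spaces} and Lemma~\ref{aim} actually use.
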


If $\omega$ is a state on $\mk$, 
\begin{equation*}
\omega^\bb_R(A) := \omega\lmk e_\bbB^{-1/2}R^\bb(A) e_\bbB^{-1/2}\rmk,\qquad A\in{\mathcal A}_{[0,\infty)},
\end{equation*}
defines a state on ${\mathcal A}_{[0,\infty)}$. Similarly, if $\omega$ is a state on $\mk$, then
\begin{equation*}
\omega^\bb_L(A) := {\omega\lmk(\rho^\bbB)^{-1/2}L^\bb(A)(\rho^\bbB)^{-1/2}\rmk}
,\qquad A\in{\mathcal A}_{(-\infty,-1]},
\end{equation*}
defines a state on ${\mathcal A}_{(-\infty,-1]}$.\\

We denote the sets of these states by
\begin{align}
\caE_{R}^{k,\bb} &:=\{\omega^\bb_R \;:\; \omega\text{ is a state on }\mk\}, \label{ER}\\
\caE^{k,\bb}_{L} &:=\{\omega^\bb_L \;:\; \omega\text{ is a state on }\mk\}. \label{EL}
\end{align}

Recall that ${\mathfrak E}_d$ denotes the set of states on $\mathrm{Mat}_d(\bbC)$
for $d\in\nan$. 
\begin{lem}\label{aim}
Let $k\in \nan$ and $\bb\in B_{n,k}$.
Then 
the maps $\Xi_R : {\mathfrak E}_{k}\to \caE_{R}^{k,\bb}$,
$\Xi_L : {\mathfrak E}_{k}\to \caE_{L}^{k,\bb}$ defined by
\begin{align*}
\Xi_R(\omega):=\omega^\bb_R,\qquad \Xi_L(\omega):=\omega^\bb_L,
\end{align*}
are affine bijections.
\end{lem}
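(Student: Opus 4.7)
The proof decomposes into four checks --- well-posedness (that $\omega_R^\bb$ and $\omega_L^\bb$ really are states), affinity, surjectivity, and injectivity --- and only the last of these has any real content.

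First I would verify well-posedness. Positivity of $\omega_R^\bb$ follows from the complete positivity of $R^\bb$ given by Lemma \ref{rba}, combined with the self-adjointness of $e_\bb^{-1/2}$. For normalization one needs $R^\bb(1) = e_\bb$: evaluating the definition at $A = 1$ uses orthonormality of the product basis to collapse the double sum to $\sum_{\mu^{(a)}} B_{\mu^{(a)}} e_\bb B_{\mu^{(a)}}^* = (\widehat\bbE^\bb)^a(e_\bb) = e_\bb$, the last equality being the fixed-point property of $e_\bb$. Hence $\omega_R^\bb(1) = \omega(1) = 1$. The symmetric identity $L^\bb(1) = \rho^\bb$, which handles $\omega_L^\bb$, follows analogously from invariance of $\rho^\bb$ under the dual of $\widehat\bbE^\bb$. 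Affinity is then immediate from the linearity of $R^\bb$, $L^\bb$ and of conjugation by $e_\bb^{-1/2}$, $(\rho^\bb)^{-1/2}$; surjectivity is by definition, since $\caE_R^{k,\bb}$ and $\caE_L^{k,\bb}$ in (\ref{ER}) and (\ref{EL}) are declared to be precisely the images of $\Xi_R$ and $\Xi_L$.

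The substantive step is injectivity of $\Xi_R$. Suppose $\Xi_R(\omega_1) = \Xi_R(\omega_2)$, so that
\[
\omega_1\lmk e_\bb^{-1/2} R^\bb(A) e_\bb^{-1/2}\rmk = \omega_2\lmk e_\bb^{-1/2} R^\bb(A) e_\bb^{-1/2}\rmk
\]
for every $A \in \caA_{[0,\infty)}$. By Lemma \ref{rba}, $R^\bb$ is surjective from $\caA_{[0,\infty)}$ onto $\mk$, and since $e_\bb^{-1/2} \in \mk$ is invertible, the map $A \mapsto e_\bb^{-1/2}R^\bb(A)e_\bb^{-1/2}$ is likewise surjective onto $\mk$. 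Thus $\omega_1$ and $\omega_2$ agree on all of $\mk$, forcing $\omega_1 = \omega_2$. The verbatim argument using surjectivity of $L^\bb$ onto $\mk$ (stated in the accompanying lemma) yields injectivity of $\Xi_L$.

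I anticipate no real obstacle; the statement is essentially a bookkeeping consequence of surjectivity of $R^\bb$ and $L^\bb$, which has already been established via the quantum Wielandt-type guarantee from Lemma \ref{lem:intersection} that $\caK_a(\bb) = \mk$ for $a \geq s^{k,\bb}$.
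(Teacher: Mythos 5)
Your proof is correct and follows essentially the same route as the paper: surjectivity of $\Xi_R$, $\Xi_L$ is by construction of $\caE_R^{k,\bb}$, $\caE_L^{k,\bb}$, and injectivity follows from surjectivity of $R^\bb$, $L^\bb$ (Lemma~\ref{rba}) together with invertibility of $e_\bb$, $\rho^\bb$. The well-posedness checks ($R^\bb(1)=e_\bb$, $L^\bb(1)=\rho^\bb$) and affinity are additional bookkeeping the paper treats as already established when it declares $\omega_R^\bb$, $\omega_L^\bb$ to be states, but your verification of them is accurate.
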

\proof
From the definition, it is clear that $\Xi_L$, $\Xi_R$ are affine surjection.
As $R^\bb$ (resp. $L^\bb$) is surjective and $e_{\bb}$ (resp. $\rho^{\bb}$)
is bijective in $\mk$,
$\Xi_R$ (resp. $\Xi_L$) is injective.
\endproof
As in \cite{Fannes:1992vq}, we can show the following:
\begin{lem}
Let $\bb\in B_{n,k}$ and let $\omega_{\infty}^\bb:{\mathcal A}^{\rm loc}_{\bbZ}\to \bbC$ be defined by
\begin{equation*}\label{bbif}
\omega_{\infty}^{\bb}(A):=\sum_{\mu^{(a-b+1)},\nu^{(a-b+1)}\in \{1,\ldots,n\}^{\times a-b+1}}
\braket{\bar \psi^{[b,a]}_{\mu^{(a-b+1)}}}{A\bar \psi^{[b,a]}_{\nu^{(a-b+1)}}}
 \varphi^\bb \lmk B_{\mu^{(a-b+1)}} e_\bb B_{\nu^{(a-b+1)}}\str\rmk
\end{equation*}
if $A\in {\mathcal A}_{[b,a]}$ with $b\le a$. Then $\omega_{\infty}^\bb$  is well-defined and 
extends to a
state on the chain ${\mathcal A}_{\bbZ}$. Furthermore, 
for any $A\in{\mathcal A}^{\rm loc}_{\bbZ}$, and $C\in\mk$, we have
\[
\lim_{N,M\to\infty}\braket{\Gamma_{N+M}^{k,\bb}(C)}{\tau_{N}(A)\Gamma_{N+M}^{k,\bb}(C)}
=\omega_{\infty}^\bb(A)\varphi^\bb \lmk C^* e_\bb C\rmk.
\]
\end{lem}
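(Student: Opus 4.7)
The plan is to prove the limit formula first, and then extract well-definedness, positivity, and normalization of $\omega_{\infty}^{\bb}$ as corollaries.

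First I would compute the inner product $\braket{\Gamma_{N+M}^{k,\bb}(C)}{\tau_{N}(A) \Gamma_{N+M}^{k,\bb}(C)}$ explicitly for $A\in\caA_{[b,a]}$ by the same technique used in Lemma~\ref{rba}, but now with the operator supported strictly inside the chain. Taking $N,M$ large enough that $[N+b,N+a]\subset[0,N+M-1]$, I partition every $(N+M)$-tuple appearing in the expansion of $\Gamma_{N+M}^{k,\bb}(C)$ into a left block of length $N_L=N+b$, a middle block of length $a-b+1$ (where $\tau_N(A)$ acts) and a right block of length $N_R=M-a-1$, using that $\tau_N(A)$ enforces $\mu_i=\nu_i$ outside the middle block. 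Applying the identity $\Tr(X)\Tr(Y)=\sum_{\alpha,\beta}\braket{e_\alpha}{X\,\ket{e_\alpha}\bra{e_\beta} Y e_\beta}$ to couple the two traces coming from the bra and ket gives
\[
\braket{\Gamma_{N+M}^{k,\bb}(C)}{\tau_{N}(A) \Gamma_{N+M}^{k,\bb}(C)}
=\sum_{\alpha,\beta}\sum_{\mu_M,\nu_M,\mu_R}\braket{\bar\psi_{\mu_M}}{A\bar\psi_{\nu_M}}
\braket{e_\alpha}{(\widehat\bbE^\bb)^{N_L}\!\bigl(B_{\mu_M}B_{\mu_R}C^*\ket{e_\alpha}\bra{e_\beta}C B_{\mu_R}^*B_{\nu_M}^*\bigr) e_\beta},
\]
where the left sum over $\mu_L$ has already been performed as $(\widehat\bbE^\bb)^{N_L}$.

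Now I would take the limit. Since $\widehat\bbE^\bb\in\caT_k$, one has $(\widehat\bbE^\bb)^{N_L}\to P^\bb_{\{1\}}=\varphi^\bb(\cdot)e_\bb$ in norm, exponentially fast; similarly for the remaining sum $\sum_{\mu_R}B_{\mu_R}(\cdot)B_{\mu_R}^*=(\widehat\bbE^\bb)^{N_R}$, which tends to $\varphi^\bb(\cdot) e_\bb$. Applying the first limit pulls the $\varphi^\bb$ outside the matrix element, leaves a factor $\braket{e_\alpha}{e_\bb e_\beta}$, and the second limit converts $\sum_{\mu_R}B_{\mu_R}C^*\ket{e_\alpha}\bra{e_\beta} C B_{\mu_R}^*$ into $\varphi^\bb(C^*\ket{e_\alpha}\bra{e_\beta}C)e_\bb = \braket{e_\beta}{C\rho^\bb C^* e_\alpha}e_\bb$. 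The $\alpha,\beta$ sum then collapses to $\Tr(e_\bb C\rho^\bb C^*) = \varphi^\bb(C^*e_\bb C)$, while the remaining $\mu_M,\nu_M$ sum reproduces exactly the defining formula for $\omega_\infty^\bb(A)$, proving the limit identity.

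Finally I would derive the remaining claims. Well-definedness of $\omega_\infty^\bb$ as a linear functional on $\caA_\bbZ^{\rm loc}$ follows either directly from the limit formula (the left-hand side is intrinsic to $\tau_N(A)\in\caA_\bbZ$) or by a short calculation showing that enlarging $[b,a]$ to $[b',a']$ and summing out the extra indices uses only $\widehat\bbE^\bb(e_\bb)=e_\bb$ on the right and $(\widehat\bbE^\bb)^*(\rho^\bb)=\rho^\bb$ on the left. Normalization is the one-site computation $\omega_\infty^\bb(1)=\sum_\mu\varphi^\bb(B_\mu e_\bb B_\mu^*)=\varphi^\bb(\widehat\bbE^\bb(e_\bb))=\varphi^\bb(e_\bb)=1$. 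Positivity follows from the limit formula applied to $A=B^*B$ with $C=1$: the left-hand side is a non-negative number, and $\varphi^\bb(e_\bb)=1>0$, so $\omega_\infty^\bb(B^*B)\ge0$. Since $\omega_\infty^\bb$ is a positive, unital, linear functional on the unital $*$-subalgebra $\caA_\bbZ^{\rm loc}$, it is automatically bounded and extends uniquely to a state on $\caA_\bbZ$. The main (if mild) obstacle is purely bookkeeping: correctly tracking the ordering of the $B_\mu$'s and the splitting of the tuple so that the two limits $N_L,N_R\to\infty$ can be taken jointly, which is guaranteed by the uniform exponential convergence of $(\widehat\bbE^\bb)^n$ to $P^\bb_{\{1\}}$.
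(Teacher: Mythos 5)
Your proof is correct and follows the same route the paper intends (the paper defers to \cite{Fannes:1992vq} and the structure of Lemma~\ref{rba}). The block decomposition into left, middle, and right, the coupling of the two traces via $\Tr(X)\Tr(Y)=\sum_{\alpha,\beta}\langle e_\alpha, X(\ket{e_\alpha}\bra{e_\beta})Ye_\beta\rangle$, and the identification of the left and right sums with $(\widehat\bbE^\bb)^{N_L}$ and $(\widehat\bbE^\bb)^{N_R}$ are exactly the mechanics used in Lemma~\ref{rba}; the only new feature here is that both half-sums converge simultaneously, which you correctly note is handled by the uniform exponential convergence of $(\widehat\bbE^\bb)^n$ to $P^\bb_{\{1\}}$. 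Your derivation of well-definedness (independence of the choice of $[b,a]$), normalization $\omega_\infty^\bb(1)=\varphi^\bb(\widehat\bbE^\bb(e_\bb))=1$, and positivity via the limit formula with $C=1$ are all valid, and extension to a state on $\caA_\bbZ$ from a positive unital functional on the local algebra is standard. One small stylistic point: the well-definedness in fact follows for free once the limit formula is established (the left-hand side depends only on $A$ as an abstract local observable, not on the chosen support interval), so the separate computation with $\widehat\bbE^\bb(e_\bb)=e_\bb$ and $(\widehat\bbE^\bb)^*\rho_\bb=\rho_\bb$ is a useful sanity check but redundant — which you yourself acknowledge.
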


On the other hand, for $\Phi^{k,\bb}_{m} = 1-G^{k,\bb}_{m}$, $m\in\nan$, recall that
$\caS_{[0,\infty)}(H^{k,\bb}_{m})$,
$\caS_{(-\infty,-1]}(H^{k,\bb}_{m})$, $\caS_{\bbZ}(H^{k,\bb}_{m})$
are the set of all $wk*$-accumulation points of ground states in finite intervals.
\begin{lem}
Let $k\in \nan$
and $\bb\in B_{n,k}$.
Then for $m\ge m^{k,\bb}$,
\begin{align*}
\caS_{[0,\infty)}(H^{k,\bb}_{m})&=
\{
\omega\;:
\;
\omega \text{ is a state on }{\mathcal A}_{[0,\infty)}\text{ such that }\omega\circ\tau_a(\Phi^{k,\bb}_{m})=0,\text{ for all }0\le a\in\zin
\}\\
\caS_{(-\infty,-1]}(H^{k,\bb}_{m})&=
\{
\omega\;:
\;
\omega \text{ is a state on }{\mathcal A}_{(-\infty,-1]}\text{ such that }\omega\circ\tau_{-b}(\Phi^{k,\bb}_{m})=0,
\text{ for all } m\le b\in\zin
\},\\
\caS_{\bbZ}(H^{k,\bb}_{m})&=
\{
\omega\;:
\;
\omega \text{ is a state on }{\mathcal A}_{\bbZ}\text{ such that }\omega\circ\tau_a(\Phi^{k,\bb}_{m})=0,\text{ for all }a\in\zin
\},
\end{align*}
\end{lem}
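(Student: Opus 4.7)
The key observation is that $\Phi^{k,\bb}_m$ is frustration-free: the interaction is a sum of translates of the positive projection $(1-G^{k,\bb}_m)$, and by Proposition~\ref{Prop:Gapped pq} (which uses Property~(I,$m$)) one has $\ker(H^{k,\bb}_m)_\Lambda = \caG^{k,\bb}_{|\Lambda|}$ for every finite interval $\Lambda$ with $|\Lambda|\ge m$. Consequently, a state $\omega$ on $\caA_\Lambda$ is a ground state of $(H^{k,\bb}_m)_\Lambda$ if and only if it annihilates each individual summand $\tau_a(1-G^{k,\bb}_m)$ with $[a,a+m-1]\subset \Lambda$. I would handle the three equalities simultaneously: each $\Gamma\in\{[0,\infty),(-\infty,-1],\bbZ\}$ determines the set of admissible indices (those $a$ with $[a,a+m-1]\subset\Gamma$), which matches the conditions $a\ge 0$, $b\ge m$, and $a\in\bbZ$ stated in the lemma.

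For the inclusion of the annihilator side into $\caS_\Gamma(H^{k,\bb}_m)$: given a state $\omega$ on $\caA_\Gamma$ vanishing on every admissible $\tau_a(1-G^{k,\bb}_m)$, and any finite interval $\Lambda\subset\Gamma$ with $|\Lambda|\ge m$, the restriction $\omega|_{\caA_\Lambda}$ annihilates each summand of $(H^{k,\bb}_m)_\Lambda$, hence is supported on $\caG^{k,\bb}_{|\Lambda|}$, i.e., it belongs to $\caS_\Lambda(H^{k,\bb}_m)$. Since $\omega$ itself extends $\omega|_{\caA_\Lambda}$ to $\caA_\Gamma$, the constant net $\{\omega\}_\Lambda$ weak-$*$ converges to $\omega$ along $\Lambda\to\Gamma$, placing $\omega$ in $\caS_\Gamma(H^{k,\bb}_m)$.

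For the reverse inclusion: fix $\omega\in\caS_\Gamma(H^{k,\bb}_m)$ and a net of extensions $\{\omega_{\Lambda_i}\}$ on $\caA_\Gamma$ of ground states on $\Lambda_i\to\Gamma$ with $\omega_{\Lambda_i}\to\omega$ weak-$*$. For any admissible index $a$, eventually $[a,a+m-1]\subset\Lambda_i$, and then $0\le \tau_a(1-G^{k,\bb}_m)\le (H^{k,\bb}_m)_{\Lambda_i}$. Since $\omega_{\Lambda_i}|_{\caA_{\Lambda_i}}$ is a ground state, we have $\omega_{\Lambda_i}((H^{k,\bb}_m)_{\Lambda_i})=0$, and positivity of each summand forces $\omega_{\Lambda_i}(\tau_a(1-G^{k,\bb}_m))=0$. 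Passing to the weak-$*$ limit yields $\omega(\tau_a(1-G^{k,\bb}_m))=0$.

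No step here is a genuine obstacle: once the frustration-free structure guaranteed by Property~(I,$m$) is in hand, the equivalence between being annihilated by all admissible interaction terms and being a thermodynamic accumulation point of finite-volume ground states is immediate. The only care needed is bookkeeping of which translation indices keep the support of $\tau_a(1-G^{k,\bb}_m)$ inside $\Gamma$, which produces exactly the three index ranges written in the statement.
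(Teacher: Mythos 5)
Your proof is correct and follows essentially the same route as the paper: frustration-freeness (via Proposition~\ref{Prop:Gapped pq} and positivity of the interaction terms) identifies finite-volume ground states with states annihilating each $\tau_a(1-G^{k,\bb}_m)$, the constant net $\{\omega\}$ handles the inclusion of the annihilator set into $\caS_\Gamma(H^{k,\bb}_m)$, and passing weak-$*$ limits of $\omega_{\Lambda'}(\tau_a(\Phi^{k,\bb}_m))=0$ handles the reverse.
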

\proof
If $\omega$ is a state on ${\mathcal A}_{[0,\infty)}$ such that $\omega\circ\tau_{a}(\Phi^{k,\bb}_{m})=0$,
for all $0\le a\in\zin$.
Then its restriction $\omega\vert_{\caA_{\Lambda}}$ to each interval $\Lambda\subset [0,\infty)$
is a ground state of $(H^{k,\bb}_{m})_{\Lambda}$.
Hence $\omega$ is a $wk*$-accumulation point of extensions of 
$\omega\vert_{\caA_{\Lambda}}\in {\caS}_{\Lambda}((H^{k,\bb}_{m})_{\Lambda})$, hence
$\omega\in {\caS}_{[0,\infty)}(H^{k,\bb}_{m})$, by definition.
On the other hand, if $\omega\in {\caS}_{[0,\infty)}(H^{k,\bb}_{m})$, then
there exits a subnet $\{\Lambda'\}$ of intervals in $[0,\infty)$ associated with states $\omega_{\Lambda'}$
on ${\mathcal A}_{[0,\infty)}$ 
such that $\omega_{\Lambda'}\vert_{\caA_{\Lambda'}}\in \caS_{\Lambda'}((H^{k,\bb}_{m})_{\Lambda'})$,
and $\omega={\rm wk}*-\lim_{\Lambda'} \omega_{\Lambda'}$.
Hence we have $\omega\circ\tau_a(\Phi^{k,\bb}_{m})=\lim_{\Lambda'} \omega_{\Lambda'}(\tau_a(\Phi^{k,\bb}_{m}))=0$,
for all $0\le a\in\bbZ$.
\endproof
\begin{prop}\label{Prop:edge spaces}
Let $k\in \nan$ 
and $\bb\in B_{n,k}$. Then for $m\ge m^{k,\bb}$,
\begin{equation*}
\caS_{[0,\infty)}(H^{k,\bb}_{m})= \caE_{R}^{k,\bb},\qquad \caS_{(-\infty,-1]}(H^{k,\bb}_{m})= \caE^{k,\bb}_{L},\qquad
\caS_{\bbZ}(H^{k,\bb}_{m})=\{\omega_{\infty}^\bb\},
\end{equation*}
where $\caE_{R}^{k,\bb},\caE_{L}^{k,\bb}$ were defined in~(\ref{ER},\ref{EL}).
 In particular, 
the maps
\begin{equation*}
\Xi_R : {\mathfrak E}_{k}\to \caE_{R}^{k,\bb}=\caS_{[0,\infty)}(H^{k,\bb}_{m}),\qquad
\Xi_L : {\mathfrak E}_{k}\to \caE_{L}^{k,\bb}=\caS_{(-\infty,-1]}(H^{k,\bb}_{m})
\end{equation*}
are affine bijections.
\end{prop}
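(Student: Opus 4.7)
The plan is to use the preceding lemma as the key reduction: an element of $\caS_{[0,\infty)}(H^{k,\bb}_m)$ is exactly a state $\omega$ on $\caA_{[0,\infty)}$ such that $\omega\circ\tau_a(\Phi^{k,\bb}_m)=0$ for every $a\ge 0$, and similarly for the other two geometries. This reduces the proposition to the three set equalities, after which the affine bijection statement is immediate from Lemma~\ref{aim}. I would treat the forward and backward inclusions separately.

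For the forward inclusion $\caE_R^{k,\bb}\subseteq \caS_{[0,\infty)}(H^{k,\bb}_m)$, given $\omega\in\mathfrak{E}_k$ I need to show that $\omega_R^\bb(\tau_a(1-G^{k,\bb}_m))=0$ for every $a\ge 0$, and by the definition of $\omega_R^\bb$ this amounts to proving $R^\bb(\tau_a(1-G^{k,\bb}_m))=0$ in $\mk$. For any $C\in\mk$, Lemma~\ref{rba} gives
\[
\varphi^\bb\bigl(C^*R^\bb(\tau_a(1-G^{k,\bb}_m))C\bigr)=\lim_{N\to\infty}\braket{\Gamma_N^{k,\bb}(C)}{\tau_a(1-G^{k,\bb}_m)\Gamma_N^{k,\bb}(C)}.
\]
For $N\ge a+m$ the intersection property (Lemma~\ref{lem:intersection}(b)) forces $\Gamma_N^{k,\bb}(C)\in\caG_N^{k,\bb}\subseteq(\bbC^n)^{\otimes a}\otimes\caG_m^{k,\bb}\otimes(\bbC^n)^{\otimes N-a-m}$, so $\tau_a(1-G^{k,\bb}_m)\Gamma_N^{k,\bb}(C)=0$ and the limit vanishes. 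Since $R^\bb$ is completely positive, $R^\bb(\tau_a(1-G^{k,\bb}_m))\ge 0$, and since $\varphi^\bb$ is faithful on $\mk$, the vanishing of $\varphi^\bb(C^*(\cdot)C)$ for all $C$ forces the argument to be zero. The argument for $\caE_L^{k,\bb}$ is symmetric, using $L^\bb$ instead of $R^\bb$.

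For the backward inclusion $\caS_{[0,\infty)}(H^{k,\bb}_m)\subseteq \caE_R^{k,\bb}$ (the technical core), take $\omega\in\caS_{[0,\infty)}(H^{k,\bb}_m)$. For $N\ge\max\{s^{k,\bb},m\}$ the density matrix $\rho_N$ of $\omega|_{\caA_{[0,N-1]}}$ is supported in $\caG_N^{k,\bb}=\ran\Gamma_N^{k,\bb}$, and $\Gamma_N^{k,\bb}$ is an injection (Lemma~\ref{lem:intersection}(b)), so fixing a basis $\{C_\alpha\}_{\alpha=1}^{k^2}$ of $\mk$ I can expand
\[
\rho_N=\sum_{\alpha,\beta}M_{\alpha\beta}^{(N)}\ket{\Gamma_N^{k,\bb}(C_\alpha)}\bra{\Gamma_N^{k,\bb}(C_\beta)}.
\]
By Lemma~\ref{lemma:injectivity} the Gram matrix $\braket{\Gamma_N^{k,\bb}(C_\alpha)}{\Gamma_N^{k,\bb}(C_\beta)}\to\braket{C_\alpha}{C_\beta}_\bb$ is invertible and bounded, which forces a uniform bound on $\|M^{(N)}\|$ from $\Tr\rho_N=1$. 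Compactness then yields a subsequence $M^{(N_j)}\to M^{(\infty)}$. Applying Lemma~\ref{rba} term by term, for any $A\in\caA_{[0,a-1]}^{\rm loc}$ and $a\le N_j$,
\[
\omega(A)=\sum_{\alpha,\beta}M_{\alpha\beta}^{(N_j)}\braket{\Gamma_{N_j}^{k,\bb}(C_\beta)}{A\Gamma_{N_j}^{k,\bb}(C_\alpha)}\longrightarrow \sum_{\alpha,\beta}M_{\alpha\beta}^{(\infty)}\varphi^\bb\bigl(C_\beta^*R^\bb(A)C_\alpha\bigr).
\]
Defining $\mu(B):=\sum_{\alpha,\beta}M_{\alpha\beta}^{(\infty)}\varphi^\bb(C_\beta^* e_\bb^{1/2}Be_\bb^{1/2}C_\alpha)$, the right-hand side is precisely $\mu_R^\bb(A)=\mu(e_\bb^{-1/2}R^\bb(A)e_\bb^{-1/2})$, with $\mu(1)=\omega(1)=1$. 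Positivity of $\mu$ as a state on $\mk$ is the main obstacle: it must be deduced from positivity of $\rho_N$ by pushing through the limit. The clean way is to note that $\rho_N\ge 0$ is equivalent (using the converging Gram matrix) to positivity of the bilinear form $(B,C)\mapsto\sum_{\alpha,\beta}M_{\alpha\beta}^{(N)}\braket{C_\alpha}{B}_\bb\overline{\braket{C_\beta}{C}_\bb}$ on $\mk$, which passes to the limit and, after identifying $\mk$ with $\mk$ via $X\mapsto e_\bb^{1/2}X$, gives positivity of $\mu$.

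Finally, for the bulk statement $\caS_\bbZ(H^{k,\bb}_m)=\{\omega_\infty^\bb\}$, the forward inclusion again follows by checking annihilation of every translate of $\Phi^{k,\bb}_m$ exactly as in the first step. For uniqueness, I would run the same density-matrix decomposition on intervals $[-M,N-1]$ and invoke the double-limit formula of the last lemma before the proposition,
\[
\lim_{N,M\to\infty}\frac{\braket{\Gamma_{N+M}^{k,\bb}(C)}{\tau_N(A)\Gamma_{N+M}^{k,\bb}(C)}}{\varphi^\bb(C^* e_\bb C)}=\omega_\infty^\bb(A),
\]
whose crucial feature is independence of $C$: every vector-state summand in the decomposition of the finite-volume marginal converges to the same $\omega_\infty^\bb(A)$, and the convex combination gives $\omega(A)=\omega_\infty^\bb(A)$. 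Combining the three equalities with Lemma~\ref{aim} yields the stated affine bijections $\Xi_R,\Xi_L:\mathfrak{E}_k\to\caS_{[0,\infty)}(H^{k,\bb}_m),\caS_{(-\infty,-1]}(H^{k,\bb}_m)$.
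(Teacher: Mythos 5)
Your proposal is correct and follows the same overall strategy as the paper: reduce via the preceding lemma to annihilation of all translates of $\Phi^{k,\bb}_m$, prove the forward inclusion from the limit formula of Lemma~\ref{rba} plus the intersection property, and prove the reverse inclusion by decomposing the finite-volume density matrices, extracting a convergent subsequence of $\mk$-data, and passing to the limit. Two of your choices differ slightly from the paper, and it is worth seeing why the paper chose differently. For the forward inclusion you prove the stronger intermediate statement $R^\bb(\tau_a(1-G^{k,\bb}_m))=0$, which kills the interaction terms for every $\omega^\bb_R$ simultaneously; the paper instead decomposes $\omega$ into vector states $\sigma_{C_i}$ and checks each one, which is arguably longer but produces a concrete decomposition that is reused verbatim in the backward inclusion. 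For the backward inclusion you fix a basis $\{C_\alpha\}$ of $\mk$ and track a coefficient matrix $M^{(N)}$, so positivity of the limiting functional $\mu$ becomes an explicit obstacle you must argue through the converging Gram matrix; the paper instead writes $D_N=\sum_i\ket{\Gamma_N^{k,\bb}(X_{i,N})}\bra{\Gamma_N^{k,\bb}(X_{i,N})}$ as a sum of rank-one positives, so that after taking the subsequential limit the candidate $\tilde\omega=\sum_i\braket{X_{i,\infty}}{X_{i,\infty}}_\bb\,\sigma_{Y_i}$ is a manifestly convex combination of states $\sigma_{Y_i}\in\caE_R^{k,\bb}$ and no separate positivity argument is needed. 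Your Gram-matrix route does work (once you note that positivity of $M^{(N)}$, boundedness of $\lambda_{\min}$ of the Gram matrix, and the constraint $\Tr(M^{(N)}G^{(N)})=1$ give the uniform bound, and then spectrally decompose $M^{(\infty)}\ge0$ to see $\mu\ge0$), but it is the longer path; you should also be careful that the equivalence ``$\rho_N\ge0\Leftrightarrow$ positivity of the $\braket{\,}{\,}_\bb$-form'' holds only after the limit, not at finite $N$ where the Gram matrix is that of $\Gamma_N$. Your sketch of the bulk case via the $C$-independence of the double limit is sound and in fact spells out a step the paper leaves implicit.
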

\begin{proof}
First we show $\caE_{R}^{k,\bb}\subset \caS_{[0,\infty)}(H^{k,\bb}_{m})$.
Let $\omega$ be a state on $\mk$ given by a density matrix
\[
\sum_{i}\ket{\xi_i}\bra{\xi_i},\quad
\xi_i\in p{\cc}^k\setminus\{0\}.
\]
Fix a nonzero $\eta\in {\cc}^k$ and set $C_i=\ket{\xi_i}\bra{\eta}\in \mk$. A direct calculation shows that
\begin{equation*}
\sum_{i}\lV \xi_i\rV^2
\sigma_{C_i} = \omega^\bb_R,
\end{equation*}
where $\sigma_{B}(A) = \lmk \varphi^{\bb}(B^*B)\rmk^{-1}
\varphi^\bb\lmk B^*e_\bb^{-1/2}R^\bb(A)e_\bb^{-1/2} B\rmk
$ defines a state on $A_{[0,\infty)}$ which belongs to
$\caE_{R}^{k,\bb}$, for any nonzero $B\in \mk$. Therefore, it suffices to show that
$\sigma_B(\tau_{a}(\Phi^{k,\bb}_{m}))=0$ for all
$0\le a$ and nonzero $B\in \mk$. But this follows from
\begin{equation*}
\sigma_B(\tau_{a}(\Phi^{k,\bb}_{m}))
=\lim_{N\to\infty}\frac{\braket{\Gamma_{N}^{k,\bb}\left(e_\bb^{-1/2} B\right)}
{\tau_{a}(\Phi^{k,\bb}_{m})\Gamma_{N}^{k,\bb} \left(e_\bb^{-1/2}B\right)}}
{\varphi^\bb(B^*B)}
\end{equation*}
and the fact that by definition of $\Phi^{k,\bb}_{m}$
and the intersection property the numerator on the right hand side
is uniformly equal to $0$ for $N\ge m+a-1$.

Next we show $\caS_{[0,\infty)}(H^{k,\bb}_{m})\subset \caE_{R}^{k,\bb}$.
Let $\omega\in \caS_{[0,\infty)}(H^{k,\bb}_{m})$.
For each $N\ge m^{k,\bb}$, let $D_N$ be the density matrix of  
the restriction of $\omega$ to ${\mathcal A}_{[0,N-1]}$, namely 
$\omega(A) = \Tr_{[0,N-1]}(D_N A)$ for any $A\in {\mathcal A}_{[0,N-1]}$.
By the condition $\omega(\tau_a(\Phi^{k,\bb}_{m}))=0$, $0\le a\le N-m$,
and the intersection property,
we have that $\mathrm{Ran}(D_N)\subset \caG_{N}^{k,\bb}$.
Therefore, there exist $X_{i,N}\in\mk$,
$i=1,\ldots, k^2$ such that
\begin{equation}\label{DN}
D_N=\sum_i\ket{\Gamma_{N}^{k,\bb}(X_{i,N})}\bra{\Gamma_{N}^{k,\bb}(X_{i,N})}.
\end{equation}
Note that, using first Lemma \ref{lemma:injectivity} and then~(\ref{bieq}),
\begin{equation*}
1 = \sum_i \Vert \Gamma_{N}^{k,\bb}(X_{i,N}) \Vert^2 \geq \sum_i \braket{X_{i,N}}{X_{i,N}}_\bb(1-E^{k,\bb}(N))\geq \frac 12 (a^{\bb} c^\bb)^{-1}
\sum_i \Tr_{\mk} \lmk X_{i,N}^*X_{i,N}\rmk,
\end{equation*}
for $N\geq L^{k,\bbB}$. Hence, by compactness, there is a subsequence $\{N_m\}_m$ such that
\begin{equation*}
\lim_{m\to\infty}X_{i,N_m} = X_{i,\infty}
\end{equation*}
for all $i$. 
From Lemma \ref{lemma:injectivity}, we have
\begin{align}\label{xxi}
\limsup_{m\to\infty}\sum_i \Vert \Gamma_{N_m}^{k,\bb}(X_{i,N_m}-X_{i,\infty}) \Vert^2 
\le\limsup_{m\to\infty}\sum_i
(1+E^{k,\bb}(N_m))\braket{X_{i,N_m}-X_{i,\infty}}{X_{i,N_m}-X_{i,\infty}}_\bb=0.
\end{align}
From this we have
\begin{align}\label{xnom}
1 = \lim_{m\to\infty}\sum_i \Vert \Gamma_{N_m}^{k,\bb}(X_{i,N_m}) \Vert^2 
=\lim_{m\to\infty}\sum_i \Vert \Gamma_{N_m}^{k,\bb}(X_{i,\infty}) \Vert^2 =
\sum_i\braket{X_{i,\infty}}{X_{i,\infty}}_\bb.
\end{align}
Therefore, there exists a nonzero $X_{i,\infty}$. 

Now, set $Y_i:=e_\bb^{1/2}X_{i,\infty}$. Note that $Y_i\neq 0$, if $X_{i,\infty}\neq 0$. The form $\tilde\omega$ on ${\mathcal A}_{[0,\infty)}$ defined by
\begin{equation*}
\tilde\omega:= \sum_{i:X_{i,\infty}\neq 0}\braket{X_{i,\infty}}{X_{i,\infty}}_\bb\sigma_{Y_{i}}
\end{equation*}
is a state by (\ref{xnom}).
Furthermore, $\tilde\omega\in \caE_{R}^{k,\bb}$, as each
$\sigma_{Y_{i}}$ is, and we have
\begin{align*}
\tilde\omega(A)=
\sum_{i:X_{i,\infty}\neq 0}
\varphi^\bb\lmk Y_i^*e_\bb^{-1/2}R^\bb(A)e_\bb^{-1/2} Y_i\rmk
=\sum_{i:X_{i,\infty}\neq 0}
\varphi^\bb\lmk X_{i,\infty }^* R^\bb(A) X_{i,\infty}\rmk,\qquad
A\in {\caA}_{[0,\infty)}.
\end{align*}
We claim $\omega=\tilde\omega$:
For $A\in{\caA}^{\rm loc}$, from (\ref{DN}),  (\ref{xxi}) and Lemma \ref{rba},
\begin{align*}
\lv \omega(A)-\tilde\omega(A)\rv
& =\lim_{m\to\infty}\lv
\sum_i\lmk \braket{\Gamma_{N_m}^{k,\bb}(X_{i,N_m})}{A \Gamma_{N_m}^{k,\bb}
(X_{i,N_m})}
-\varphi^\bb\lmk X_{i,\infty}^*R^\bb(A)X_{i,\infty}\rmk\rmk
\rv\\
&=\lim_{m\to\infty}\lv
\sum_i\lmk\braket{
\Gamma_{N_m}^{k,\bb}(X_{i,\infty})}
{A \Gamma_{N_m}^{k,\bb}
(X_{i,\infty})}
-\varphi^\bb\lmk X_{i,\infty}^*R^\bb(A)X_{i,\infty}\rmk\rmk
\rv
=0,
\end{align*}
 and in fact $\tilde\omega=\omega$. Hence, $\omega\in \caE_{R}^{k,\bb}$. The case $\caS_{(-\infty,-1]}(H^{k,\bb}_{m}) = \caE^{k,\bb}_{L}$ is treated similarly.
\end{proof}

\begin{proofof}[Proposition~\ref{thm:phases}]
The intersection property, part (i), follows from Lemma~\ref{lem:intersection}(b). Part(iia) is a consequence of Proposition~\ref{Prop:Gapped pq}. (iib) and (iic) were the contents of Proposition~\ref{Prop:edge spaces}.
\end{proofof}

\section{A continuous path of Hamiltonians}\label{sec:PathHamiltonians}

We shall write $H\simeq_C H'$ if the translation invariant Hamiltonians $H$ and $H'$ are $C^1$-equivalent.
For $m\in\nan$ and translation invariant Hamiltonians
$H_m,H_m'$ with interaction length less than or equal to $m$,
we further write $H_m\simeq_{C,m}H_m'$
if $H_m$ and $H_m'$ are $C^1$-equivalent and the $C^1$-path 
can be taken as a path in $\caJ_m$.

In this section, we prove the main result of this paper, Theorem~\ref{mainclass}. For now, we shall use the following technical result, which will be proved in Section~\ref{sec:PathMatrices}, Proposition~\ref{skc}: If $\bb,\bb'\in B_{n,k}$ and for any $m\ge 2k(k-1)+3$, there exists a continuous map $\bar{\bbA}:[0,1]\to \Mat_{1,n}(\Mat_{k}(\cc))$, piecewise of class $C^1$, such that $\bar{\bbA}(0)=\bb$, $\bar{\bbA}(1)=\bb'$ and $\bar{\bbA}(t)\in X_{n,k,m}$ (see~(\ref{Xnkm}) for the definition of this set) with $\bar A_1(t)$ invertible for $t\in(0,1)$.

With this, the proof of $C^1$-equivalence relies on two results. Firstly, that given $\bbB\in B_{n,k}$ and two interactions of ranges $m,m'$ such that $\caG_m^{k,\bbB}$ and $\caG_{m'}^{k,\bbB}$ satisfy the intersection property, then the two corresponding Hamiltonians are equivalent: We refer to this as equivalence under changing of the interaction length, Lemma~\ref{ci}. Secondly, if $\bbB,\bbB'\in B_{n,k}$ and $H_m^{k,\bbB}, H_m^{k,\bbB'}$ are the corresponding Hamiltonians (with the same $m$ and $k$), then the smooth path of matrices $\bar{\bbA}(t)$ mentioned above yields the $C^1$-equivalence of the Hamiltonians: This is the equivalence under smooth deformations of $\bbB$, Lemma~\ref{lem:B(t)}.

We start with the equivalence under deformations of $\bbB$.
\begin{lem}\label{lem:B(t)}
Let $k\in\nan$ and 
$\bb,\bb'\in B_{n,k}$.
Then for any $m\ge k^4+1$, 
\[
H^{k,\bb}_{m}\simeq_{C,m} H^{k,\bb'}_{m}.
\]
\end{lem}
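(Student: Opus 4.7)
The plan is to lift the path of matrices supplied by Proposition~\ref{skc} to a continuous, piecewise $C^1$ path of interactions in $\caJ_m$ and then verify uniform gappedness along it. First, apply Proposition~\ref{skc} with $m-1$ in place of $m$: since $m-1\ge k^4\ge 2k(k-1)+3$ for $k\ge 2$, there is a continuous, piecewise $C^1$ map $\bar\bbA:[0,1]\to\Mat_{1,n}(\mk)$ with $\bar\bbA(0)=\bb$, $\bar\bbA(1)=\bb'$ and $\bar\bbA(t)\in X_{n,k,m-1}$ for $t\in(0,1)$; the endpoints also lie in $X_{n,k,m-1}$ because $s^{k,\bb},s^{k,\bb'}\le k^4$ by Lemma~\ref{lem:intersection}(c,a). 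By Lemma~\ref{lem:intersection}((iii)$\Rightarrow$(i)), $r_{\bar\bbA(t)}>0$ and $r_{\bar\bbA(t)}^{-1/2}\bar\bbA(t)\in B_{n,k}$, so I set
\[
\bb(t):=r_{\bar\bbA(t)}^{-1/2}\bar\bbA(t),\qquad t\in[0,1].
\]
Because $r_{\bar\bbA(t)}$ is the simple, isolated spectral radius of the continuous/piecewise-$C^1$ family $\widehat\bbE^{\bar\bbA(t)}$, it inherits the same regularity in $t$, and hence so does $\bb(t)$, with $\bb(0)=\bb$, $\bb(1)=\bb'$ (since $r_\bb=r_{\bb'}=1$). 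Moreover $(k,\bb(t))$ satisfies Property~(I,$m$) for all $t$ by Lemma~\ref{lem:intersection}(b).

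Next, promote this to a path of interactions by setting $\Phi(t):=\Phi^{k,\bb(t)}_m$. The matrix entries of $\Gamma^{k,\bb(t)}_m$ are polynomials in those of $\bb(t)$, so $t\mapsto\Gamma^{k,\bb(t)}_m$ is continuous and piecewise $C^1$ in norm. Lemma~\ref{lem:intersection}(b) guarantees injectivity for every $t$, and combined with compactness of $[0,1]$ this implies that the Gram matrix $(\Gamma^{k,\bb(t)}_m)^*\Gamma^{k,\bb(t)}_m$ is uniformly invertible. The explicit formula
\[
G^{k,\bb(t)}_m=\Gamma^{k,\bb(t)}_m\bigl((\Gamma^{k,\bb(t)}_m)^*\Gamma^{k,\bb(t)}_m\bigr)^{-1}(\Gamma^{k,\bb(t)}_m)^*
\]
for the orthogonal projection onto the range then shows that $G^{k,\bb(t)}_m$, and therefore $\Phi(t)$, is continuous and piecewise $C^1$ into $\caJ_m$.

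The main obstacle is the uniform gap condition in Definition~\ref{def:phasec}(ii). All spectral data of $\widehat\bbE^{\bb(t)}$ entering the estimates of Proposition~\ref{Prop:Gapped pq}---namely $a^{\bb(t)}$, $c^{\bb(t)}$ and the modulus of the subleading eigenvalue of $\widehat\bbE^{\bb(t)}$---depend continuously on $t\in[0,1]$, with the second-largest eigenvalue staying strictly below $1$ throughout by primitivity ($\bb(t)\in B_{n,k}$). Compactness of $[0,1]$ then yields uniform bounds on $E^{k,\bb(t)}(N)$, $F^{k,\bb(t)}$, $L^{k,\bb(t)}$ and $\bar l^{k,\bb(t)}$, so we may fix a single integer $l>\sup_{t\in[0,1]}\bar l^{k,\bb(t)}$ with $l>m$ valid along the whole path. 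For this $l$, the finite-volume gap $\gamma^{k,\bb(t)}_{l,m}$ is a continuous, strictly positive function on the compact interval $[0,1]$, hence uniformly bounded below by some $\gamma_0>0$. Proposition~\ref{Prop:Gapped pq} then delivers a uniform infinite-volume lower bound $\gamma_0/(4(l+2))>0$ on the gap above the zero ground-state energy of $(H^{k,\bb(t)}_m)_{[0,N-1]}$, valid for every $N\ge N_0$ with $N_0$ independent of $t$. Since the ground-state eigenvalue is identically $0$, Definition~\ref{def:phasec}(ii) holds with $I(t)=\{0\}$, establishing $H^{k,\bb}_m\simeq_{C,m}H^{k,\bb'}_m$.
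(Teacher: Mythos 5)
Your proof follows essentially the same strategy as the paper: normalize the path from Proposition~\ref{skc} by the spectral radius to land in $B_{n,k}$, use Lemma~\ref{lem:intersection} to guarantee injectivity and Property~(I,$m$), show the ground-state projections vary continuously and piecewise $C^1$, and invoke Proposition~\ref{Prop:Gapped pq} together with continuity/compactness to get a uniform gap. The only cosmetic differences are that you apply Proposition~\ref{skc} with $m-1$ instead of $m$ (both choices are admissible since $m\ge k^4+1$) and that you write the projection onto the range via an explicit pseudo-inverse rather than citing Lemma~\ref{pob}; your assertion that $\gamma^{k,\bb(t)}_{l,m}$ is continuous and uniformly bounded below is exactly what the paper establishes carefully via Lemma~\ref{poa}.
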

\proof 
Let $\bar{\bbA}(t)$ be the path given by Proposition~\ref{skc}. By Lemma \ref{lem:intersection}, $r_{\bar\bbA(t)}>0$ for $t\in(0,1)$ and
$(r_{\bar\bbA(t)})^{-\frac 12}\bar\bbA(t)\in B_{n,k}$.
That $r_{\bar\bbA(t)}>0$ and $\widehat \bbE^{\bar \bbA(t)}\in \caT_k$ for $t=0,1$ follows by definition.

Set $\bbA(t):=r_{\bar\bbA(t)}^{-\frac 12}\bar\bbA(t)\in B_{n,k}$, for $t\in[0,1]$. Applying Lemma \ref{lem: continuity and decay} to a continuous piecewise $C^1$-path
$[0,1]\ni t\mapsto \widehat \bbE^{\bar \bbA(t)}\in \caT_k$, we see that 
$[0,1]\ni t\mapsto r_{\bar\bbA(t)}$ is continuous and piecewise $C^1$.
Therefore, the path $[0,1]\ni t\mapsto \widehat \bbE^{ \bbA(t)}\in \caT_k$
is continuous and piecewise $C^1$.

Let $m_0:=k^4+1$. For any $m\geq m_0$, Lemma~\ref{lem:intersection} ensure that $\Gamma_m^{k,\bbA(t)}$ is injective and that $(k,\bbB)$ satisfies Property~(I,$m$). Recall the definitions at the beginning of Section~\ref{hamgap}. We claim
\begin{enumerate}
\item $l_0:=\sup_{t\in[0,1]}\bar l^{k,{\bbA}(t)}<\infty$,
\item for any $m_0\le m$, the map
\[
[0,1]\ni t\mapsto G_{m}^{k,\bbA(t)}
\]
is continuous and piecewise $C^1$,
\item for all $l,m\in\nan$ with $m_0\le m\le l$, 
\[
\gamma:=\inf_{t\in[0,1]}\gamma_{l,m}^{k,{\bbA}(t)}>0.
\]
\end{enumerate}
(i)
By Lemma \ref{lem: Inverses}, $a:=\sup_{t\in [0,1]}a^{\bbA(t)}$ and $c:=\sup_{t\in [0,1]}c^{\bbA(t)}$
are finite.
Furthermore, by Lemma \ref{lem: continuity and decay},
there exist $0<\lambda<1$ and $C>0$ such that
\begin{equation*}
\sup_{t\in[0,1]}\lV \lmk \widehat\bbE^{\bbA(t)}\rmk^l\lmk 1-P_{\{1\}}^{\widehat\bbE^{\bbA(t)}}\rmk\rV \le C \lambda^l,\qquad
l\in\nan.
\end{equation*}
Also, by Lemma \ref{lem: continuity and decay}, $[0,1]\ni t\mapsto e_{\bbA(t)}$, 
is continuous and we have $b:=\sup_{t\in[0,1]}\Tr_{\mk}(e_{\bbA(t)})<\infty$.
From these estimates and the definition of $E^{k,\bbA(t)}(N)$, $F^{k,\bbA(t)}$, we obtain the uniform bound
\begin{align*}
&\sup_{t\in[0,1]}\sup_{N\ge l}\lmk 
\sqrt{N+1}\lmk
3E^{k,\bbA(t)}(N)F^{k,\bbA(t)}+2
\rmk
\lmk
E^{k,\bbA(t)}(N)F^{k,\bbA(t)}
\rmk
+ E^{k,\bbA(t)}(N)
\rmk\\
&\le
4kC\lmk kacC+c+ab\rmk
\lmk 12kC\lmk kacC+c+ab\rmk+2\rmk
\sup_{N\ge l}\sqrt{N+1}\lambda^N
+kac C \sup_{N\ge l}\lambda^N,
\end{align*}
for all $l\in\nan$.
As $0<\lambda<1$, the right hand side converges to $0$ as $l\to\infty$.
In particular, we have $l_0:=\sup_{t\in[0,1]}\bar l^{k,{\bbA}(t)}<\infty$.
\\
(ii) 
Let $\{e_{i,j}\}_{i,j=1,\ldots,k}$ be the set of matrix units of $\mk$.
Then, for each $m\ge m_0$ and $t\in[0,1]$, 
$ G_{m}^{k,\bbA(t)}$ is the orthogonal projection onto a subspace of 
$\otimes_{i=0}^{m-1} \cc^n$ spanned by the vectors
$\{\Gamma^{k,\bbA(t)}_{m}\big(e_{ij}\big),i,j=1,\ldots,k\}$.
Injectivity of $\Gamma^{k,\bbA(t)}_{m}$ for $m\ge m_0$ means that the dimension of 
$ G_{m}^{k,\bbA(t)}$ is constant and equal to $k^2$, for $t\in[0,1]$.
Hence, from Lemma \ref{pob}, (ii) holds.
\\
(iii) For all $l,m\in\nan$ with $m_0\le m\le l$, 
we have $l\ge m\ge m_0\ge m^{k,\bbA(t)}$ for $t\in [0,1]$.
Therefore, by Proposition \ref{Prop:Gapped pq},
the lowest eigenvalue of 
$X(t)=
\lmk H^{k,\bbA(t)}_{m}\rmk_{[0,l-1]}
$
is $0$ and the corresponding spectral projection is $G^{k,\bbA(t)}_{l}$.
Therefore, the path $X:[0,1]\mapsto X(t)$ is a continuous and piecewise $C^1$-path of constant rank positive matrices.
(iii) now follows from Lemma~\ref{poa} applied to this path.

Fix $m\ge m_0$ and $l>\max\{l_0,m\}$, where the $\max$ is finite by Claim~(i) above.
Applying Proposition \ref{Prop:Gapped pq}, 
for $N\in\nan$ with $N\ge m\ge m_0\ge  m^{k,\bbA(t)} $,
the lowest eigenvalue of 
$
(H^{k,\bbA(t)}_{m})_{[0,N-1]}
$
is $0$ and the corresponding spectral projection is $G^{k,\bbA(t)}_{N}$.
 Furthermore, 
 \begin{align*}
 \frac{\gamma}{4(l+2)}\lmk 1-G^{k,\bbA(t)}_{N}\rmk\le \frac{\gamma_{l,m}^{k,\bbA(t)}}{4(l+2)}\lmk 1-G^{k,\bbA(t)}_{N}\rmk
\le
\ (H^{k,\bbA(t)}_{m})_{[0,N-1]}
 \end{align*}
for all $N>l$, because
$\max\{\bar l^{k,\bbA(t)},m\}\le \max\{l_0,m\}< l$, where the first inequality is Claim~(iii) above. 
Hence, for $m\ge m_0$, the path of positive interactions 
$[0,1]\ni t\mapsto \Phi^{k,\bbA(t)}_{m}\in\caJ_m$ is continuous and piecewise $C^1$ by Claim~(ii),
 $\Phi^{k,\bbA(0)}_{m}=\Phi^{k,\bb}_{m}$, $\Phi^{k,\bbA(1)}_{m}=\Phi^{k,\bb'}_{m}$, and the Hamiltonian associated with $\Phi^{k,\bbA(t)}_{m}$ has a gap 
$\frac{\gamma_{l,m}^{k,\bbA(t)}}{4(l+2)}$ for each $t\in[0,1]$, with the uniform lower bound $\inf_{t\in[0,1]}\frac{\gamma_{l,m}^{k,\bbA(t)}}{4(l+2)}\ge  \frac{\gamma}{4(l+2)}>0$. This proves
$H^{k,\bb}_{m}\simeq_{C,m} H^{k,\bb'}_{m}$.
\endproof

We now turn to the problem of changing the interaction length.
\begin{lem}\label{ci}
Let $k\in\nan$
and $\bb\in B_{n,k}$.
For any $m,m'\ge m^{k,\bb}$,
\[
H^{k,\bb}_{m}\simeq_{C,\max\{m,m'\}} H^{k,\bb}_{m'}.
\]
\end{lem}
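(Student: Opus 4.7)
The plan is to linearly interpolate between the two interactions. Assuming without loss of generality that $m \le m'$, I would take $\Phi(t) := (1-t)\Phi^{k,\bb}_m + t\Phi^{k,\bb}_{m'}$, $t \in [0,1]$, which is an affine path of positive, translation invariant interactions of range at most $m' = \max\{m,m'\}$, hence in $\caJ_{m'}$, satisfying $\Phi(0) = \Phi^{k,\bb}_m$ and $\Phi(1) = \Phi^{k,\bb}_{m'}$. Letting $H(t)$ denote the associated Hamiltonian, one has $H(t)_{[0,N-1]} = (1-t)(H^{k,\bb}_m)_{[0,N-1]} + t(H^{k,\bb}_{m'})_{[0,N-1]}$. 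Since $m, m' \ge m^{k,\bb}$, Lemma~\ref{lem:intersection}(b) (the intersection property) yields that for $N \ge m'$ both summands are positive operators with kernel $\caG^{k,\bb}_N$; the convex combination therefore remains positive and has the same common kernel. In particular the smallest eigenvalue of $H(t)_{[0,N-1]}$ equals $0$ for every $t$, so I may take $I(t) \equiv \{0\}$ (i.e. $a(t) = b(t) = 0$) in Definition~\ref{def:phasec}(ii).

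The main step is to extract a uniform spectral gap above $0$. I would fix $l \in \nan$ with $l > \max\{\bar l^{k,\bb}, m'\}$ and apply Proposition~\ref{Prop:Gapped pq} at both endpoints: for every $N > l$,
\[
(H^{k,\bb}_m)_{[0,N-1]} \ge \frac{\gamma_{l,m}^{k,\bb}}{4(l+2)}\bigl(1 - G^{k,\bb}_N\bigr), \qquad (H^{k,\bb}_{m'})_{[0,N-1]} \ge \frac{\gamma_{l,m'}^{k,\bb}}{4(l+2)}\bigl(1 - G^{k,\bb}_N\bigr).
\]
Weighting these two operator inequalities by $(1-t)$ and $t$ and summing gives
\[
H(t)_{[0,N-1]} \ge \gamma\bigl(1 - G^{k,\bb}_N\bigr), \qquad \gamma := \frac{\min\{\gamma_{l,m}^{k,\bb},\, \gamma_{l,m'}^{k,\bb}\}}{4(l+2)} > 0,
\]
uniformly in $t \in [0,1]$. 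Setting $N_0 := l+1$ verifies Definition~\ref{def:phasec}(ii) and completes the proof that $H^{k,\bb}_m \simeq_{C,\max\{m,m'\}} H^{k,\bb}_{m'}$.

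I do not anticipate any substantial obstacle beyond a careful choice of constants: the hard analytic content (the martingale method bounding a bulk gap by a finite-volume gap) is already packaged in Proposition~\ref{Prop:Gapped pq}, and crucially that bound is expressed relative to the spectral projection onto the \emph{common} kernel $\caG^{k,\bb}_N$. This common structure is precisely what allows convex combinations of the two lower bounds to yield a uniform bound without any degradation. The only mild point of care is ensuring $l$ is large enough to lie simultaneously in the regime of validity of the proposition at both $m$ and $m'$.
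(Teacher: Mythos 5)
Your proposal is correct and coincides with the paper's own argument: both use the linear interpolation $\Phi(t)=(1-t)\Phi^{k,\bb}_m+t\Phi^{k,\bb}_{m'}$, identify $\ker H(t)_{[0,N-1]}=\caG^{k,\bb}_N$ as the common kernel of the two endpoint Hamiltonians, and obtain the uniform gap by adding the two lower bounds from Proposition~\ref{Prop:Gapped pq}. The only cosmetic difference is that you bound the convex combination $(1-t)\gamma_{l,m}^{k,\bb}+t\gamma_{l,m'}^{k,\bb}$ below by its minimum, while the paper keeps the $t$-dependent expression.
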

\proof
We may assume $m\neq m'$.
For each $t\in[0,1]$ set
\begin{align}
\Phi(t;X):=
\begin{cases}
(1-t)\tau_x\lmk 1-G^{k,\bb}_{m}\rmk,& \text{if }   X=[x,x+m-1]   \text{ for some }   x\in\bbZ\\
t\tau_x\lmk 1-G^{k,\bb}_{m'}\rmk,& \text{if }   X=[x,x+m'-1]   \text{ for some }   x\in\bbZ\\
0,&\text{otherwise}
\end{cases}\,,
\end{align}
This defines a $C^1$-path $\Phi:[0,1]\to \caJ$
such that $\Phi(0)=\Phi^{k,\bb}_{m}$ and $\Phi(1)=\Phi^{k,\bb}_{m'}$.
The interaction length of $\Phi(t)$ is less than or equal to $\max\{m,m'\}$.
Let $H(t)$ be the Hamiltonian associated with the interaction $\Phi(t)$.
For each $N\ge \max\{m,m'\}$ and $t\in(0,1)$,
the kernel of $\lmk H(t)\rmk_{[0,N-1]}$ is given by
\[
\ker \lmk H(t)\rmk_{[0,N-1]}=
\ker\lmk H^{k,\bb}_{m}\rmk_{[0,N-1]}\cap\ker \lmk H^{k,\bb}_{m'}\rmk_{[0,N-1]}={\mathcal G}^{k,\bb}_{N}=
 \ker \lmk H^{k,\bb}_{m}\rmk_{[0,N-1]}=\ker\lmk  H^{k,\bb}_{m'}\rmk_{[0,N-1]},
\]
by Proposition \ref{Prop:Gapped pq}.
Therefore, the Hamiltonian $H(t)$ has a spectral gap, namely,
for any $l> \max\{\bar l^{k,\bb},m,m'\}$ and $N\ge l+1$,
\[
\frac{1}{4(l+2)}\lmk(1-t)\gamma_{l,m}^{k,\bb}+
t\gamma_{l,m'}^{k,\bb}
\rmk(1-G^{k,\bb}_{N, p,q})\le
(1-t)(H^{k,\bb}_{m})_{[0,N-1]}+t(H^{k,\bb}_{m'})_{[0,N-1]}
=\lmk H(t)\rmk_{[0,N-1]},
\] for all $t\in[0,1]$.
Hence we have $H^{k,\bb}_{m}\simeq_{C,\max\{m,m'\}} H^{k,\bb}_{m'}$.
\endproof

\begin{proofof}[Theorem~\ref{mainclass}]
Let $k\neq k'$ and assume by contradiction that $H\simeq_C H'$. Then Theorem~\ref{qla} yields a bijective map between ground state spaces of $H$ and $H'$, in the bulk and at the edges. But Proposition~\ref{thm:phases}(c) implies that the edge spaces of $H$ and $H'$ are of different dimensions, a contradiction. 

Reciprocally, assume that $k = k'$. Let $H^{k,\bb}_{m},H^{k,\bb'}_{m'}\in {\mathcal H}_k$, where $m\geq m^{k,\bbB},m'\geq m^{k,\bbB'}$.
Let $M:=\max\{m,k^4+1\}$, $M':=\max\{m',k^{4}+1\}$,
and set $\tilde m=\max\{M,M'\} = \max\{m,m', k^4+1\}$. By Lemma \ref{ci}, we have
\begin{equation*}
H^{k,\bb}_{m} \simeq_{C,\tilde m} H^{k,\bb}_{\tilde m},\qquad H^{k,\bb'}_{m'} \simeq_{C,\tilde m} H^{k,\bb'}_{\tilde m}.
\end{equation*}
Furthermore, Lemma~\ref{lem:B(t)} yields the equivalence
\begin{equation*}
H^{k,\bb}_{\tilde m}\simeq_{C,\tilde m} H^{k,\bb'}_{\tilde m}
\end{equation*}
and the theorem follows by transitivity of $\simeq_{C,\tilde m}$.
\end{proofof}

\section{Piecewise $C^1$-paths of matrices}\label{sec:PathMatrices}
Recall the definitions of $\caK_m({\bb})$ and $X_{n,k,m}$ introduced in Section \ref{edge}
\begin{prop}\label{skc}
Let $n,k,m\in{\nan}$ such that $2k(k-1)+3\le m$.
Then for any $\bbA,\bbE\in \Mat_{1,n}(\Mat_{k}(\cc))$, there exists a continuous map 
$\bb:[0,1]\to \Mat_{1,n}(\Mat_{k}(\cc))$, piecewise of class $C^1$, such that
$\bb(0)=\aaa$, $\bb(1)=\ee$ and $\bb(t)\in X_{n,k,m}$ with invertible $B_1(t)$ for $t\in(0,1)$.
\end{prop}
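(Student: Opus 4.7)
The plan is to combine an explicit construction of a single good point with a genericity argument that routes a smooth path from $\aaa$ to $\ee$ through it. Let $U := X_{n,k,m} \cap \{\bb : B_1 \text{ is invertible}\}$ denote the target open set. First I would exhibit an explicit $\bb^{\circ} \in U$ for every $m \ge 2k(k-1)+3$. A natural candidate is $\bb^{\circ} = (B_1^{\circ}, B_2^{\circ}, 0, \ldots, 0)$ with $B_1^{\circ} = \mathrm{diag}(\lambda_1, \ldots, \lambda_k)$ for distinct non-zero $\lambda_i$ and $B_2^{\circ}$ a shift-type matrix such as $\sum_{i=1}^{k-1} e_{i+1, i} + \alpha\, e_{1, k}$. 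Expanding a word $(B_1^{\circ})^{a_0} B_2^{\circ} (B_1^{\circ})^{a_1} \cdots B_2^{\circ} (B_1^{\circ})^{a_\ell}$ in the matrix-unit basis and invoking Vandermonde non-degeneracy in the $\lambda_i$'s, every matrix unit $e_{i,j}$ can be written as a linear combination of such monomials; a combinatorial count of the minimum total length needed to reach each $e_{i,j}$ yields the stated bound. Invertibility of $B_1^{\circ}$ then combines with Lemma~\ref{lem:intersection}(d) to give $\bb^{\circ} \in X_{n,k,m}$ for every $m \ge 2k(k-1)+3$.

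Next, both defining conditions of $U$ are polynomial non-vanishing conditions in the entries of $\bb$: $\det B_1 \ne 0$ for invertibility of $B_1$, and non-vanishing of at least one $k^2 \times k^2$ minor of the matrix whose rows list the $n^m$ monomials $B_{\mu_m} \cdots B_{\mu_1}$, viewed as vectors in $\mk \cong \cc^{k^2}$. Hence $Z := \Mat_{1,n}(\mk) \setminus U$ is a proper complex algebraic subvariety of $\Mat_{1,n}(\mk) \cong \cc^{nk^2}$ (proper by the preceding step), so it has complex codimension $\ge 1$ and real codimension $\ge 2$. Consider now the polynomial ansatz
\[
\bb(t; \bbC) := (1-t)\, \aaa + t\, \ee + t(1-t)\, \bbC, \qquad t \in [0,1],
\]
depending on a parameter $\bbC \in \Mat_{1,n}(\mk)$. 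For each fixed $t \in (0,1)$ the map $\bbC \mapsto \bb(t; \bbC)$ is an affine isomorphism of $\Mat_{1,n}(\mk)$, so the preimage of $Z$ in $\bbC$-space has real codimension $\ge 2$. A fibre-dimension bound then shows that the projection of the semi-algebraic set $\{(t,\bbC) \in (0,1) \times \Mat_{1,n}(\mk) : \bb(t; \bbC) \in Z\}$ onto the $\bbC$-coordinate has real codimension $\ge 1$ in $\Mat_{1,n}(\mk)$. For any $\bbC$ outside this projection, the path $t \mapsto \bb(t; \bbC)$ is a $C^\infty$ polynomial path from $\aaa$ to $\ee$ lying in $U$ on $(0,1)$, and in particular piecewise $C^1$ with $\bb(t) \in X_{n,k,m}$ and $B_1(t)$ invertible throughout $(0,1)$, as required.

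The main obstacle is the explicit construction in the first step, and verifying the precise bound $2k(k-1)+3$ via the combinatorial word-length argument; the codimension count and the genericity argument in the second step are then standard.
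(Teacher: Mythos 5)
Your proposal is essentially correct and takes a genuinely different route from the paper. The paper gives an explicitly constructive path: it defines the subset $Y_{n,k}$ (diagonal $B_1$ with eigenvalues in $S_k$, and $B_2$ with \emph{all entries nonzero}), shows $Y_{n,k}\subset X_{n,k,m}$ via the Vandermonde argument, passes to the similarity orbit $Z_{n,k}$, proves $Z_{n,k}$ is arcwise connected, and then connects an arbitrary $\bbA$ to $Z_{n,k}$ by carefully perturbing the Jordan form of $A_1$ (using shifts $t^{2^{n_i}}$ of the eigenvalues to keep them in $S_k$). You instead use a Bertini-type genericity argument: once a single element of $U:=X_{n,k,m}\cap\{\det B_1\neq 0\}$ is exhibited, the complement is a proper complex algebraic variety (hence of real codimension $\ge 2$), and the quadratic family $\bb(t;\bbC)=(1-t)\bbA+t\bbE+t(1-t)\bbC$ avoids it for generic $\bbC$ by a fibre-dimension bound for semi-algebraic sets (Tarski--Seidenberg). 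The paper explicitly says it prefers ``a constructive proof, rather than just showing the existence of the path,'' so your argument is exactly the non-constructive alternative the authors acknowledge. Both approaches require exhibiting one point of $U$; the paper does this for you, and your argument is shorter and conceptually cleaner, at the price of not producing an explicit path.

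One concrete point to fix: your choice of $B_2^{\circ}$ as a cyclic shift $\sum_i e_{i+1,i}+\alpha e_{1,k}$ is a bad model point. Since it is (a scalar multiple of) a permutation matrix, a monomial $(B_1^{\circ})^{a_0}B_2^{\circ}(B_1^{\circ})^{a_1}\cdots$ with $p$ factors $B_2^{\circ}$ is supported only on the $p$-th cyclic off-diagonal; the straightforward Vandermonde argument does not directly deliver the bound $2k(k-1)+3$ in this case, and a separate combinatorial argument would be needed. It is simpler to use the same model point as the paper, namely $B_2$ with \emph{all} matrix entries nonzero (so $\bbB^{\circ}\in Y_{n,k}$); the paper's lemma then directly gives $\bbB^{\circ}\in X_{n,k,m}$ for $m\ge 2k(k-1)+3$. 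You also invoke Lemma~\ref{lem:intersection}(d); note that this is not circular only because the Vandermonde step first establishes that $\bbB^{\circ}\in X_{n,k,m'}$ for \emph{some} $m'$, which is required for $s^{k,\bbB^{\circ}}$ to be defined at all. Finally, the fibre-dimension and projection steps should be justified by the cell-decomposition/dimension theory of semi-algebraic sets (projections of semi-algebraic sets do not increase dimension, and a semi-algebraic set whose $t$-fibres all have dimension $\le d$ has dimension $\le d+1$); writing this out would complete the argument.
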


Here, we give a constructive proof, rather than
just showing the existence of the path.
The strategy to prove this is to consider simple subsets of $X_{n,k,m}$ that can be constructively proven to be arcwise connected. 
Clearly, if $k=1$, then ${\mathcal K}_m(\bb)=\mk$ for any $m\in\nan$ and nonzero $\bb\in \Mat_{1,n}(\Mat_{k}(\cc))$.
Therefore, we may assume that $2\le k$. Throughout the proof, we fix an orthonormal basis $\{e_{\alpha}\}_{\alpha=1}^k$ of $\mk$.

For $k\in \nan$, let ${\mathcal P}_{k}:=\{(i,j)\in \{1,\ldots,k\}\times
\{1,\ldots, k\}\mid i\neq j \}$ and
\begin{align}\label{sk}
S_k:=
\left\{
\begin{gathered}
{\mathbb \lambda}=(\lambda_1,\ldots,\lambda_k)\in
{(\cc\setminus\{0\})}^k\left \vert
\begin{gathered}
\lambda_i\neq\lambda_j,\; {\rm if}\;(i, j)\in{\mathcal P}_k,\\
\frac{\lambda_i}{\lambda_j}
\neq\frac{\lambda_{i'}}{\lambda_{j'}},\;
{\rm if }\; (i,j)\neq(i',j'),\;
 (i, j),(i',j')\in{\mathcal P}_k
\end{gathered}\right.
\end{gathered}
\right\}.
\end{align}

For $n,k\in\nan$, let
\begin{align*}
Y_{n,k}:= \left\{
\bb\in \Mat_{1,n}(\Mat_{k}(\cc))
\left\vert
\begin{gathered}
B_1=\sum_{\alpha=1}^{k}\lambda_{\alpha}
\ket{e_{\alpha}}\bra{e_{\alpha}},\;\text{where}\;
\lambda \in S_k,\\
\text{and}\;\langle B_2 e_{\alpha}, e_{\beta}\rangle\neq 0,\quad
\alpha,\beta=1,\ldots,k
\end{gathered}\right.
\right\}.
\end{align*}
Furthermore, for $\bb=(B_1,\ldots,B_n)\in \Mat_{1,n}(\Mat_{k}(\cc))$
and an $R\in GL(k,\bbC)$, we denote 
\[
R\bb R^{-1}:
=(RB_1R^{-1},\ldots,RB_nR^{-1})\in \Mat_{1,n}(\Mat_{k}(\cc)).
\]
For $\bb,\bb'\in \Mat_{1,n}(\Mat_{k}(\cc))$, we say that $\bb$ is similar to 
$\bb'$ if there exists an $R\in GL(k,\bbC)$ such that
$R\bb R^{-1}=\bb'$.
Define
\begin{equation*}
Z_{n,k} := \left\{\bb\in \Mat_{1,n}(\Mat_{k}(\cc))\mid
\bb  
\text{ is similar to an element in }Y_{n,k}\right\}.
\end{equation*}

Note that
$R\in GL(k,\cc)$ which diagonalizes $B_1$ is not unique.
However, for any $\bb\in Z_{n,k}$ and invertible
$P\in \Mat_k(\cc)$ such that $PB_1 P^{-1}$ is diagonal with respect to
the basis $\{e_{\alpha}\}_{\alpha=1}^k$,
we have $P^{-1}\bb P\in Y_{n,k}$.
This follows from the condition $\lambda_i\neq \lambda_j$ for $(i,j)\in {\mathcal P}_k$ in the definition of $S_k$.

The proof of Proposition~\ref{skc} will now proceed through a series of lemmas.
\begin{lem}
Let $2\le n,k\in{\nan}$ and $2k(k-1)+3\le m\in\nan$.
Then $Y_{n,k}\subset X_{n,k,m}$.
\end{lem}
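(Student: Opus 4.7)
The plan is to exhibit enough length-$m$ monomials in $B_1, B_2$ whose linear span is all of $\mk$. Let $\bb \in Y_{n,k}$, so that $B_1$ is diagonal with eigenvalues $\lambda = (\lambda_1,\dots,\lambda_k) \in S_k$ and every entry $(B_2)_{\alpha\beta}$ is nonzero. Abbreviate $\mu_{\beta\alpha} := \lambda_\beta/\lambda_\alpha$; the two defining conditions of $S_k$ translate into the statement that the set $\{\mu_{\beta\alpha}\}_{\alpha,\beta}$ consists of exactly $k(k-1)+1$ distinct complex numbers---the value $1$ coming from the diagonal pairs and $k(k-1)$ pairwise distinct nonzero values indexed by $\caP_k$. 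Throughout the proof only the letters $B_1, B_2$ are used, so the remaining $B_\mu$ play no role.

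First I would span all off-diagonal matrix units $E_{\alpha\beta}$ with $\alpha\neq\beta$. For each $a \in \{0,\dots,m-1\}$ set $Y_a := B_1^{m-1-a} B_2 B_1^{a} \in \caK_m(\bb)$, with entries $(Y_a)_{\alpha\beta} = \lambda_\alpha^{m-1-a}\lambda_\beta^{a}(B_2)_{\alpha\beta}$. For any off-diagonal pair $(\alpha_0,\beta_0)$, Lagrange interpolation over the $k(k-1)+1$ distinct values of $\mu_{\beta\alpha}$ provides a polynomial $p(x) = \sum_a c_a x^a$ of degree at most $m-1$ vanishing on every $\mu_{\beta\alpha}$ except $\mu_{\beta_0\alpha_0}$. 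Then $\sum_a c_a Y_a$ has only one nonzero entry---a nonzero scalar multiple of $(B_2)_{\alpha_0\beta_0}\,E_{\alpha_0\beta_0}$---so $E_{\alpha_0\beta_0} \in \caK_m(\bb)$ as soon as $m-1 \ge k(k-1)$.

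A single occurrence of $B_2$ contributes the common exponent $\mu_{\alpha\alpha}=1$ on every diagonal entry, so the $Y_a$'s alone cannot separate the $k$ diagonal directions; one must insert a second letter $B_2$. For $b \in \{0,\dots,m-2\}$ consider
\[
W_b := B_2\, B_1^{b}\, B_2\, B_1^{m-2-b} \in \caK_m(\bb),
\]
whose $(\alpha,\alpha)$ entry equals $\lambda_\alpha^{m-2-b}\sum_\gamma c_{\alpha\gamma}\lambda_\gamma^{b}$, with $c_{\alpha\gamma} := (B_2)_{\alpha\gamma}(B_2)_{\gamma\alpha} \neq 0$. The off-diagonal part of $W_b$ already lies in $\caK_m(\bb)$ by the previous step, so the purely diagonal matrix $D_b$ built from the entries above also lies in $\caK_m(\bb)$. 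After pulling out $\lambda_\alpha^{m-2}$, the $\alpha$-th entry of $D_b$ is $f_\alpha(b) = \sum_\gamma c_{\alpha\gamma}\,\mu_{\gamma\alpha}^{b}$. A relation $\sum_\alpha x_\alpha f_\alpha(b) = 0$ for $b = 0,\dots,m-2$ becomes, via Vandermonde applied to the $k(k-1)+1$ distinct exponents $1$ and $\mu_{\gamma\alpha}$ with $(\alpha,\gamma)\in\caP_k$, the vanishing of all grouped coefficients; since every $c_{\alpha\gamma}$ is nonzero, this forces $x_\alpha = 0$. Hence the $D_b$'s span the full $k$-dimensional diagonal subspace, and combined with the off-diagonals one concludes $\caK_m(\bb) = \mk$, i.e., $\bb \in X_{n,k,m}$.

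The main obstacle is the diagonal step: both recognizing that a second occurrence of $B_2$ is needed in order for the distinct ratios $\mu_{\gamma\alpha}$ to appear as exponents, and then cleanly decoupling the diagonal contribution of $W_b$ from its off-diagonal part by invoking the first step. The two genericity hypotheses in the definition of $Y_{n,k}$ enter exactly once each---the distinct-ratios condition in $S_k$ to supply the required number of distinct Vandermonde exponents, and the nonvanishing-entries condition on $B_2$ to rule out spurious kernel vectors at the end. The two quantitative requirements $m-1 \ge k(k-1)$ and $m-1 \ge k(k-1)+1$ from the two Vandermonde arguments are comfortably absorbed by the hypothesis $m \ge 2k(k-1)+3$.
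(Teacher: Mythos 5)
Your proof is correct, and while the off-diagonal step is in the same spirit as the paper's, the diagonal step takes a genuinely different route.

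For the off-diagonal matrix units, both arguments are at bottom the same Vandermonde interpolation: the paper constructs the coefficient vector $\zeta^{(a,b)}$ so that $\sum_l\zeta_l^{(a,b)}B_1^lB_2B_1^{k(k-1)-l}$ is proportional to $\ket{e_a}\bra{e_b}$, placing it in $\caK_{k(k-1)+1}(\bb)$; you instead use $B_1^{m-1-a}B_2B_1^a$ and Lagrange interpolation directly at length $m$. These are cosmetically different but mathematically identical. The diagonal step is where the approaches diverge. The paper gets the diagonal (and in fact general) matrix unit $\ket{e_a}\bra{e_b}$ by forming the product of two already-obtained off-diagonal units with a $B_2$ and a power of $B_1$ sandwiched in between, namely $\ket{e_a}\bra{e_{a'}}B_2B_1^{m-2k(k-1)-3}\ket{e_{b'}}\bra{e_b}$ with $a\neq a'$, $b\neq b'$; counting degrees, the length of this monomial is exactly $2(k(k-1)+1)+1+(m-2k(k-1)-3)=m$, which is precisely where the hypothesis $m\ge 2k(k-1)+3$ enters. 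You instead introduce the two-letter-$B_2$ monomials $W_b=B_2B_1^bB_2B_1^{m-2-b}$, subtract off their off-diagonal parts (already known to lie in $\caK_m(\bb)$) to obtain diagonal elements $D_b\in\caK_m(\bb)$, and then run a second Vandermonde argument on the exponents $1$ and $\mu_{\gamma\alpha}$ together with the nonvanishing of $c_{\alpha\gamma}=(B_2)_{\alpha\gamma}(B_2)_{\gamma\alpha}$ to conclude the $D_b$'s span the diagonal. What the paper's route buys is transparency of the numeric threshold $2k(k-1)+3$, since it falls straight out of adding lengths. What your route buys is a sharper result: your two Vandermonde constraints only need $m\ge k(k-1)+2$, so you in fact establish $Y_{n,k}\subset X_{n,k,m}$ for a somewhat wider range of $m$ than the statement demands, at the modest cost of a second genericity argument and the decoupling observation that the off-diagonal part of $W_b$ can be discarded without leaving $\caK_m(\bb)$.
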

\begin{rem}\label{bnkne}
In particular, $X_{n,k,m}$ is nonempty for $2k(k-1)+3\le m\in\nan$.
From Lemma \ref{lem:intersection}, this means $B_{n,k}$ is not empty and contains an element $\bb$ with an invertible
$B_1$.
\end{rem}

\begin{proof}
Let $\bb=(B_1,\ldots,B_n)\in Y_{n,k}$.
We first claim that for each $(a,b)\in{\mathcal P}_k$,
there exists a nonzero vector 
$\zeta^{(a,b)}=(\zeta_{l}^{(a,b)})_{l=0,\cdots,{k(k-1)}}
\in \cc^{{k(k-1)}+1}$ such that
\begin{equation}\label{orthogonality}
\sum_{l=0}^{{k(k-1)}} \zeta_{l}^{(a,b)}
\lmk
\frac{\lambda_{\alpha}}{\lambda_{\beta}}\rmk^l
=\delta_{\alpha,a}\delta_{\beta,b},
\end{equation} 
for all $\alpha,\beta=1,\ldots,k$, where $\lambda\in S_k$ are the eigenvalues of $B_1$. To do this, we define for each $\alpha,\beta=1,\ldots,k$
\begin{align*}
v_{\alpha,\beta}:=\left(
\begin{array}{c}
1\\
\lmk
\frac{\lambda_{\alpha}}{\lambda_{\beta}}\rmk^1\\
\lmk
\frac{\lambda_{\alpha}}{\lambda_{\beta}}\rmk^2\\
\vdots\\
\lmk
\frac{\lambda_{\alpha}}{\lambda_{\beta}}\rmk^{{k(k-1)}}
\end{array}
\right)\in \cc^{{k(k-1)}+1}
\end{align*}
The condition $\lambda\in S_k$ implies that the determinant of the
following Vandermonde matrix
\begin{align*}
\left(
\begin{array}{ccccccccccccc}
v_{1,1}&v_{1,2}
&v_{1,3}&\cdots&v_{1,k}
&v_{2,1}&v_{2,3}&\cdots&v_{2,k}&
\cdots&v_{k,1}&\cdots&v_{k,k-1}
\end{array}
\right)\in\Mat_{{k(k-1)}+1}(\cc)
\end{align*}
is nonzero.
This means the set of vectors
$\{v_{i,j}\}_{(i,j)\in{\mathcal P}_k}\cup\{v_{1,1}\}$
are linearly independent.
Therefore, for each $(a,b)\in{\mathcal P}_k$, 
there exists 
a nonzero vector 
$\zeta^{(a,b)}$
such that
\[
\zeta^{(a,b)}\perp 
\{v_{i,j}\}_{(i,j)\in{\mathcal P}_k, (i,j)\neq(a,b)}
\cup\{v_{11}\},
\]
and 
\[
\left\langle\zeta^{(a,b)},v_{a,b}\right\rangle=1.
\]
Hence we have shown the claim.

With $\bbB \in Y_{n,k}$, equation~(\ref{orthogonality}) and a short calculation yield that
\begin{equation*}
\sum_{l=0}^{{k(k-1)}} \zeta_{l}^{(a,b)} B_1^l B_2 B_1^{{k(k-1)}-l}
= \lambda_{b}^{{k(k-1)}}
\langle{e_a},B_2 e_{b}\rangle
\ket{e_{a}}\bra{e_{b}},
\end{equation*}
for each $(a,b)\in{\mathcal P}_k$. 
As $\lambda_{b}^{{k(k-1)}}
\langle{e_a},B_2 e_{b}\rangle
\neq 0$, this means
$\ket{e_{a}}\bra{e_{b}}\in {\mathcal K}_{k(k-1)+1}(\bb)$
for any $(a,b)\in{\mathcal P}_k$.

Finally, for any $(a,b)$, possibly $ a= b$, we choose
 $a',b'=1,\ldots,k$ with $a\neq a',b\neq b'$, so that $\ket{e_{a}}\bra{e_{a'}}, \ket{e_{b'}}\bra{e_{b}}\in{\mathcal K}_{k(k-1)+1}(\bb)$. Hence,
\begin{equation*}
{\mathcal K}_{m}
(\bb)
\ni
\ket{e_{a}}\bra{e_{a'}}
B_2 B_1^{m-2k(k-1)-3}\ket{e_{b'}}\bra{e_{b}}
=\lambda_{b'}^{m-2k(k-1)-3}\langle e_{a'},B_2 e_{b'}\rangle
\ket{e_{a}}\bra{e_{b}}.
\end{equation*}
As $\lambda_{b'}^{m-{2}k(k-1)-3}\langle e_{a'},B_2 e_{b'}\rangle\neq 0$,
this means
$\ket{e_{a}}\bra{e_{b}}\in {\mathcal K}_{m}
(\bb)$ for each $a,b = 1,\ldots k$,
and we conclude ${\mathcal K}_{m}
(\bb)=\Mat_k(\cc)$.
Thus we obtain
$Y_{n,k}\subset X_{n,k,m}$.
\end{proof}
From this, we have $Z_{n,k}\subset X_{n,k,m}$
for $2k(k-1)+3\le m$.
Next, we show that $Z_{n,k}$ is arcwise connected.
\begin{lem}\label{zac}
For $n,k\in\nan$ with $n,k\ge 2$, and $\aaa,\ee\in Z_{n,k}$,
there exists a $C^{\infty}$-path
$\bb:[0,1]\to Z_{n,k}$
such that $\bb(0)=\aaa$, $\bb(1)=\ee$.
\end{lem}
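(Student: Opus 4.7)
The plan is to leverage the fact that $Z_{n,k}$ is, by definition, the union of similarity orbits of $Y_{n,k}$, so it is automatically invariant under the action $\bb \mapsto R\bb R^{-1}$ of $GL(k,\cc)$. I would first reduce the problem to finding a smooth path inside $Y_{n,k}$, and then build the latter componentwise.

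\textbf{Reduction step.} Given $\aaa, \ee \in Z_{n,k}$, choose $P,Q\in GL(k,\cc)$ with $\aaa' := P\aaa P^{-1} \in Y_{n,k}$ and $\ee' := Q\ee Q^{-1} \in Y_{n,k}$. The group $GL(k,\cc)$ is the complement in $\Mat_k(\cc) \cong \cc^{k^2}$ of the vanishing locus of $\det$, a complex polynomial, hence a real codimension-$2$ subset of a connected smooth manifold; in particular $GL(k,\cc)$ is smoothly path-connected. Pick smooth paths $R_P,R_Q:[0,1]\to GL(k,\cc)$ with $R_P(0)=I$, $R_P(1)=P$, $R_Q(0)=Q$, $R_Q(1)=I$. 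Then $t\mapsto R_P(t)\aaa R_P(t)^{-1}$ and $t\mapsto R_Q(t)\ee R_Q(t)^{-1}$ are smooth paths in $Z_{n,k}$ joining $\aaa$ to $\aaa'$ and $\ee'$ to $\ee$ respectively.

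\textbf{Constructing the path inside $Y_{n,k}$.} It remains to join $\aaa'=(A_1',\ldots,A_n')$ to $\ee'=(E_1',\ldots,E_n')$ by a smooth path in $Y_{n,k}$. I would build $\bb(t)=(B_1(t),\ldots,B_n(t))$ componentwise. For $B_1$: write $A_1'=\sum_\alpha \lambda^A_\alpha \ket{e_\alpha}\bra{e_\alpha}$ and similarly for $E_1'$, so $\lambda^A,\lambda^E\in S_k$. The set $S_k$ is an open subset of the connected manifold $(\cc\setminus\{0\})^k$ obtained by removing finitely many complex algebraic hypersurfaces (one for each forbidden coincidence $\lambda_i=\lambda_j$ or $\lambda_i\lambda_{j'}=\lambda_j\lambda_{i'}$); each such hypersurface has real codimension $2$, so $S_k$ is a connected smooth manifold, hence smoothly path-connected. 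Pick a smooth $\lambda:[0,1]\to S_k$ with $\lambda(0)=\lambda^A$, $\lambda(1)=\lambda^E$, and set $B_1(t)=\sum_\alpha\lambda_\alpha(t)\ket{e_\alpha}\bra{e_\alpha}$. For $B_2$: the set $(\cc\setminus\{0\})^{k^2}$ of matrices with all entries nonzero is a product of connected open subsets of $\cc$, hence smoothly path-connected, and contains both $A_2'$ and $E_2'$; choose a smooth path between them. For $i\geq 3$ the $A_i'$'s are unconstrained, so the linear interpolation $B_i(t)=(1-t)A_i'+tE_i'$ is already smooth.

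\textbf{Assembly.} Concatenating the three smooth pieces (similarity deformation from $\aaa$ to $\aaa'$; componentwise deformation inside $Y_{n,k}$ from $\aaa'$ to $\ee'$; similarity deformation from $\ee'$ to $\ee$) and using the standard bump-function reparametrization at the two joints produces a $C^\infty$-path $[0,1]\to Z_{n,k}$ with the required endpoints.

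\textbf{Main obstacle.} The only delicate point is ensuring that $S_k$ is smoothly path-connected despite the somewhat intricate cross-ratio condition $\lambda_i/\lambda_j\neq \lambda_{i'}/\lambda_{j'}$. This is handled uniformly by the real-codimension-$2$ argument: the "bad" locus is a finite union of proper complex algebraic subvarieties of $(\cc\setminus\{0\})^k$, its real codimension is at least $2$, and its complement in a connected complex manifold is therefore still a connected smooth manifold, in which any two points can be joined by a $C^\infty$-curve (for instance by taking a generic piecewise-linear path in the ambient $\cc^k$ and then smoothing).
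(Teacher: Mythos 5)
Your proof is correct and follows essentially the same route as the paper: reduce to $Y_{n,k}$ via similarity, connect the $B_1$-eigenvalue tuples inside $S_k$ (using that it is a connected manifold minus real-codimension-$2$ strata, which the paper formalizes via a transversality argument in its Lemmas on $S_k$), connect the entries of $B_2$ inside $(\cc\setminus\{0\})^{k^2}$, and linearly interpolate $B_3,\dots,B_n$. The one structural difference is that you run the two similarity deformations sequentially and then smooth the joints with a bump-function reparametrization, whereas the paper avoids concatenation entirely by taking a single smooth path $P:[0,1]\to GL(k,\cc)$ from $P_{\bbA}$ to $P_{\bbE}$ and defining $\bbB(t)=P(t)^{-1}\overline{\bbA}(t)P(t)$, so that the conjugation and the deformation inside $Y_{n,k}$ happen concurrently; both variants are valid.
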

\begin{proof}
By definition, if $\bbA, \bbE \in Z_{n,k}$, there exist $P_\bbA,P_\bbE\in GL(k,\bbC)$ such that $\overline{\bbA}
=(\bar A_1,\ldots,\bar A_n):=P_\bbA \bbA P_\bbA^{-1}\in Y_{n,k}$ and $\overline{\bbE}=(\bar E_1,\ldots,\bar E_n):=P_\bbE \bbE P_\bbE^{-1}\in Y_{n,k}$. As $GL(k,\bbC)$ is connected, there exists a $C^{\infty}$-path $P : [0,1]\to GL(k,\bbC)$ such that $P(0)=P_\bbA$ and $P(1)=P_\bbE$.

By assumption, there exist $\lambda=(\lambda_1,\ldots,\lambda_n), \mu=(\mu_1,\ldots,\mu_n)
\in S_k$ such that
\begin{equation*}
\overline{A}_1=\sum_{\alpha=1}^{k}\lambda_{\alpha} \ket{e_{\alpha}}\bra{e_{\alpha}},\quad
\overline{E}_1=\sum_{\alpha=1}^{k}\mu_{\alpha} \ket{e_{\alpha}}\bra{e_{\alpha}}.
\end{equation*}
By Lemma~\ref{skp}, there is a $C^{\infty}$-path $\lambda:[0,1]\to S_k$ such that $\lambda(0)=\lambda$, and $\lambda(1)=\mu$. Furthermore, let $\xi_{\alpha,\beta}=\langle e_{\alpha}, \overline{A}_2 e_{\beta}\rangle$ and $\chi_{\alpha,\beta}=\langle e_{\alpha}, \overline{E}_2 e_{\beta}\rangle$. By assumption again, $\xi_{\alpha,\beta},\chi_{\alpha,\beta}\neq0$. Then Lemma~\ref{sf} yields a $C^{\infty}$-path $\xi_{\alpha,\beta}:[0,1]\to \cc\setminus\{0\}$ such that $\xi_{\alpha,\beta}(0) = \xi_{\alpha,\beta}$ and $\xi_{\alpha,\beta}(1) = \chi_{\alpha,\beta}$. Now, we define for $t\in[0,1]$
\begin{align*}
\overline{A}_1(t) &= \sum_{\alpha=1}^{k}\lambda_{\alpha}(t) \ket{e_{\alpha}}\bra{e_{\alpha}}, \\
\overline{A}_2(t) &= \sum_{\alpha,\beta=1}^{k} \xi_{\alpha,\beta}(t) \ket{e_{\alpha}}\bra{e_{\beta}}, \\
\overline{A}_i(t) &= (1-t) \overline{A}_i + t\overline{E}_i,\qquad 3\leq i\leq n.
\end{align*}
Clearly, $\overline{\bbA}(t) = (\overline{A}_1(t),\ldots,\overline{A}_n(t))\in Y_{n,k}$. Finally, the path $Z_{n,k}\ni\bbB(t) = P(t)^{-1}\overline{\bbA}(t)P(t)$ 
is $C^{\infty} and $ connects $\bbA$ to $\bbE$, which concludes the proof.
\end{proof}
Now we connect an arbitrary element in 
$\Mat_{1,n}(\Mat_{k}(\cc))$ with 
an element in $Z_{n,k}$.

\begin{lem}\label{gtz}
Let $n,k\in\nan$ with $n,k\ge 2$.
For any $\aaa\in \Mat_{1,n}(\Mat_{k}(\cc))$, there exists a $C^{\infty}$-
path $\bb : [0,1]\to \Mat_{1,n}(\Mat_{k}(\cc))$
such that $\bb(0)=\aaa$
and $\bb(t)\in Z_{n,k}$
for all $t\in (0,1]$.
\end{lem}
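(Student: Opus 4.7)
I would build a polynomial (hence $C^\infty$) path of the form $\bbB(t) = P^{-1}(U_1 + \delta tD,\, U_2 + \delta tE,\, U_3, \ldots, U_n)P$, where $P \in \GL(k,\cc)$ conjugates $A_1$ to upper triangular form via a Schur decomposition (setting $U_i := P A_i P^{-1}$, so that $U_1$ is upper triangular), $D$ is a diagonal matrix, $E \in \mk$ is a generic matrix, and $\delta > 0$ is a small rescaling parameter. Because $Z_{n,k}$ is invariant under simultaneous conjugation by $\GL(k,\cc)$ by its very definition, the problem reduces to producing a $C^\infty$ path $\tilde{\bbU}(t)$ from $\bbU := (U_1, \ldots, U_n)$ with $\tilde{\bbU}(t) \in Z_{n,k}$ for every $t \in (0,1]$; the displayed formula then gives $\bbB(t) := P^{-1} \tilde{\bbU}(t) P$.

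For the diagonal part I would pick $D = \mathrm{diag}(d_1, \ldots, d_k)$ with the tuple $(d_1, \ldots, d_k)$ itself lying in $S_k$. Then $U_1 + tD$ is upper triangular with diagonal (and hence spectrum) $\mu_i(t) := (U_1)_{ii} + td_i$. The three inequations asserting $(\mu_1(t), \ldots, \mu_k(t)) \in S_k$ become polynomials in $t$ of degree at most two, whose leading coefficients are $d_i$, $d_i - d_j$, and $d_id_{j'} - d_{i'}d_j$ (for distinct $(i,j), (i',j') \in \caP_k$) respectively --- all nonzero, the last one precisely because $(d_1, \ldots, d_k) \in S_k$ forces the ratios $d_i/d_j$ to be pairwise distinct across $\caP_k$. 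Each inequation therefore fails at only finitely many complex $t$, yielding $\epsilon_1 > 0$ with $(\mu_1(t), \ldots, \mu_k(t)) \in S_k$ for every $t \in (0, \epsilon_1]$.

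For $t \in (0, \epsilon_1]$ the matrix $U_1 + tD$ has distinct eigenvalues and thus admits a smoothly varying diagonalizer $Q(t) \in \GL(k,\cc)$ (by standard analytic perturbation theory), although $Q(t)$ may diverge as $t \to 0^+$ when $A_1$ has nontrivial Jordan structure. Membership of $(U_1+tD,\, U_2+tE,\, U_3, \ldots, U_n)$ in $Z_{n,k}$ then reduces to the nonvanishing of every entry $f_{\alpha\beta}(t;E) := (Q(t)(U_2+tE)Q(t)^{-1})_{\alpha\beta}$. Viewed as a function of $E$ at fixed $t > 0$, $f_{\alpha\beta}(t;\,\cdot\,)$ is affine with linear part $E \mapsto t(Q(t)EQ(t)^{-1})_{\alpha\beta}$; since conjugation by $Q(t_0)$ is a linear isomorphism of $\mk$, this linear functional is nonzero on $\mk$ for every $t_0 \in (0,\epsilon_1]$. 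Consequently $\{E \in \mk : f_{\alpha\beta}(\,\cdot\,;E) \equiv 0 \text{ on } (0, \epsilon_1]\}$ is a proper affine subspace of $\mk$, and choosing $E$ outside the finite union of the $k^2$ such subspaces makes every $f_{\alpha\beta}(\,\cdot\,;E)$ a nontrivial analytic function on $(0, \epsilon_1]$. Its zeros there are isolated and (via the Puiseux expansion of $Q(t)$ at the origin) do not accumulate at $0$; pick $\epsilon_2 \in (0, \epsilon_1]$ smaller than every such positive zero of every $f_{\alpha\beta}$, set $\delta := \epsilon_2$, and verify directly that $\bbB(0) = \bbA$ while $\bbB(t) \in Z_{n,k}$ for all $t \in (0,1]$.

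The main obstacle is the $E$-step: since $Q(t)$ can blow up as $t \to 0^+$, one cannot control $f_{\alpha\beta}$ by continuity at $t = 0$ or by working in a single basis near the origin. The workaround is to never take that limit --- at every fixed $t_0 > 0$ the map $X \mapsto Q(t_0) X Q(t_0)^{-1}$ remains an isomorphism of $\mk$, which is all one needs to run the genericity argument preventing $f_{\alpha\beta}(\,\cdot\,;E)$ from being identically zero, and analyticity on $(0, \epsilon_1]$ then delivers nonvanishing on a smaller right-neighborhood of $0$.
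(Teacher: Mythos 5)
Your proposal is correct, and it takes a genuinely different route from the paper's proof. The paper works with the Jordan normal form of $A_1$, perturbs each eigenvalue by $t^{2^m}$ (a power-of-two trick that, via Lemma~\ref{pol}, guarantees that the perturbed eigenvalue tuple lies in $S_k$), diagonalizes the perturbed Jordan matrix with the \emph{explicit} $P(t)$ from Lemma~\ref{jd}, and then verifies by direct estimate on matrix elements that a concretely chosen perturbation of $A_2$ (adding $tR(\sum_{l,l'}\ket{f^{(l)}_{n_l}}\bra{f^{(l')}_1})R^{-1}$) has all its conjugated entries nonzero. You instead reduce $A_1$ to Schur (upper triangular) form, perturb linearly by $tD$ with the diagonal of $D$ a fixed tuple in $S_k$, observe that the three $S_k$-defining relations become nontrivial polynomials of degree at most two in $t$ (nontrivial precisely because $(d_1,\ldots,d_k)\in S_k$ forces the leading coefficients to be nonzero), and handle the second matrix by a genericity argument: at each fixed $t_0>0$ the linear part of $E\mapsto f_{\alpha\beta}(t_0;E)$ is a nonzero functional, so the set of $E$ making $f_{\alpha\beta}(\,\cdot\,;E)\equiv 0$ is a proper affine subset of $\Mat_k(\cc)$, and a generic $E$ avoids the finite union of these sets.

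One point deserves sharpening. As you note yourself, analyticity of $f_{\alpha\beta}(\,\cdot\,;E)$ on $(0,\epsilon_1]$ alone does \emph{not} rule out its zeros accumulating at $0$ (compare $\sin(1/t)$), so the conclusion really does hinge on some structure at $t=0$. In your setting this is even cleaner than a Puiseux argument: since $U_1+tD$ is upper triangular with distinct diagonal entries for $t\in(0,\epsilon_1]$, the diagonalizer $Q(t)$ may be taken upper unitriangular with entries obtained by back-substitution, hence rational in $t$, and the same is true of $Q(t)^{-1}$. Therefore each $f_{\alpha\beta}(\,\cdot\,;E)$ is a nonzero rational function of $t$ and has only finitely many zeros in $\cc$, so a suitable $\epsilon_2>0$ exists. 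With that nuance made explicit, the argument is complete. Overall your route trades the paper's explicit perturbations and estimates (Jordan form, explicit $P(t)$, explicit rank-corrections to $A_2$) for a conceptually lighter genericity-plus-algebraicity argument; both are valid, the paper's being constructive and yours being shorter.
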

\begin{proof}
Let $\aaa=(A_1,\ldots,A_n)\in \Mat_{1,n}(\Mat_{k}(\cc))$.
We consider the Jordan normal form of $A_1$ with respect to the
orthonormal basis $\{e_{\alpha}\}_{\alpha=1}^k$.
Let $n_1,\ldots,n_M\in\nan$ be the dimension of each Jordan cell
of $A_1$, so that $\sum_{l=1}^Mn_l=k$.
For $1\leq l \leq M$, denote by 
$J_l$ the $l$-th
Jordan cell with eigenvalue $\lambda_l$.
We further group the orthonormal basis 
$\{e_{\alpha}\}_{\alpha=1}^k$  
corresponding to the decomposition,
$\cc^k=\cc^{n_1}\oplus\cdots\oplus \cc^{n_M}$ and label them
$\{f_{\alpha}^{(l)}\}_{\alpha=1}^{n_l}$, $l=1,\ldots,M$.
For each $l$, $\{f_{\alpha}^{(l)}\}_{\alpha=1}^{n_l}$ is an orthonormal basis of $\cc^{n_l}$.
With these notations, each $J_l$ can be written
\[
J_l=
\sum_{\alpha=1}^{n_l}\lambda_{l}
\ket{f_{\alpha}^{(l)}}\bra{f_{\alpha}^{(l)}}
+\sum_{\alpha=2}^{n_l}
\ket{f_{\alpha-1}^{(l)}}\bra{f_{\alpha}^{(l)}}.
\]
The Jordan normal form now reads $A_1=R J R^{-1}$ for a $R\in GL(k,\bbC)$, and where $J:=J_1\oplus\cdots\oplus
J_M$. It will also be useful to gather the eigenvalues with their multiplicities: for $l=1,\ldots,M$ and $\alpha=1,\ldots,n_l$, we define $\lambda_{\alpha}^{(l)}:=\lambda_l$ and let $\lambda=(\lambda_{1}^{(1)},\ldots,\lambda_{n_1}^{(1)},\lambda_1^{(2)},\ldots,\lambda_{n_{M-1}}^{(M-1)},\lambda_{1}^{(M)},\ldots,\lambda_{n_M}^{(M)})\in \bbC^k$.

For each $l=1,\ldots,M$ and 
$\alpha=1,\ldots,n_l$, we set $m_{(l,\alpha)}:=\alpha+\sum_{i=1}^{l-1}
n_{i}$.
Let $N_{(l,\alpha)}:=2^{m_{(l,\alpha)}+1}$,
and
$\lambda_{\alpha}^{(l)}(t):=\lambda_{\alpha}^{(l)}+t^{N_{(l,\alpha)}}$
for $t\ge 0$.
Corresponding to the decomposition
$\cc^k=\cc^{n_1}\oplus\cdots\oplus \cc^{n_M}$,
we define
$\lambda(t)=:(\lambda_{1}^{(1)}(t),\ldots,
\lambda_{n_1}^{(1)}(t),\lambda_1^{(2)}(t),
\ldots,\lambda_{n_{M-1}}^{(M-1)}(t)
,\lambda_{1}^{(M)}(t),\ldots,\lambda_{n_M}^{(M)}(t))
\in\cc^k$ and note that $\lambda(0) = \lambda$. By Lemma \ref{pol}, there exists $1>\delta_1>0$ such that
$\lambda(t)\in S_k$ for $t\in(0,\delta_1)$. Now, the following matrix
\[
J(t):=\sum_{l=1}^M\sum_{\alpha=1}^{n_l}
\lambda_{\alpha}^{(l)}(t)
\ket{f_{\alpha}^{(l)}}
\bra{f_{\alpha}^{(l)}}
+\sum_{l=1}^M\sum_{\alpha=2}^{n_l}
\ket{f_{\alpha-1}^{(l)}}\bra{f_{\alpha}^{(l)}},
\]
satisfies the assumptions of Lemma~\ref{jd} for each $t\in(0,\delta_1)$. 
Define a diagonal matrix $D(t)=D_1(t)\oplus\cdots \oplus D_M(t)$ and an invertible matrix
$P(t)=P_1(t)\oplus\cdots\oplus P_M(t)\in GL(k,\bbC)$ such that
\begin{align*}
D_l(t)=\sum_{\alpha=1}^{n_l}
\lambda_{\alpha}^{(l)}(t)\ket{f_{\alpha}^{(l)}}
\bra{f_{\alpha}^{(l)}},\quad
P_l(t)=\sum_{\alpha,\beta=1}^{n_l}
c_{\beta\alpha}^{(l)}(t)\ket{f_{\beta}^{(l)}}\bra{f_{\alpha}^{(l)}},\quad
P_l(t)^{-1}=\sum_{\alpha,\beta=1}^{n_l}
d_{\alpha\beta}^{(l)}(t)\ket{f_{\alpha}^{(l)}}\bra{f_{\beta}^{(l)}}
\end{align*}
with
\begin{align*}
c_{\beta\alpha}^{(l)}(t)
=\begin{cases}
\prod_{j=\beta}^{\alpha-1}\frac{1}{\lambda_{\alpha}^{(l)}(t)- \lambda_j^{(l)}(t)} &\alpha>\beta \\
1& \alpha=\beta\\
0& \alpha<\beta
\end{cases}
\,,\qquad
d_{\alpha\beta}^{(l)}
=
\begin{cases}
\prod_{j=\alpha+1}^{\beta}\frac{1}{\lambda_{\alpha}^{(l)}(t)-\lambda_j^{(l)}(t)}& \alpha<\beta \\
1& \alpha=\beta\\
0& \alpha>\beta
\end{cases}\,.
\end{align*}
Then, by Lemma~\ref{jd},
we have $J(t)=P(t)D(t)P(t)^{-1}$. 
As $t\mapsto \lambda(t)$ is $C^{\infty}$, $\bar B_1:[0,\delta_1)\to \Mat_k(\cc)$ defined by
\begin{equation*}
\bar B_1(t):=R J(t) R^{-1}=RP(t)D(t)(RP(t))^{-1}
\end{equation*}
is a $C^{\infty}$-path with $\bar B_1(0)=RJR^{-1}=A_1$.

From the representation of $P(t)$, 
matrix elements of $(RP(t))^{-1}A_2RP(t)$ are of the form
\begin{align}\label{elements}
\la
f_{\beta}^{(l)}, P(t)^{-1}\lmk R^{-1} A_2 R\rmk P(t) f_{\beta'}^{(l')}
\ra
=\la
f_{n_l}^{(l)},\lmk R^{-1} A_2 R\rmk  f_{1}^{(l')}
\ra
d_{\beta n_l}^{(l)}(t)c_{1 \beta'}^{(l')}(t)
+g_{\beta \beta'}^{(ll')}(t),
\end{align}
where  
\begin{align}\label{gt}
\lv
g_{\beta\beta'}^{(ll')}(t)
\rv
\le
\lV R^{-1} A_2 R\rV
\cdot
\sum_{\substack{\alpha\ge \beta,\alpha'\le \beta', \\ (\alpha,\alpha')\neq (n_l,1)}}
\lv
\frac{d_{\beta\alpha}^{(l)}(t)}{d_{\beta n_l}^{(l)}(t)}
\rv
\lv
\frac{c_{\alpha'\beta'}^{(l')}(t)}{c_{1\beta' }^{(l')}(t)}
\rv
\lv
d_{\beta n_l}^{(l)}(t)c_{1\beta'}^{(l')}(t)
\rv
\le
\lV R^{-1} A_2 R\rV
\cdot
\lv
d_{\beta n_l}^{(l)}(t)c_{1\beta'}^{(l')}(t)
\rv
\cdot k^2 t^2,
\end{align}
for $t\in(0,\delta_1)$. 
Here we used the estimate
\begin{align*}
&\lv
\frac{d_{\beta\alpha}^{(l)}(t)}{d_{\beta n_l}^{(l)}(t)}
\rv
=\prod_{j=\alpha+1}^{n_l}\lv \lambda_{\beta}^{(l)}(t)-\lambda_{j}^{(l)}(t)\rv
=\prod_{j=\alpha+1}^{n_l}\lv t^{N_{(l,\beta)}}-t^{N_{(l,j)}}\rv
\le t^2,\\
&\lv
\frac{c_{\alpha'\beta'}^{(l')}(t)}{c_{1\beta' }^{(l')}(t)}
\rv
=
\prod_{j=1}^{\alpha'-1}\lv \lambda_{\beta'}^{(l')}(t)-\lambda_{j}^{(l')}(t)\rv
= 
\prod_{j=1}^{\alpha'-1}\lv t^{N_{(l',\beta')}}-t^{N_{(l',j')}}\rv
\le t^2,
\end{align*}
if $\beta\le \alpha\le n_l-1$, 
$2\le \alpha'\le \beta'$,
for $t\in(0,\delta_1)\subset (0,1)$.

Now we define $\bar B_2(t)$ for each $t\in[0,\delta_1)$
as
\begin{align*}
\bar B_2(t)=
A_2+
tR\lmk
\sum_{l=1}^{M}\sum_{l'=1}^{M} 
\ket{f_{n_l}^{(l)}}\bra{f_1^{(l')}}
\rmk R^{-1}.
\end{align*}
This gives a $C^{\infty}$-path in $\Mat_k(\cc)$
with $\bar B_2(0)=A_2$.

We claim that there exists $\delta_1\geq \delta_2>0$ such that
the matrix elements of $(RP(t))^{-1}\bar B_2(t)RP(t)$ are nonzero
for $t\in(0,\delta_2)$. A simple computation using~(\ref{elements}) yields the bound
\begin{equation*}
\lv \la f_{\beta}^{(l)}, (RP(t))^{-1}\bar B_2(t)RP(t)  f_{\beta'}^{(l')} \ra \rv
\geq 
\bigg\vert
 \lv \la f_{n_l}^{(l)},\lmk R^{-1} A_2 R\rmk  f_{1}^{(l')}\ra + t \rv \lv d_{\beta,n_l}^{(l)}(t)c_{1,\beta'}^{(l')}(t) \rv - \lv g_{\beta,\beta'}^{(ll')}(t) \rv \bigg\vert.
\end{equation*}
where $d_{\beta,n_l}^{(l)}(t)c_{1,\beta'}^{(l')}(t)\neq 0$ by the definition
for $t\in (0,\delta_1)$. Since there is $0<\delta_2<\delta_1$
such that such that
\begin{equation*}
\lV R^{-1} A_2 R\rV \cdot k^2 t^2 < \vert \langle f_{n_l}^{(l)},\lmk R^{-1} A_2 R\rmk  f_{1}^{(l')}\rangle + t \vert,\quad
t\in (0,\delta_2)
\end{equation*}
the bound~(\ref{gt}) implies the claim.

Finally, we choose $0<\delta<\delta_2$ and define a path $\bb:[0,1]\to \Mat_{1,n}(\Mat_k(\cc))$ by
$B_1(t):=\bar B_1(\delta t)$, 
$B_2(t):=\bar B_2(\delta t)$,
and $B_i(t):=A_i$ for $i=3,\ldots,n$. By the above, $(RP(t))^{-1}\bbB(t) (RP(t))\in Y_{n,k}$ for all $t\in(0,1]$.
Hence $\bb(t)\in Z_{n,k}$.
Furthermore, $\bb$ is $C^{\infty}$ and $\bb(0)=\bbA$, and
we have obtained a path satisfying the conditions in the Lemma.
\end{proof}
With this, we are now ready to prove the main proposition of this section.
\begin{proofof}[Proposition \ref{skc}]
Recall that it suffices to consider the case $k\ge 2$.
For any $\aaa,\ee\in \Mat_{1,n}(\Mat_{k}(\cc))$, Lemma~\ref{gtz} yields two $C^{\infty}$-paths 
$\bb_{\aaa},\bb_{\ee} : [0,1]\to \Mat_{1,n}(\Mat_{k}(\cc))$
such that $\bb_{\aaa}(0)=\aaa$, $\bb_{\ee}(0)=\ee$
and $\bb_{\aaa}(t),\bb_{\ee}(t)\in Z_{n,k}\subset X_{n,k,m}$
for all $t\in (0,1]$.
By Lemma~\ref{zac}, $\bb_{\aaa}(1),\bb_{\ee}(1)\in Z_{n,k}$
can be connected by a $C^{\infty}$-path $\bb_{\mathrm{mid}}:[0,1]\to \Mat_{1,n}(\Mat_{k}(\cc))$ in
$Z_{n,k}\subset X_{n,k,m}$.
Hence, the path $\bb:[0,1]\to \Mat_{1,n}(\Mat_{k}(\cc))$ defined by
\begin{equation*}
\bbB(t) := 
\begin{cases}
\bbB_\bbA(3t) & 0\leq t\leq 1/3 \\
\bbB_{\mathrm{mid}}(3(t-1/3)) & 1/3 < t\leq 2/3 \\
\bbB_\bbE(3(1-t)) & 2/3 \leq t\leq 1 \\
\end{cases}
\end{equation*}
is a continuous path with $\bb(0)=\aaa$ and $\bb(1)=\ee$ and such that 
$\bb(t)\in Z_{n,k}\subset X_{n,k,m}$ for $t\in(0,1)$. 
It is everywhere continuously differentiable but at $t=1/3$ and $t=2/3$.
Furthermore $\bb(t)\in Z_{n,k}$ implies the invertivility of $B_1(t)$ for $t\in (0,1)$.
\end{proofof}

\section*{acknowledgments}
The authors are grateful to B.~Nachtergaele for many discussions.

S.B. acknowledges the generous support of the graduate school of mathematical sciences of the university of Tokyo and of the Institute for Theoretical Physics of the ETH Zurich where some of this work was carried out. He also wishes to thank M.~Fraas for pointing out the transversality theorem.

Y.O. is
supported by JSPS 
Grant-in-Aid for Young Scientists (B), 
and Inoue Science Research Award.
She  acknowledges the generous support of University of California, Davis where some of this work was carried out. 

\appendix
\section{Continuous paths in $S_k$}
We denote the Euclidean norm of $\rr^k$ by $\lV\;\rV_{\rr^k}$.
\begin{lem}\label{deformation}
Let $k,l\in \nan$ with $l+1<k$ and $N$ an $l$-dimensional submanifold of ${\mathbb R}^k$ without boundary.
Suppose that $\xi:[0,1]\to{\mathbb R}^k$ is  a $C^{\infty}$-path, such that $\xi(0),\xi(1)\in N^c$.
Then for any $\varepsilon>0$, there exists a $C^{\infty}$-path $\xi_{\varepsilon}:[0,1]\to \rr^k$
such that
\begin{align}
&\sup_{t\in[0,1]}\lV \xi_{\varepsilon}(t)-\xi(t)\rV_{\rr^k}<\varepsilon,\nonumber\\
&\xi_{\varepsilon}([0,1])\cap N=\emptyset,\nonumber\\
&\xi_{\varepsilon}(0)=\xi(0),\;\xi_{\varepsilon}(1)=\xi(1).
\end{align}
\end{lem}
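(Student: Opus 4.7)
The plan is to use a Sard-type transversality argument. The hypothesis $l+1<k$ exactly says that a one-parameter family swept by $N$ has too small a dimension to fill $\rr^k$, so a generic small perturbation of $\xi$ will avoid $N$. To preserve the endpoint values, the first step is to fix an auxiliary smooth cutoff $\phi:[0,1]\to[0,1]$ with $\phi(0)=\phi(1)=0$, $\phi(t)>0$ for $t\in(0,1)$, and $\lV\phi\rV_{\infty}=1$, and to consider the family of perturbations $\xi_v(t):=\xi(t)+\phi(t)v$ indexed by $v\in\rr^k$. Each $\xi_v$ is of class $C^{\infty}$, coincides with $\xi$ at $t=0,1$, and satisfies $\sup_{t\in[0,1]}\lV\xi_v(t)-\xi(t)\rV_{\rr^k}\le\lV v\rV_{\rr^k}$. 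It therefore suffices to exhibit some $v$ in the open $\varepsilon$-ball around the origin lying in the complement of the \emph{bad set}
\[
B:=\{v\in\rr^k\mid \xi_v([0,1])\cap N\neq\emptyset\}.
\]

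Since $\xi(0),\xi(1)\in N^c$, membership $\xi_v(t)\in N$ can only occur at $t\in(0,1)$, and then $v=(n-\xi(t))/\phi(t)$ for some $n\in N$. Hence $B$ equals the image of the smooth map
\[
F:N\times(0,1)\longrightarrow\rr^k,\qquad F(n,t):=\frac{n-\xi(t)}{\phi(t)}.
\]
The domain $N\times(0,1)$ is a smooth manifold of dimension $l+1$, which by hypothesis is strictly less than $k$. The standard consequence of Sard's theorem (or a direct covering argument using local Lipschitz bounds of $F$) then yields that $F\bigl(N\times(0,1)\bigr)$ has Lebesgue measure zero in $\rr^k$. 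In particular, the open $\varepsilon$-ball around the origin intersects $B^c$ in a set of full measure, so we may pick $v$ with $\lV v\rV_{\rr^k}<\varepsilon$ and $v\notin B$. The resulting path $\xi_\varepsilon:=\xi_v$ is then smooth, interpolates the same endpoints as $\xi$, stays within $\varepsilon$ of $\xi$ in the sup norm, and avoids $N$ entirely, which is precisely the conclusion.

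The only mild subtlety I anticipate is that a submanifold $N$ without boundary need not be closed in $\rr^k$, so one cannot argue by a positive distance from $\xi(0),\xi(1)$ to $N$. This plays no role in the argument above since the measure-theoretic statement uses only the intrinsic dimension of $N$ and the smoothness of $F$; the rest is routine bookkeeping. The main (very light) obstacle is simply to verify cleanly that $B$ is exactly the image of $F$, which amounts to rewriting the defining equation $\xi(t)+\phi(t)v=n$.
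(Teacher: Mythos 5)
Your proof is correct, and it reaches the conclusion by a genuinely different route from the paper. The paper constructs a bump function $\gamma$ whose support is localized where $\xi$ is already close to $N$, verifies that the family $F(t,s)=\xi(t)+\gamma(t)s$ and its boundary restriction are transversal to $N$, and then invokes the parametric transversality theorem to conclude that a.e.\ $s$ gives a curve transversal (hence disjoint, by the dimension count) from $N$. You instead fix a single bump $\phi$ vanishing only at $t=0,1$, observe that the set of bad parameters $v$ is \emph{exactly} the image of the smooth map $F(n,t)=(n-\xi(t))/\phi(t)$ on the $(l+1)$-dimensional manifold $N\times(0,1)$, and conclude by Sard (or a direct covering argument) that this image has measure zero. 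This is more elementary: you only need the baby case of Sard --- the image of a $C^1$ map from a manifold of dimension $<k$ into $\rr^k$ is null --- rather than the full parametric transversality machinery, and you never pass through the intermediate notion of transversality of the perturbed curve.

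You have also put your finger on a real weakness of the paper's proof that your own argument sidesteps. The paper sets $\delta=\min\{\varepsilon/2,\,\dr(\xi(0),N),\,\dr(\xi(1),N)\}$ and asserts $\delta>0$; but a submanifold without boundary need not be closed, so $\xi(0)\in N^c$ does not force $\dr(\xi(0),N)>0$, and if that distance vanishes the paper's construction of $\gamma$ collapses. Your argument never uses any distance from the endpoints to $N$ --- only the fact that $\xi(0),\xi(1)\notin N$ and $\phi(0)=\phi(1)=0$ --- so it proves the lemma exactly as stated. (In the paper's actual applications the relevant submanifolds are contained in closed varieties avoided by the endpoints, so $\delta>0$ does hold there; but your version is cleaner and strictly more general.) One minor point worth making explicit in a final write-up: the identification $B=\ran F$, together with second countability of $N$, is what lets you cover $N\times(0,1)$ by countably many charts and conclude that $\ran F$ is a countable union of null sets, hence null.
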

\begin{proof}
Let $\delta=\min \{\frac{\varepsilon}{2}, \dr(\xi(0),N), \dr(\xi(1),N)\}>0$.
Define a closed subset $C$ and an open subset $U$ in $[0,1]$ by
\begin{align*}
&C:=\{t\in[0,1]\mid d_{\rr^k}(\xi(t),N)\le \frac{\delta}{3}\},\\
&U:=\{t\in[0,1]\mid d_{\rr^k}(\xi(t),N)< \frac{2\delta}{3}\}.
\end{align*}
Then $C$ is a subset of $U$ and there exists a $C^{\infty}$-path $\gamma:[0,1]\to[0,1]$
such that $\gamma\vert_C=1$ and $\gamma\vert_{U^c}=0$.

Let $S$ be an open ball in $\rr^k$ centered at the origin with radius $\delta$.
Define a $C^{\infty}$-map $F:[0,1]\times S\to \rr^k$ by
\[
F(t,s):=\xi(t)+\gamma(t)s.
\]

The map $F$ is transversal to $N$. For if $(t,s)\in F^{-1}(N)$, then
$\gamma(t)\neq 0$, by the definition of $\gamma$.
 For any $t\in[0,1]$ with $\gamma(t)\neq 0$, the map $S\rightarrow\bbR^{k}$ given by $s\mapsto\xi(t) + \gamma(t) s$ is a submersion.
Hence, $F$ is a submersion at $(t,s)$ with $\gamma(t)\neq 0$, in particular,
at $(t,s)\in F^{-1}(N)$.
Therefore, $F$ is transversal to $N$.

For the restriction $\partial F$ of $F$ to the boundary $\{0,1\}\times S$, $(\partial F)^{-1}(N)=\emptyset$.
Hence  $\partial F$ is transversal to $N$.

We conclude by the Transversality Theorem~(see e.g.~\cite{Guillemin:1974aa}) that for almost every $s\in S$, the map $\xi_s(t) := \xi(t) + \gamma(t) s$ is transversal to $N$. Since $\mathrm{dim}(N) + 1 = l+1<k$, transversality implies that $N$ and $\xi_s$ do not intersect. Furthermore, $\xi_s:[0,1]\to \rr^k$
is $C^{\infty}$.
Since $\Vert \xi_s(t) - \xi(t) \Vert \leq s < \epsilon$ and $\gamma(0)=\gamma(1)=0$, this concludes the proof.
\end{proof}

\begin{lem}\label{gp}
Let $m,k\in\nan$ and $N_1,\ldots,N_m$ be submanifolds of $\rr^k$ without boundary 
with $\dim N_i+1<k$, $i=1,\ldots,m$.
Suppose that $\xi:[0,1]\to \rr^k$ is a $C^{\infty}$-map with
$\xi(0),\xi(1)\in \cap_{i=1}^m N_i^c$.
Then, there exists a $C^{\infty}$-map $\zeta:[0,1]\to \rr^k$ such that
\begin{align}
&\zeta([0,1])\cap \lmk \bigcup_{i=1}^m N_i\rmk=\emptyset,\nonumber\\
&\xi(0)=\zeta(0),\; \xi(1)=\zeta(1).
\end{align}
\end{lem}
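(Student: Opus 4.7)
\medskip
\noindent\textbf{Proof proposal for Lemma~\ref{gp}.}
The natural plan is to iterate Lemma~\ref{deformation}, peeling off one submanifold at a time. I would induct on $m$. The base $m=1$ is literally Lemma~\ref{deformation}. For the inductive step, given $N_{1},\dots,N_{m}$ and a path $\xi:[0,1]\to\rr^{k}$ with endpoints in $\bigcap_{i=1}^{m}N_{i}^{c}$, first apply the induction hypothesis to $N_{1},\dots,N_{m-1}$: this produces a $C^{\infty}$-path $\eta:[0,1]\to\rr^{k}$ with $\eta(0)=\xi(0)$, $\eta(1)=\xi(1)$, and $\eta([0,1])\cap\bigcup_{i=1}^{m-1}N_{i}=\emptyset$. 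Then apply Lemma~\ref{deformation} to $\eta$ and $N_{m}$ to obtain $\zeta=\eta_{\varepsilon}$, with $\zeta(0)=\eta(0)=\xi(0)$, $\zeta(1)=\eta(1)=\xi(1)$, and $\zeta([0,1])\cap N_{m}=\emptyset$, where $\varepsilon>0$ remains to be chosen.

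The only thing that needs to be verified is that $\zeta$ still misses $\bigcup_{i=1}^{m-1}N_{i}$. For this I would argue as follows: the image $\eta([0,1])$ is compact, and each $N_{i}$ is in the applications a real-algebraic subvariety, hence closed in $\rr^{k}$; in particular $\bigcup_{i=1}^{m-1}N_{i}$ is closed. Therefore
\[
r:=\drr\!\bigl(\eta([0,1]),\,\bigcup_{i=1}^{m-1}N_{i}\bigr)>0 .
\]
Choosing $\varepsilon<r$ in Lemma~\ref{deformation} guarantees, via $\sup_{t}\lV\zeta(t)-\eta(t)\rV_{\rr^{k}}<\varepsilon<r$, that $\zeta([0,1])$ avoids every $N_{i}$ with $i<m$ as well. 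The endpoints are preserved by Lemma~\ref{deformation}, so $\zeta$ has all the required properties, completing the induction.

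\medskip
\noindent\textbf{Main obstacle.} The subtle point is precisely the positivity of the distance $r$ above: as stated, a submanifold without boundary need not be closed in $\rr^{k}$, and then the distance from a compact set disjoint from it could be zero. In the setting we care about (the complements that define $S_{k}$) the relevant $N_{i}$ are real-algebraic sets cut out by polynomial equations such as $\lambda_{i}=\lambda_{j}$ or $\lambda_{i}\lambda_{j'}=\lambda_{j}\lambda_{i'}$, so closedness is automatic; more generally one could assume, or locally ensure, that each $N_{i}$ is a properly embedded submanifold. With this mild proviso the inductive scheme goes through without further calculation, and the dimension hypothesis $\dim N_{i}+1<k$ is used exclusively inside Lemma~\ref{deformation}, via the transversality theorem which guarantees empty (not merely transverse) intersection after a generic perturbation.
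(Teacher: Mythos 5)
Your proof is essentially the same as the paper's: an induction (peeling off one submanifold at a time via Lemma~\ref{deformation}), with the perturbation size at each step controlled by the Euclidean distance from the current path to the union of the submanifolds already avoided. The paper phrases it as a forward induction $(\mathrm{P}_1)\Rightarrow(\mathrm{P}_2)\Rightarrow\cdots\Rightarrow(\mathrm{P}_m)$, taking $\varepsilon_{j+1}=\min\{2^{-j-1},\,\dr(\xi_j([0,1]),\bigcup_{i\le j}N_i)\}$; you phrase it as strong induction on $m$ with $\varepsilon<r$. These are the same argument, and your version drops the superfluous $2^{-j-1}$ bookkeeping (which is not needed for the statement, only tidiness).

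Your ``main obstacle'' remark is the most valuable part of the write-up, and it is not merely hypothetical: the positivity $\dr(\xi_j([0,1]),\bigcup_{i\le j}N_i)>0$ is written down without justification in the paper's proof as well, and as you observe it requires something beyond ``submanifold without boundary.'' One small correction to your discussion: in the actual application (Lemma~\ref{skp}) the sets $N_{(i,j),(l,m)}$ are \emph{not} closed in $\rr^{2k}$ --- they are the algebraic variety $\{v_iv_m=v_jv_l\}$ intersected with the open set $\bigcap_a M_a^c$, hence only locally closed. What saves the argument there is the order in which one peels: once the closed linear pieces $M_a$ and $Z_{(i,j)}$ are handled, the running path stays uniformly away from $\bigcup_a M_a$, and since $\overline{N_{(i,j),(l,m)}}\setminus N_{(i,j),(l,m)}\subset\bigcup_a M_a$, the path in fact avoids $\overline{N_{(i,j),(l,m)}}$; together with the general identity $\dr(K,A)=\dr(K,\overline{A})$, this gives the needed positive distance. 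So the cleanest way to state the hypothesis you want is ``properly embedded'' (equivalently, closed as a subset), or, as you suggest, to argue locally; either way, this is a real and worthwhile refinement of the lemma's hypotheses that the paper leaves implicit.
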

\begin{proof}
We consider the following statement for $j=1,\dots,m$.

(P$_{j}$):
There exists a $C^{\infty}$-map $\xi_{j}:[0,1]\to \rr^k$ such that
\begin{align}
&\xi_{j}([0,1])\cap \lmk \bigcup_{i=1}^j N_i\rmk=\emptyset,\nonumber\\
&\xi(0)=\xi_{j}(0),\; \xi(1)=\xi_{j}(1),\nonumber\\
&\sup_{t\in[0,1]}\lV \xi(t)-\xi_{j}(t)\rV_{\rr^k}<\lmk1-\frac 1{2^j}\rmk.
\end{align}

We prove (P$_{m}$) inductively, and set $\zeta=\xi_m$.
The statement (P$_{1}$) is obtained by applying Lemma \ref{deformation} to
$N_1$, $\xi$, and $\varepsilon=\frac 12>0$.
Assume that (P$_{j}$) is true for $j<m$. 
Applying  Lemma \ref{deformation} to $N_{j+1}$, $\xi_j$ and 
\[
\varepsilon_{j+1}=\min\left\{
2^{-j-1},\; \dr(\xi_j([0,1], \bigcup_{i=1}^j N_i))
\right\}>0,
\]
we obtain a $C^{\infty}$-path $\xi_{j+1}:[0,1]\to \rr^k$ satisfying
\begin{align}\label{jpo}
&\xi_{j+1}([0,1])\cap N_{j+1}=\emptyset,\nonumber\\
&\xi(0)=\xi_{j}(0)=\xi_{j+1}(0),\; \xi(1)=\xi_{j}(1)=\xi_{j+1}(1),\nonumber\\
&\sup_{t\in[0,1]}\lV \xi_{j+1}(t)-\xi_{j}(t)\rV_{\rr^k}<\varepsilon_{j+1}.
\end{align}
By the choice of $\varepsilon_{j+1}\le \dr(\xi_j([0,1], \bigcup_{i=1}^j N_i))$
and the last inequality in (\ref{jpo}), we conclude that  $\xi_{j+1}$ does not intersect with $\bigcup_{i=1}^j N_i$.
Again, by the choice of $\varepsilon_j\le 2^{-j-1}$ and the last inequality in (\ref{jpo})
we have $\sup_{t\in[0,1]}\lV \xi(t)-\xi_{j+1}(t)\rV_{\rr^k}<(1-\frac 1{2^{j+1}})$.
\end{proof}
\begin{lem}\label{skp}
Let $S_k$ be the subset in $\cc^k$ defined by (\ref{sk}).
Then for any points $\lambda,\mu\in S_k$, there exists 
 a $C^{\infty}$-path $\zeta:[0,1]\to \cc^k$ such that 
$\zeta([0,1])\subset S_k$ and
$\zeta(0) = \lambda$, $\zeta(1) = \mu$.
\end{lem}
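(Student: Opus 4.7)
The plan is to realize $S_k$ as the complement in $\bbR^{2k}\simeq\bbC^k$ of a finite union of smooth submanifolds of sufficiently small dimension, so that a straight-line interpolation together with Lemma~\ref{gp} immediately produces the required $C^\infty$-path. First I would identify $\bbC^k$ with $\bbR^{2k}$ and rewrite $S_k^c$ as the union of the zero sets of the polynomials $\lambda_i$ (for $i=1,\ldots,k$), $\lambda_i-\lambda_j$ (for $(i,j)\in{\mathcal P}_k$), and $\lambda_i\lambda_{j'}-\lambda_{i'}\lambda_j$ (for distinct pairs $(i,j),(i',j')\in{\mathcal P}_k$). Each of these is the zero locus of a non-constant polynomial, hence a closed complex algebraic hypersurface of complex codimension one in $\bbC^k$.

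Next I would decompose each such hypersurface as its smooth locus together with its singular locus. The smooth locus is a complex submanifold of complex dimension $k-1$, equivalently a real submanifold of $\bbR^{2k}$ of real dimension $2k-2$ without boundary. For the linear hypersurfaces the singular locus is empty, while for the quadric $\lambda_i\lambda_{j'}-\lambda_{i'}\lambda_j$ an inspection of the gradient shows that the singular set is contained in a union of coordinate hyperplanes $\{\lambda_\alpha=0\}$ (and even lower-dimensional algebraic sets when indices coincide, in which case the quadric is itself reducible into linear pieces already on the list). Iterating this decomposition terminates after finitely many steps and produces a finite family $N_1,\ldots,N_m$ of smooth submanifolds of $\bbR^{2k}$ without boundary, of real dimension at most $2k-2$, whose union equals $S_k^c$. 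In particular $\dim N_i + 1 \le 2k-1 < 2k$, so the dimensional hypothesis of Lemma~\ref{gp} is met.

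To finish, I would take the straight-line interpolation $\xi(t):=(1-t)\lambda+t\mu$, a $C^\infty$-map $[0,1]\to\bbR^{2k}$ whose endpoints lie in $S_k=\bigcap_{i=1}^m N_i^c$ by hypothesis. Applying Lemma~\ref{gp} to $\xi$ and the family $\{N_i\}_{i=1}^m$ yields a $C^\infty$-path $\zeta:[0,1]\to\bbR^{2k}\simeq\bbC^k$ with $\zeta(0)=\lambda$, $\zeta(1)=\mu$, and $\zeta([0,1])\cap\bigcup_{i} N_i=\emptyset$, i.e.\ $\zeta([0,1])\subset S_k$, as required. The only non-routine step is the finite decomposition of the quadric hypersurfaces into smooth strata, but this is an elementary consequence of looking at the gradients of the four-variable polynomials involved; I do not anticipate any serious obstacle, since everything else is a direct invocation of the transversality-based lemma already proved.
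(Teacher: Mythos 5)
Your proposal is correct and follows essentially the same route as the paper: write $S_k^c$ as a finite union of real-codimension-$2$ submanifolds of $\bbR^{2k}$ without boundary, take the straight segment from $\lambda$ to $\mu$, and invoke Lemma~\ref{gp}. The one place where the paper is a bit slicker is the treatment of the quadrics $v_iv_{j'}-v_{i'}v_j=0$: rather than stratifying into smooth and singular loci and iterating, the paper simply defines $N_{(i,j),(l,m)}$ as the quadric intersected with the open set $\bigcap_j M_j^c$; on that set the defining polynomial is a submersion, so one gets a smooth codimension-$2$ submanifold in one step, and the points removed are already covered by the coordinate hyperplanes $M_j$. Your parenthetical remark that when indices coincide the reducible quadric splits into ``linear pieces already on the list'' is not quite right (e.g.\ $(i,j)=(1,2)$, $(i',j')=(2,1)$ gives $v_1^2-v_2^2$, whose factor $v_1+v_2$ is neither an $M_j$ nor a $Z_{(i,j)}$), but this does not damage the argument, since the smooth locus of any such quadric is still a submanifold of the right dimension and its singular locus lies inside the $M_j$'s; you just should not claim the stratification collapses onto the existing list.
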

\begin{proof}
We identify $\bbC^l$ with $\bbR^{2l}$ for any $l\in\nan$ naturally. For $j=1,\ldots,k$, and $(i,j)\in {\mathcal P}_k$, set
\begin{align*}
M_j:=\ltm v\in\cc^k\mid v_j=0\rtm,\quad  
Z_{(i,j)}:= \ltm v\in\cc^k\mid v_i=v_j\rtm.
\end{align*}
Furthermore, for any $(i,j), (l,m)\in {\mathcal P}_k$ with $(i,j)\neq(l,m)$, set
\begin{align*}
N_{(i,j),(l,m)}:=\ltm
v\in\cc^k\mid v_iv_m=v_jv_l\rtm\cap \lmk \cap_{j=1}^k M_j^c\rmk.
\end{align*}
Clearly $M_j$, $j=1,\ldots,k$ and $Z_{(i,j)}$,  $(i,j)\in {\mathcal P}_k$ are $2k-2$ dimensional submanifolds of $\rr^{2k}$ without boundary.
For any $(i,j), (l,m)\in {\mathcal P}_k$ with $(i,j)\neq(l,m)$, the map $\cc^k \ni v\to v_iv_m-v_jv_l\in \cc$ is a submersion on $ \cap_{j=1}^k M_j^c$.
Therefore, $N_{(i,j),(l,m)}$ is a $2k-2$ dimensional submanifold of $\rr^{2k}$ without boundary.
It is easy to see
\begin{align*}
S_k=\lmk \bigcap_{j=1}^k M_j^c\rmk \cap\lmk \bigcap_{(i,j)\in {\mathcal P}_k} Z_{(i,j)}^c\rmk \cap \lmk \bigcap _{(i,j), (l,m)\in {\mathcal P}_k,(i,j)\neq(l,m)}N_{(i,j),(l,m)}^c\rmk.
\end{align*}

Let $\xi:[0,1]\to \cc^k$ be a $C^{\infty}$-path defined by
\[
\xi(t)=(1-t)\lambda+t\mu, \qquad t\in[0,1],
\]connecting $\lambda$ and $\mu$.
Applying Lemma \ref{gp} to $\xi$ and the finite set of $2k-2$ dimensional submanifolds 
$M_j$, $Z_{(i,j)}$, $N_{(i,j),(l,m)}$
of $\rr^{2k}$ without boundary, we obtain the result. 
\end{proof}
\begin{lem}\label{pol}
Let $\lambda=(\lambda_1,\ldots,\lambda_k)\in\cc^k$,
and $n_1,\ldots,n_k\in\nan$ with $n_i\neq n_j$ if 
$(i,j)\in{\mathcal P}_k$. Let 
$\lambda(t)=(\lambda_1(t),\ldots,\lambda_k(t))\in\cc^k$ be defined by
\begin{align*}
\lambda_i(t):=\lmk \lambda_i+t^{2^{n_i}}
\rmk,
\end{align*}
for $t\in \rr$.
Then there exists $\delta>0$ such that $\lambda(t)\in S_k$ for all $t\in (0,\delta)$.
\end{lem}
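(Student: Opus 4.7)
The plan is to reduce the lemma to showing that each defining condition of $S_k$ becomes a polynomial (in the real variable $t$) that is not identically zero after substitution. Since $\lambda_i(t) = \lambda_i + t^{2^{n_i}}$ is a polynomial, each condition for $\lambda(t) \in S_k$ translates to finitely many algebraic inequalities in $t$:
\begin{enumerate}
\item[(a)] $\lambda_i(t) \ne 0$ for each $i$,
\item[(b)] $\lambda_i(t) - \lambda_j(t) \ne 0$ for each $(i,j) \in \mathcal{P}_k$,
\item[(c)] $\lambda_i(t)\lambda_{j'}(t) - \lambda_j(t)\lambda_{i'}(t) \ne 0$ for distinct $(i,j),(i',j') \in \mathcal{P}_k$ (cross-multiplying the ratio condition, which is legitimate once (a) holds).
\end{enumerate}
The real and imaginary parts of each of these polynomials are real polynomials, so if the complex polynomial is not identically zero its set of real zeros is discrete. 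Since the list of conditions is finite, I can choose $\delta > 0$ smaller than the least positive real zero among them.

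Verifying (a) and (b) is immediate: for (a) the coefficient of $t^{2^{n_i}}$ in $\lambda_i + t^{2^{n_i}}$ is $1$; for (b) the polynomial $(\lambda_i - \lambda_j) + t^{2^{n_i}} - t^{2^{n_j}}$ carries distinct exponents $2^{n_i} \ne 2^{n_j}$ (by $i \ne j$) with nonzero coefficients.

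The main work lies in (c), which I handle by case analysis on how the indices $i,j,i',j'$ can coincide. If $i = i'$ then $P(t) = \lambda_i(t)\bigl(\lambda_{j'}(t) - \lambda_j(t)\bigr)$, a product of two non-identically-zero factors by (a) and (b). The case $j = j'$ is analogous. The ``swap'' case $i = j',\, j = i'$ gives $P(t) = \bigl(\lambda_i(t) - \lambda_j(t)\bigr)\bigl(\lambda_i(t) + \lambda_j(t)\bigr)$, and $\lambda_i(t) + \lambda_j(t) = (\lambda_i + \lambda_j) + t^{2^{n_i}} + t^{2^{n_j}}$ is not identically zero because the two monomials $t^{2^{n_i}}, t^{2^{n_j}}$ have distinct exponents and coefficient $+1$. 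In the remaining subcases (either all four indices distinct, or exactly one of $i = j'$, $j = i'$ holds), I expand
\[
P(t) = (\lambda_i\lambda_{j'} - \lambda_j\lambda_{i'}) + \lambda_{j'}t^{2^{n_i}} + \lambda_i t^{2^{n_{j'}}} - \lambda_{i'}t^{2^{n_j}} - \lambda_j t^{2^{n_{i'}}} + t^{2^{n_i}+2^{n_{j'}}} - t^{2^{n_j}+2^{n_{i'}}}
\]
and argue that the monomial $t^{2^{n_i}+2^{n_{j'}}}$ has an exponent which cannot coincide with any of the other exponents present. The arithmetic fact that makes this work is the uniqueness of binary expansions: a sum $2^p + 2^q$ with $p \ne q$ is never itself a power of $2$, and two such sums $2^p + 2^q$ and $2^{p'} + 2^{q'}$ agree iff $\{p,q\} = \{p',q'\}$. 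Checking the remaining subcases shows that $\{n_i, n_{j'}\} \ne \{n_j, n_{i'}\}$ as multisets and that $2^{n_i}+2^{n_{j'}}$ cannot equal any single power of $2$ already present; hence the top-order coefficient $\pm 1$ is not cancelled and $P$ is not identically zero.

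The main obstacle is the bookkeeping in (c), and more fundamentally recognising that the doubly-exponential choice of perturbation powers $2^{n_i}$ (as opposed to, say, $n_i$) is precisely what prevents any accidental cancellation among the quadratic terms $t^{2^{n_i}+2^{n_{j'}}}$. Once each of the finitely many polynomials is shown to be non-identically zero, the discreteness of their real zero sets delivers a common $\delta > 0$, completing the proof.
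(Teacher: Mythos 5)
Your proposal follows the same basic strategy as the paper: each condition defining $S_k$ becomes a polynomial in $t$ which, thanks to the doubly-exponential choice of powers $2^{n_i}$, is not identically zero; since there are finitely many, a common $\delta>0$ exists. Both proofs rest on the arithmetic fact that $2^p+2^q=2^{p'}+2^{q'}$ forces $\{p,q\}=\{p',q'\}$, which here gives $\{n_i,n_{j'}\}\neq\{n_j,n_{i'}\}$ whenever $(i,j)\neq(i',j')$ in $\mathcal{P}_k$. Your split of condition (c) into the factorizable cases $i=i'$, $j=j'$ and the swap $i=j',j=i'$ versus the rest is extra work: the paper handles \emph{all} cases uniformly by the single observation that $2^{n_i}+2^{n_{j'}}\neq 2^{n_j}+2^{n_{i'}}$, and whichever of these two is larger strictly exceeds all four of $2^{n_i},2^{n_{j'}},2^{n_j},2^{n_{i'}}$, so the leading coefficient of $P$ is $\pm 1$.

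There is one step in your ``remaining subcases'' argument that is not correct as stated. You assert that the exponent $2^{n_i}+2^{n_{j'}}$ ``cannot coincide with any of the other exponents present.'' That is fine when $i\neq j'$, since then $n_i\neq n_{j'}$ and the sum is not a pure power of two. But you include $i=j'$ (with $j\neq i'$) among the remaining subcases, and there $2^{n_i}+2^{n_{j'}}=2^{n_i+1}$ \emph{is} a power of two and can equal $2^{n_j}$ or $2^{n_{i'}}$ (e.g. $n_i=1$, $n_j=2$). The monomial you single out may therefore collide with a linear term; if the corresponding $\lambda$-coefficient happens to be $\pm 1$, the cancellation you rule out could in fact occur at that degree. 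The conclusion is nonetheless correct, and the repair is exactly the paper's degree argument: take the \emph{larger} of $2^{n_i}+2^{n_{j'}}$ and $2^{n_j}+2^{n_{i'}}$ (they are distinct); that exponent, being a sum of two positive powers of two, is strictly greater than all four single powers, so it is the unique top-degree term of $P$ and its coefficient $\pm 1$ cannot be cancelled by anything. With this correction your argument is complete, and in fact all subcases can be dispatched by this one observation, making the case split unnecessary.
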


\begin{proof}
First we show that for $n,m,l,k\in\nan$,
\[
2^n+2^m=2^l+2^k,
\]
implies
either $n=l$ and $m=k$,
or $n=k$ and $m=l$.
To see this, we may assume
$n\ge m$, $l\ge k$.
We have 
\[
2^{m}(2^{n-m}+1)=2^{k}(2^{l-k}+1).
\]
If $n\neq m$ and $k\neq l$, then 
$m=k$.
This implies $n-m=l-k$ that $n=l$.
If $n=m$, then 
we have $2^{m+1}=2^{k}(2^{l-k}+1)$.
Therefore, $l=k$ and $k=m=n$.
Similarly, if $l=k$, then $n=m=k=l$.

Let $\lambda_1,\ldots,\lambda_k\in\cc$,
and $n_1,\ldots,n_k\in\nan$ with $n_i\neq n_j$ if 
$(i,j)\in{\mathcal P}_k$.

For each $i=1,\ldots,k$, by the analiticity of
polynomial,
there exists a $\delta_i>0$
such that
\begin{align*}
\lambda_i+t^{2^{n_i}}\neq 0,
\end{align*}
for all $t\in (0,\delta_i)$. 
For any $(i,j)\in{\mathcal P}_k$,
there exists a $\delta_{ij}>0$ such that
\begin{align*}
\lambda_i+t^{2^{n_i}}\neq
\lambda_j+t^{2^{n_j}}
\end{align*}
for all $t\in (0,\delta_{ij})$, as 
$n_i\neq n_j$.
For any $(i,j),(i',j')\in{\mathcal P}_k$ with $(i,j)\neq(i',j')$,
from the above claim, we have 
\[
2^{n_i}+2^{n_{j'}}\neq 
2^{n_j}+2^{n_{i'}}.
\]
Therefore, the polynomial
\begin{align*}
\lmk
\lambda_i+t^{2^{n_i}}\rmk
\lmk
\lambda_{j'}+t^{2^{n_{j'}}}
\rmk-
\lmk
\lambda_j+t^{2^{n_j}}\rmk
\lmk
\lambda_{i'}+t^{2^{n_{i'}}}
\rmk
\end{align*}
is not zero.
Hence there exists a $\delta_{(ij),(i'j')}>0$
such that 
\begin{align*}
\lmk
\lambda_i+t^{2^{n_i}}\rmk
\lmk
\lambda_{j'}+t^{2^{n_{j'}}}
\rmk\neq
\lmk
\lambda_j+t^{2^{n_j}}\rmk
\lmk
\lambda_{i'}+t^{2^{n_{i'}}}
\rmk
\end{align*}
for 
$t\in(0,\delta_{(ij),(i'j')})$.
Hence setting $\delta:=\min\{\delta_i,\delta_{i,j},
\delta_{(i,j),(i',j')}\}$,
we have $\lambda(t)\in S_k$ for all $t\in (0,\delta)$.
\end{proof}

We close this section with a simple lemma.
\begin{lem}\label{sf}
Let $F$ be a finite subset of $\cc$ and let $\chi,\eta\in\cc$ with 
$\chi\neq\eta$. Then there exists a $C^{\infty}$-map $\xi:[0,1]\to \cc$
with $\xi(0)=\chi$, $\xi(1)=\eta$ and $\xi(t)\in F^c$
for all $t\in(0,1)$, such that
\[
\lv
\xi(t)-\chi
\rv\le 2\lv \eta-\chi \rv,\quad
t\in[0,1].
\]
\end{lem}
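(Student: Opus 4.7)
\textbf{Proof proposal for Lemma \ref{sf}.}
The plan is to perturb the straight-line segment from $\chi$ to $\eta$ by a small smooth bump transverse to the segment, and to choose the perturbation parameter so as to avoid the finite set $F$. Set $v := i(\eta - \chi)$, which is nonzero and not a real multiple of $\eta - \chi$, and consider the one-parameter family of $C^{\infty}$-paths
\begin{equation*}
\xi_s(t) := (1-t)\chi + t\eta + s\, t(1-t)\, v, \qquad t \in [0,1],\ s \in \bbR.
\end{equation*}
Each $\xi_s$ satisfies $\xi_s(0) = \chi$ and $\xi_s(1) = \eta$, and the bump factor $t(1-t)$ gives the uniform estimate
\begin{equation*}
|\xi_s(t) - \chi| \leq t\,|\eta - \chi| + t(1-t)|s|\,|v| \leq |\eta - \chi| + \tfrac{1}{4}|s|\,|v|.
\end{equation*}
Hence, choosing $|s| \leq 4|\eta - \chi|/|v|$ automatically secures the bound $|\xi_s(t) - \chi| \leq 2|\eta - \chi|$.

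It remains to show that, among such admissible $s$, we can find one for which $\xi_s\bigl((0,1)\bigr) \cap F = \emptyset$. Fix $p \in F$; I will show that only finitely many $s \in \bbR$ are ``bad'' for $p$. The equation $\xi_s(t) = p$ in $t \in (0,1)$ can be solved for $s$ to give
\begin{equation*}
s = \phi_p(t) := \frac{p - (1-t)\chi - t\eta}{t(1-t)\, v},
\end{equation*}
a rational function of $t$ on $(0,1)$. Because $v = i(\eta - \chi)$, the coefficient of $t$ in the numerator, namely $-(\eta - \chi)/v = i$, has nonzero imaginary part, so $\Im \phi_p$ is a nonconstant real-analytic function on $(0,1)$ with at most finitely many zeros. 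The bad values of $s$ for $p$ lie in $\phi_p^{-1}(\bbR)$, and the argument just given shows this set is finite. Taking the union over the finite set $F$, the total bad set $B \subset \bbR$ is finite. Any $s \in \bigl[-4|\eta-\chi|/|v|, 4|\eta-\chi|/|v|\bigr] \setminus B$ (which is nonempty) yields $\xi := \xi_s$ with all required properties.

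The only step requiring care is the claim that $\Im \phi_p$ is not identically zero; this is ensured by the specific choice $v = i(\eta-\chi)$, which forces $(\eta - \chi)/v \notin \bbR$. Everything else — smoothness of $\xi_s$, prescribed endpoints, and the norm bound — is immediate from the construction.
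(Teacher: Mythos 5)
Your proof is correct and is essentially a detailed working-out of the paper's one-line proof, which simply asserts that the straight segment from $\chi$ to $\eta$ can be modified to avoid $F$. One small expositional caveat: a nonconstant real-analytic function on an \emph{open} interval need not have finitely many zeros (they can accumulate at the endpoints), so the justification should instead rest on the observation that, with $v=i(\eta-\chi)$, one computes $\phi_p(t)=\bigl(w+it\bigr)/\bigl(t(1-t)\bigr)$ with $w:=(p-\chi)/v$, hence $\Im\phi_p(t)=\bigl(\Im w + t\bigr)/\bigl(t(1-t)\bigr)$ is a nonzero rational function vanishing at most once in $(0,1)$; everything else in your argument then goes through verbatim.
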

\begin{proof}
This can be done by a modification of the path $[0,1]\ni t\mapsto (1-t)\chi+t\zeta$ avoiding $F$.
\end{proof}
%
%
\section{Perturbation of Jordan matrices}

Here, we consider matrices $A\in\Mat_k(\bbC)$ that are close to a Jordan matrix in the sense that $A$ has the same block form as a Jordan matrix, but in each block, all diagonal elements are different. We exhibit explicitly 
the matrix diagonalizing it.
\begin{lem}\label{jd}
Let $k\in\nan$ and $n_1,\ldots,n_M\in\nan$
such that $n_1+\cdots+n_M=k$.
Decompose $\cc^k=\cc^{n_1}\oplus\cdots\oplus \cc^{n_M}$
and let $\{f_{\alpha}^{(l)}\}_{\alpha=1}^{n_l}$ 
be an orthonormal basis
of $\cc^{n_l}$, for each $l=1,\ldots,M$.
Let $\{\lambda_{\alpha}^{(l)}\}_{\alpha=1,\ldots, n_l,\;l=1,\ldots,M}$
be distinct elements in $\cc$. 
Define
\begin{align*}
J_l:=\sum_{\alpha=1}^{n_l}
\lambda_{\alpha}^{(l)}\ket{f_{\alpha}^{(l)}}
\bra{f_{\alpha}^{(l)}}
+\sum_{\alpha=2}^{n_l}
\ket{f_{\alpha-1}^{(l)}}\bra{f_{\alpha}^{(l)}},
\end{align*}
for $l=1,\ldots,M$, and let $J:=J_1\oplus\cdots\oplus J_M
$. Then there exists a diagonal matrix $D$ and an invertible matrix $P$ such that
\begin{align*}
J=PDP^{-1}.
\end{align*}
Here $D=D_1\oplus\cdots \oplus D_M$, 
$P=P_1\oplus\cdots\oplus P_M$ with respect to the decomposition
 $\cc^k=\cc^{n_1}\oplus\cdots\oplus \cc^{n_M}$, and
\begin{align*}
D_l=\sum_{\alpha=1}^{n_l}
\lambda_{\alpha}^{(l)}\ket{f_{\alpha}^{(l)}}
\bra{f_{\alpha}^{(l)}},\quad
P_l=\sum_{\alpha,\beta=1}^{n_l}
c_{\beta\alpha}^{(l)}\ket{f_{\beta}^{(l)}}\bra{f_{\alpha}^{(l)}},\quad
P_l^{-1}=\sum_{\alpha,\beta=1}^{n_l}
d_{\alpha\beta}^{(l)}\ket{f_{\alpha}^{(l)}}\bra{f_{\beta}^{(l)}}
\end{align*}
with
\begin{align*}
c_{\beta\alpha}^{(l)}
=\begin{cases}
\prod_{j=\beta}^{\alpha-1}\frac{1}{\lambda_{\alpha}^{(l)}- \lambda_j^{(l)}} &\alpha>\beta \\
1& \alpha=\beta\\
0& \alpha<\beta
\end{cases}
\,,\qquad
d_{\alpha\beta}^{(l)}
=
\begin{cases}
\prod_{j=\alpha+1}^{\beta}\frac{1}{\lambda_{\alpha}^{(l)}-\lambda_j^{(l)}}& \alpha<\beta \\
1& \alpha=\beta\\
0& \alpha>\beta
\end{cases}\,.
\end{align*}
\end{lem}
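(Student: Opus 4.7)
The matrix $J$ is block diagonal, so $J=PDP^{-1}$ reduces to the corresponding identity for each block, $J_l=P_lD_lP_l^{-1}$. I therefore fix one block and drop the superscript $(l)$: I consider the single $n\times n$ matrix $\tilde J=\sum_{\alpha=1}^{n}\lambda_\alpha\ket{f_\alpha}\bra{f_\alpha}+\sum_{\alpha=2}^{n}\ket{f_{\alpha-1}}\bra{f_\alpha}$ with $\lambda_1,\ldots,\lambda_n$ pairwise distinct, and I aim to show that its columns-of-$P$ vectors $v_\alpha:=\sum_{\beta=1}^{\alpha}c_{\beta\alpha}f_\beta$ form an eigenbasis, with $\tilde Jv_\alpha=\lambda_\alpha v_\alpha$.

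The eigenvector computation is direct: $\tilde Jf_\beta=\lambda_\beta f_\beta$ for $\beta=1$, and $\tilde Jf_\beta=\lambda_\beta f_\beta+f_{\beta-1}$ for $\beta\geq 2$. Expanding and collecting the coefficient of $f_\gamma$ in $(\tilde J-\lambda_\alpha)v_\alpha$ gives, for each $1\le\gamma<\alpha$, the recursion
\begin{equation*}
c_{\gamma+1,\alpha}=(\lambda_\alpha-\lambda_\gamma)\,c_{\gamma\alpha},
\end{equation*}
together with the boundary condition $c_{\alpha\alpha}=1$ (and the convention $c_{\beta\alpha}=0$ for $\beta>\alpha$). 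Unrolling the recursion downward from $\beta=\alpha$ yields precisely
\begin{equation*}
c_{\beta\alpha}=\prod_{j=\beta}^{\alpha-1}\frac{1}{\lambda_\alpha-\lambda_j},\qquad \beta<\alpha,
\end{equation*}
which is the formula in the statement. Since the $\lambda_\alpha$ are distinct, the $v_\alpha$ are linearly independent, so $P_l$ is invertible and $J_l=P_lD_lP_l^{-1}$.

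It remains to verify that the claimed matrix with entries $d_{\alpha\beta}$ is $P_l^{-1}$. Both $P_l$ and the candidate inverse $Q_l:=\sum_{\alpha,\beta}d_{\alpha\beta}\ket{f_\alpha}\bra{f_\beta}$ are upper triangular with ones on the diagonal, hence so is their product, and I only need to check $\sum_{\gamma}c_{\beta\gamma}d_{\gamma\alpha}=0$ for $\beta<\alpha$. Restricted to $\beta\le\gamma\le\alpha$ the sum becomes
\begin{equation*}
\sum_{\gamma=\beta}^{\alpha}\Bigl(\prod_{j=\beta}^{\gamma-1}\tfrac{1}{\lambda_\gamma-\lambda_j}\Bigr)\Bigl(\prod_{j=\gamma+1}^{\alpha}\tfrac{1}{\lambda_\gamma-\lambda_j}\Bigr) =\sum_{\gamma=\beta}^{\alpha}\prod_{\substack{j=\beta\\ j\ne\gamma}}^{\alpha}\frac{1}{\lambda_\gamma-\lambda_j},
\end{equation*}
which is the well-known partial-fraction expansion at infinity of $\prod_{j=\beta}^{\alpha}(x-\lambda_j)^{-1}$: it equals the coefficient of $x^{-1}\cdot x^{\alpha-\beta}=x^{\alpha-\beta-1}$ in a rational function that vanishes to order $\alpha-\beta+1$ at infinity, which is $0$ whenever $\alpha>\beta$. (Equivalently, it equals the sum of residues of a rational function with zero residue at infinity.) This closes the identity $P_lQ_l=I$.

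The only step that is even mildly nontrivial is this last identity for $P_l^{-1}$; everything else is a one-line recursion. If the partial-fraction argument feels too slick, the identity can alternatively be proved by induction on $\alpha-\beta$: the recursion $c_{\gamma+1,\alpha}=(\lambda_\alpha-\lambda_\gamma)c_{\gamma\alpha}$ together with the analogous recursion $d_{\alpha,\beta}=(\lambda_\alpha-\lambda_\beta)^{-1}d_{\alpha,\beta-1}$ for $\beta>\alpha$ (with $d_{\alpha\alpha}=1$) reduces the product identity at order $(\beta,\alpha)$ to the one at order $(\beta,\alpha-1)$, and the induction bottoms out at the trivial case $\alpha=\beta+1$.
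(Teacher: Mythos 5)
Your proof is correct, and it is exactly the verification the paper's one-line proof (``This follows by the matrix diagonalizations of the blocks $J_l$'') is appealing to: the recursion $c_{\gamma+1,\alpha}=(\lambda_\alpha-\lambda_\gamma)c_{\gamma\alpha}$ pins down the eigenvectors, and the residue-at-infinity (equivalently Lagrange-interpolation) identity $\sum_{\gamma=\beta}^{\alpha}\prod_{j\neq\gamma}(\lambda_\gamma-\lambda_j)^{-1}=0$ for $\alpha>\beta$ verifies the stated inverse. The paper simply omits these computations, so you have filled in the proof rather than taken a different route.
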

\begin{proof}
This follows by the matrix diagonalizations of the blocks $J_l$.
\end{proof}
%
%
\section{Primitive maps}
In this section we collect known results about positive maps on $\Mat_k(\cc)$. We refer the reader to the literature for proofs of the stated theorems, as for example to the notes~\cite{Wolf:2012aa}, and references therein.
\begin{thm}\label{pm1}
Let $T:\Mat_k(\cc)\to \Mat_k(\cc)$ be a
positive linear map. The following properties are equivalent:
\begin{enumerate}
\item There is no nontrivial 
orthogonal projection $P$ such that
$T(P\Mat_k(\cc)P)\subset
P\Mat_k(\cc)P$,
\item For any
nonzero $A\geq0$ and $t>0$,
$\exp(tT)(A)>0$
\end{enumerate}
\end{thm}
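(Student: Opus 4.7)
The plan is to prove both implications by contrapositive, handling the easier direction (ii)$\Rightarrow$(i) first. Given a nontrivial projection $P$ with $T(P\Mat_k(\cc)P)\subset P\Mat_k(\cc)P$, I take $A:=P$ and observe by induction that $T^n(P)\in P\Mat_k(\cc)P$ for every $n\ge 0$, so the series $\exp(tT)(P)=\sum_{n\ge 0}\tfrac{t^n}{n!}T^n(P)$ lies in the finite-dimensional subspace $P\Mat_k(\cc)P$. Consequently $(1-P)\exp(tT)(P)(1-P)=0$ with $1-P\neq 0$, so the self-adjoint positive element $\exp(tT)(P)$ has a nontrivial kernel and fails to be strictly positive, contradicting~(ii).

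For the harder direction (i)$\Rightarrow$(ii), suppose contrapositively that there exist a nonzero $A\ge 0$ and a $t_0>0$ such that $X:=\exp(t_0 T)(A)$ is not strictly positive. I take $Q$ to be the support projection of $X$ and claim it is a nontrivial projection with $T(Q\Mat_k(\cc)Q)\subset Q\Mat_k(\cc)Q$, contradicting~(i). Nontriviality: $X\neq 0$ since the termwise-nonnegative series for $X$ would otherwise force each $T^n(A)=0$ and in particular $A=0$; hence $Q\neq 0$, and $Q\neq 1$ is the failure of strict positivity. The key observation is that, because every $T^n(A)\ge 0$ and the coefficients $t_0^n/n!$ are strictly positive, one has $\ker X=\bigcap_{n\ge 0}\ker T^n(A)$; in particular every $T^n(A)$ has support $\le Q$, and replacing $A$ by $T(A)$ in the same identity shows that $T(X)=\exp(t_0 T)(T(A))$ also has support $\le Q$. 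Now for any $Y\ge 0$ in $Q\Mat_k(\cc)Q$: the strict positivity of $X$ on $Q\cc^k$ gives $X\ge\varepsilon Q$ for some $\varepsilon>0$, while $Y\le\|Y\|Q$, so $Y\le(\|Y\|/\varepsilon)X$. Applying $T$ yields $0\le T(Y)\le(\|Y\|/\varepsilon)T(X)$, which forces $T(Y)$ to have support $\le Q$, i.e.\ $T(Y)\in Q\Mat_k(\cc)Q$. Linearity (via Cartesian and Jordan decompositions for arbitrary $Y\in Q\Mat_k(\cc)Q$) completes the inclusion.

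The main obstacle is picking the right projection in the harder direction. The naive choice $Q:=$ support of $A$ does not work in general, since the $T$-orbit of $A$ may escape the corner $Q\Mat_k(\cc)Q$. Using instead the support of $X=\exp(t_0 T)(A)$ absorbs the entire forward orbit of $A$, and the strict positivity of $X$ on $Q\cc^k$ is exactly the ingredient that allows one to bound an arbitrary positive element of the corner by a multiple of $X$ and thereby transfer orbit-level invariance ($T(X)$ has support $\le Q$) to full corner-level invariance.
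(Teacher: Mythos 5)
Your proof is correct. Note, however, that the paper does not actually give a proof of Theorem~\ref{pm1}: the appendix explicitly defers to the literature (``We refer the reader to the literature for proofs of the stated theorems, as for example to the notes~\cite{Wolf:2012aa}''), so there is no in-paper argument to compare against. Your self-contained argument is sound in both directions. The (ii)$\Rightarrow$(i) step via the invariant corner $P\Mat_k(\cc)P$ being closed under the exponential series is clean, and in (i)$\Rightarrow$(ii) the key idea -- taking $Q$ to be the support of $X=\exp(t_0T)(A)$ rather than of $A$, using the positivity of all the coefficients $t_0^n/n!$ to identify $\ker X=\bigcap_n\ker T^n(A)$, and then transferring from ``$T(X)$ has support $\le Q$'' to full corner-invariance via the two-sided bound $\varepsilon Q\le X$ and $Y\le\|Y\|Q$ -- is exactly what is needed. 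This is close in spirit to the textbook argument, which more commonly runs the induction with $(1+T)^m$ and tracks the support projection growing at each step; your variant works directly with $\exp(t_0T)$ and is arguably better adapted to the statement at hand, which is already phrased in terms of the semigroup. Either route buys the same thing: a finite-dimensional Perron--Frobenius-type dichotomy, proved without reference to complete positivity.
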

\begin{rem}
A positive map satisfying the above equivalent conditions is said to be irreducible.
\end{rem}
We say that $\lambda$ is a nondegenerate eigenvalue of $T$
if the corresponding projection $P_{\{\lambda\}}^T$ is one dimensional. Irreducible positive maps satisfy the following properties.
\begin{thm}\label{pm2}
Let
$T:\Mat_k(\cc)\to \Mat_k(\cc)$ be a nonzero irreducible positive linear 
map.
Then the spectral radius $r_T$ of $T$ is a 
strictly positive, non-degenerate eigenvalue with a strictly positive 
eigenvector $h_T$:
\begin{equation*}
T(h_T) = r_T h_T >0.
\end{equation*}
\end{thm}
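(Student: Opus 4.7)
The statement is a Perron--Frobenius theorem for irreducible positive maps on $\Mat_k(\cc)$ endowed with the cone $\Mat_k(\cc)_+$. I would proceed in three stages: (a) produce a nonzero Perron eigenvector at $r_T$ via a finite-dimensional Krein--Rutman argument, (b) upgrade nonnegativity to strict positivity and rule out $r_T=0$ using the exponential form of irreducibility given by condition (ii) of Theorem \ref{pm1}, and (c) prove geometric and algebraic simplicity by a flooding argument combined with the explicit action of $\exp(tT)$ on eigenvectors and Jordan chains.

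\emph{Existence and strict positivity.} For (a), standard finite-dimensional Perron--Frobenius produces $h \in \Mat_k(\cc)_+\setminus\{0\}$ with $T(h)=r_T h$: pick any nonzero $A\geq 0$, observe that $(z-T)^{-1}(A)= \sum_{n\geq 0} z^{-n-1} T^n(A)$ lies in $\Mat_k(\cc)_+$ for real $z>r_T$, normalize, and extract a limit as $z\to r_T^{+}$ (the resolvent blows up since $r_T\in\sigma(T)$). For (b), if $r_T=0$ then $T(h)=0$ and hence $\exp(tT)(h)=h$; by irreducibility this must be strictly positive, so $h$ is invertible in $\Mat_k(\cc)$. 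The order bound $A \leq \nn{A h^{-1}}\, h$ valid for every $A\geq 0$ then forces $T(A)=0$ for all $A\geq 0$, whence $T=0$, contrary to hypothesis. Thus $r_T>0$, and the identity $\exp(tT)(h) = e^{t r_T} h$ combined with irreducibility immediately yields $h>0$.

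\emph{Simplicity.} Since $T$ preserves the Hermitian part of $\Mat_k(\cc)$, it suffices to work with Hermitian eigenvectors and Hermitian generalized eigenvectors. For any Hermitian $g$ with $T(g)=r_T g$, set $\lambda_0 := \inf\{\lambda \in \rr : g+\lambda h \geq 0\}$, which is finite because $h$ is invertible; the matrix $f:=g+\lambda_0 h$ is positive semidefinite but not strictly positive, and satisfies $T(f)=r_T f$, hence $\exp(tT)(f)=e^{t r_T} f$. If $f\neq 0$, irreducibility forces $f>0$, a contradiction, so $g=-\lambda_0 h$ and $\ker(T-r_T)$ is one-dimensional. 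To exclude a Jordan block at $r_T$, suppose toward contradiction that there is a Hermitian $k\not\propto h$ with $T(k)=r_T k + h$; a direct series computation gives $\exp(tT)(k)=e^{t r_T}(k+th)$. Set $\mu_0 := \sup\{\mu \in \rr : \mu h - k \geq 0\}$, finite with $\mu_0 h - k \neq 0$, and pick $\xi \in \ker(\mu_0 h - k)$; then $\braket{\xi}{\exp(tT)(\mu_0 h-k)\xi}=-t\, e^{t r_T}\braket{\xi}{h\xi}<0$ for $t>0$, contradicting the strict positivity of $\exp(tT)(\mu_0 h-k)$ demanded by irreducibility. This last step is the main obstacle: geometric simplicity is a textbook flooding argument, but ruling out the Jordan block genuinely requires both the closed-form semigroup action on a Jordan chain and the strong exponential form (ii) of irreducibility in Theorem \ref{pm1}, rather than merely the absence of an invariant corner $P\Mat_k(\cc)P$.
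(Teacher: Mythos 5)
The paper does not prove Theorem~\ref{pm2}: Appendix~C explicitly states that these facts about positive maps are collected from the literature, with \cite{Wolf:2012aa} given as a reference. There is therefore no internal argument to compare with, and the only question is whether your proof is correct and self-contained. Your route---a Krein--Rutman existence argument for a Perron eigenvector in the cone, followed by the semigroup form of irreducibility (condition (ii) of Theorem~\ref{pm1}) to obtain strict positivity, $r_T>0$, geometric simplicity, and the absence of a Jordan block---is the standard one and, with two corrections, it is sound.

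First, in the Jordan-block step you set $\mu_0 := \sup\{\mu \in \rr : \mu h - k \geq 0\}$. Since $h>0$ is invertible, $\{\mu : \mu h - k \geq 0\}$ is the half-line $[\lambda_{\max}(h^{-1/2}kh^{-1/2}),\infty)$ and that supremum is $+\infty$; you want the \emph{infimum}. With $\mu_0 = \lambda_{\max}(h^{-1/2}kh^{-1/2})$ the matrix $\mu_0 h - k$ is positive semidefinite, singular, and nonzero (as $k\notin\cc h$), and your computation $\braket{\xi}{\exp(tT)(\mu_0 h-k)\xi} = -t\,e^{tr_T}\braket{\xi}{h\xi} < 0$ for $\xi\in\ker(\mu_0 h-k)\setminus\{0\}$ and $t>0$ then cleanly contradicts Theorem~\ref{pm1}(ii). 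Second, the parenthetical ``(the resolvent blows up since $r_T\in\sigma(T)$)'' in the existence step assumes precisely what has to be shown: that the spectral radius of a positive map is itself a spectral point is the nontrivial content of Perron--Frobenius, not an input. One way to close this is to note that $z\mapsto\Tr\bigl[(z-T)^{-1}(1)\bigr]=\sum_{n\ge0}z^{-n-1}\Tr\bigl[T^n(1)\bigr]$ is a rational function whose Laurent coefficients at infinity are nonnegative, and that the radius of convergence of this series in $1/z$ is exactly $1/r_T$ because positive maps satisfy $\lV T^n\rV \le 4\lV T^n(1)\rV \le 4\Tr\bigl[T^n(1)\bigr]$; Pringsheim's theorem then places a pole at $z=r_T$. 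Either supply such an argument or cite the finite-dimensional Krein--Rutman theorem outright rather than sketch a circular step. The remaining pieces---reduction to Hermitian (generalized) eigenvectors using that positive maps are Hermiticity-preserving, the flooding argument for geometric simplicity, and the semigroup identity $\exp(tT)(k)=e^{tr_T}(k+th)$ along a putative Jordan chain---are all correct.
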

\begin{thm}\label{pm3}
Let $T:\Mat_k(\cc)\to \Mat_k(\cc)$ be a unital completely positive map and let
\begin{align*}
T(A)=\sum_{i=1}^{n}B_iAB_i^* 
\end{align*}
be its Kraus decomposition. Let $\bbB := (B_1,\ldots,B_n)$. Then the following properties are equivalent:
\begin{enumerate}
\item
There exists $l\in\nan$ such that $T^l(A)>0$
for any nonzero $A\geq 0$,
\item
There exists a unique faithful $T$-invariant state $\varphi$, and it satisfies
 \[
\lim_{l\to\infty} T^l(A)=\varphi(A) 1,\quad
A\in \Mat_k(\cc),
\]
\item
$\sigma(T)\cap\{z\in\cc:|z|\ge 1\}=\{1\}$, $1$ is a nondegenerate eigenvalue of $T$, and there exists a faithful $T$-invariant state,
\item
There exists $m\in\nan$
such that 
${\mathcal K}_m(\bb)=\Mat_k(\cc)$, where ${\mathcal K}_m(\bb)$ was defined in~(\ref{KB}),
\item
There exists $m\in\nan$
such that 
${\mathcal K}_l(\bb)=\Mat_k(\cc)$, for all $l\ge m$.
\end{enumerate}\end{thm}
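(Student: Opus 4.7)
My plan is to establish the chain (iv)$\Leftrightarrow$(v) together with (iv)$\Rightarrow$(i)$\Rightarrow$(ii)$\Leftrightarrow$(iii)$\Rightarrow$(iv), which closes all five equivalences. Most of these implications are short; the step (i)$\Rightarrow$(ii) is where I expect the main obstacle to sit.

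For the easy pieces: (v)$\Rightarrow$(iv) is immediate. The equivalence (ii)$\Leftrightarrow$(iii) reduces to Dunford's spectral decomposition $T = P_{\{1\}}^T + R$ with $\sigma(R)\subset\{|z|<1\}$, so $T^l = P_{\{1\}}^T + R^l \to P_{\{1\}}^T$, and the nondegenerate $P_{\{1\}}^T$ must take the form $P_{\{1\}}^T(A) = \varphi(A)\cdot 1$ where $\varphi$ is the unique $T$-invariant state. The implication (iv)$\Rightarrow$(v) is a one-step induction using unitality: $\sum_\mu B_\mu B_\mu^* = 1$ forces $\sum_\mu \ran B_\mu = \cc^k$, and then $\caK_{l+1}(\bb) = \sum_\nu B_\nu \caK_l(\bb) = \sum_\nu B_\nu \Mat_k(\cc)$, which is the set of matrices whose columns lie in $\sum_\nu \ran B_\nu = \cc^k$, namely $\Mat_k(\cc)$. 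For (iv)$\Rightarrow$(i), suppose $v^* T^m(A) v = 0$ for a nonzero $v$ and nonzero $A\ge 0$; expanding gives $B_{\mu_m}\cdots B_{\mu_1}$-products (call them $B_{\mu^{(m)}}$) satisfying $B_{\mu^{(m)}}^* v \in \ker A$ for every $\mu^{(m)}$. Since $\caK_m(\bb) = \Mat_k(\cc)$, also the adjoints span $\Mat_k(\cc)$, and $\Mat_k(\cc)\cdot v = \cc^k$ for any nonzero $v$; this contradicts $\ker A \subsetneq \cc^k$. For (iii)$\Rightarrow$(iv), I would invoke the Choi--Jamiolkowski isomorphism: the rank of the Choi matrix of a CP map equals the dimension of the span of any of its Kraus tuples, so $\dim\caK_l(\bb)$ equals this rank for $T^l$. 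Faithfulness of $\varphi$ ensures that the canonical Kraus operators $|e_i\ra\la e_j|\rho_\varphi^{1/2}$ of $P_{\{1\}}^T$ span all of $\Mat_k(\cc)$, so its Choi matrix has full rank $k^2$. Lower semicontinuity of matrix rank combined with $T^l \to P_{\{1\}}^T$ then forces $\caK_l(\bb) = \Mat_k(\cc)$ for all sufficiently large $l$.

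The main obstacle is (i)$\Rightarrow$(ii). Theorem~\ref{pm1} immediately gives irreducibility of $T$, Theorem~\ref{pm2} then gives a simple positive eigenvalue $r_T$ with strictly positive eigenvector $h_T$, and unitality pins down $r_T=1$ and $h_T=1$. What remains is the non-trivial fact that there are no other eigenvalues of modulus one. For this I would appeal to the Evans--H{\o}egh-Krohn structure theorem: for an irreducible CP map, the peripheral spectrum forms a finite cyclic group $\{e^{2\pi i j/d}\}_{j=0}^{d-1}$, and there is a compatible orthogonal decomposition $\cc^k = \bigoplus_{j=0}^{d-1} V_j$ with associated projections $P_j$ such that $T$ cyclically permutes the corners, $T(P_j\Mat_k(\cc) P_j)\subset P_{j+1 \bmod d}\Mat_k(\cc) P_{j+1 \bmod d}$. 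If $d>1$, iterating yields $T^l(P_0)\in P_{l \bmod d}\Mat_k(\cc) P_{l \bmod d}$ for every $l$, a proper compressed subalgebra, so $T^l(P_0)$ has nontrivial kernel for every $l$---contradicting (i). Hence $d = 1$, the peripheral spectrum reduces to $\{1\}$, and spectral decomposition produces (ii). This is the main obstacle because it requires the full Perron--Frobenius structure of peripheral spectra of irreducible CP maps rather than merely the facts stated in Theorems \ref{pm1}--\ref{pm2}.
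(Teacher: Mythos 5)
The paper itself contains no proof of this theorem: the appendix where it appears explicitly states that these are ``known results about positive maps on $\Mat_k(\cc)$'' and defers to the literature (Wolf's lecture notes and references therein), so there is no internal proof to compare against. Evaluated on its own merits, your cycle (iv)$\Leftrightarrow$(v), (iv)$\Rightarrow$(i)$\Rightarrow$(ii)$\Leftrightarrow$(iii)$\Rightarrow$(iv) is a correct and complete scheme, and the individual steps go through. The (iii)$\Rightarrow$(iv) argument via Choi rank is clean: $\dim\caK_l(\bb)$ is exactly the Choi rank of $T^l$, faithfulness of $\varphi$ makes the Choi rank of the limit map $A\mapsto\varphi(A)1$ equal to $k^2$, and lower semicontinuity of rank together with the a priori bound $\dim\caK_l(\bb)\le k^2$ forces equality for all large $l$. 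You also correctly diagnose (i)$\Rightarrow$(ii) as the hard step: Theorems~\ref{pm1} and~\ref{pm2} only give a nondegenerate spectral radius with a strictly positive fixed point, and triviality of the rest of the peripheral spectrum genuinely requires the cyclic (Evans--H{\o}egh-Krohn/Perron--Frobenius) structure of irreducible CP maps, which is an external ingredient of the same nature as what the paper cites; your contradiction of a period $d>1$ with (i) via the cyclically permuted compressed corners is correct. One detail you leave implicit: to close (ii) you must also verify that the state $\varphi$ with $T^l\to\varphi(\cdot)1$ is faithful, but this follows from (i) directly — if $A\geq 0$ is nonzero then $T^{l_0}(A)>0$ for some $l_0$, hence $T^{l_0}(A)\geq c\cdot 1$ for some $c>0$, and positivity with unitality give $T^l(A)\geq c\cdot 1$ for all $l\geq l_0$, so $\varphi(A)\geq c>0$. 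With that filled in, your sketch is a full proof where the paper offers only a citation.
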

\begin{rem}\label{Def:Primitive}
A unital completely positive map satisfying the above
(equivalent)
conditions is said to be primitive.
\end{rem}

\section{Continuous paths of positive maps}\label{sec:PathMaps}
We gather simple results for spectral quantities of elements in $\caT_k$ or $\caT_k$, see Section~\ref{sec:result}, and on continuous paths of such maps. 
\begin{lem}\label{lem : left and right eigenvectors}
For any $T\in{\caT_k}$,
\begin{equation*}
e_T=P_{\{r_T\}}^T(1)
\end{equation*}
is a nonzero element in $\mk_+$. Moreover, there exists a unique $r_T^{-1}T$-invariant state $\varphi_T$ on $\mk$ such that for any $a\in \mk$, 
\begin{equation*}
P_{\{r_T\}}^T(a)=\varphi_T(a)e_T,
\end{equation*}
and $\varphi_T(e_T)=1$.
\end{lem}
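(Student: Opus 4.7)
My plan is to obtain all the stated properties by combining Theorem~\ref{pm2} applied to both $T$ and its Hilbert--Schmidt adjoint. Since $T\in\caT_k$ is primitive and hence irreducible (Theorem~\ref{pm1}), and has spectral radius $r_T=1$ by definition of $\caT_k$, Theorem~\ref{pm2} yields a strictly positive $h_T\in\mk$ with $T(h_T)=h_T$, and the eigenvalue $1$ is non-degenerate. In particular, $P^T_{\{1\}}$ is a rank-one idempotent on $\mk$ with range $\bbC h_T$, so there is a unique linear functional $\psi$ on $\mk$ with $P^T_{\{1\}}(a)=\psi(a)h_T$ for every $a\in\mk$.

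To identify $\psi$ explicitly, I would apply the same theorem to the Hilbert--Schmidt adjoint $T^\sharp$ defined by $\Tr_{\mk}(T^\sharp(a)b)=\Tr_{\mk}(a T(b))$. Since $T$ is positive and irreducible, so is $T^\sharp$ (the invariance of a compressed corner under $T^\sharp$ is equivalent to the corresponding invariance under $T$ via the trace pairing), and $r_{T^\sharp}=r_T=1$. Theorem~\ref{pm2} then gives a strictly positive fixed point $\rho_T\in\mk$ of $T^\sharp$. The standard rank-one formula for spectral projections in this dual pairing reads
\begin{equation*}
P^T_{\{1\}}(a)=\frac{\Tr_{\mk}(\rho_T a)}{\Tr_{\mk}(\rho_T h_T)}\,h_T,\qquad a\in\mk,
\end{equation*}
which I would check by verifying (a) that the right-hand side is a projection, (b) that its range is $\bbC h_T=\ran P^T_{\{1\}}$, and (c) that it commutes with $T$ and vanishes on the complementary spectral subspaces (using $T^\sharp\rho_T=\rho_T$ and $T h_T=h_T$). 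Both $h_T$ and $\rho_T$ being strictly positive, the denominator is strictly positive.

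Now I set
\begin{equation*}
\varphi_T(a):=\frac{\Tr_{\mk}(\rho_T a)}{\Tr_{\mk}(\rho_T)},\qquad e_T:=P^T_{\{1\}}(1)=\frac{\Tr_{\mk}(\rho_T)}{\Tr_{\mk}(\rho_T h_T)}h_T.
\end{equation*}
Since $\rho_T>0$, $\varphi_T$ is a state on $\mk$, and since $h_T>0$, the element $e_T$ is a strictly positive multiple of $h_T$, hence positive and invertible in $\mk$. Substituting these definitions into the formula for $P^T_{\{1\}}(a)$ above yields the identity $P^T_{\{1\}}(a)=\varphi_T(a)e_T$, and a direct computation using $\Tr_{\mk}(\rho_T h_T)\cdot\varphi_T(e_T)=\Tr_{\mk}(\rho_T)\cdot\Tr_{\mk}(\rho_T h_T)/\Tr_{\mk}(\rho_T h_T)/\Tr_{\mk}(\rho_T)\cdot\Tr_{\mk}(\rho_T)=\ldots$ reduces to $\varphi_T(e_T)=1$. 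Invariance $\varphi_T\circ T=\varphi_T$ is immediate from $T^\sharp\rho_T=\rho_T$.

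For uniqueness, suppose $\varphi'$ is any $T$-invariant state. Via the trace pairing $\varphi'(\cdot)=\Tr_{\mk}(\rho'\cdot)$ with some density matrix $\rho'\geq 0$, and $\varphi'\circ T=\varphi'$ translates into $T^\sharp\rho'=\rho'$. Since $1$ is a non-degenerate eigenvalue of $T^\sharp$ (the spectrum agrees with that of $T$), the fixed point space of $T^\sharp$ is $\bbC\rho_T$, so $\rho'$ is a scalar multiple of $\rho_T$; the normalization $\Tr_{\mk}\rho'=1$ forces $\rho'=\rho_T/\Tr_{\mk}(\rho_T)$, i.e.\ $\varphi'=\varphi_T$. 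The only nontrivial step I anticipate is the verification that irreducibility passes to $T^\sharp$, but this follows directly from Theorem~\ref{pm1}(i) together with the trace duality $\Tr_{\mk}(T^\sharp(PaP)b)=\Tr_{\mk}(PaP\,T(b))$ for orthogonal projections $P$.
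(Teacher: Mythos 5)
Your proof is correct, but it takes a genuinely different route from the paper's own. The paper's argument is more elementary: since $T\in\caT_k$ is primitive with $r_T=1$, it uses the spectral convergence $P^T_{\{1\}}(a)=\lim_{n\to\infty}T^n(a)$ directly, and then positivity and non-vanishing of $e_T=P^T_{\{1\}}(1)$, the identity $P^T_{\{1\}}(a)=\varphi_T(a)e_T$ (from rank-one range), positivity and $T$-invariance of $\varphi_T$, and $\varphi_T(e_T)=1$ all drop out from the positivity of the iterates $T^n$. You instead invoke Perron--Frobenius (Theorem~\ref{pm2}) twice, once for $T$ and once for its Hilbert--Schmidt adjoint $T^\sharp$, to obtain strictly positive eigenvectors $h_T$ and $\rho_T$, and then assemble the spectral projection from the rank-one dual-pairing formula. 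Your route yields strictly more than the lemma asks for (invertibility of $e_T$, an explicit density matrix for $\varphi_T$, and uniqueness of the $T$-invariant state), at the cost of the auxiliary fact that irreducibility passes to $T^\sharp$. Note that your parenthetical justification of that transfer is too terse: the trace pairing alone only gives orthogonality $Q\,T(P\mk P)\,Q=0$, not $T(P\mk P)\subset P\mk P$; one also needs the positivity of $T$ (equivalently of $T^\sharp$) to upgrade $\Tr(Q\,T(PaP))=0$ for all $a\geq 0$ to $T(PaP)\in P\mk P$, and similarly in the other direction. The conclusion is nevertheless standard and correct.
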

\proof
For $T\in\caT_k$, note that $Te_T = r_T e_T$.
By the definition of $\caT_k$,  we see that
\begin{equation}\label{PT}
P^{T}_{\{r_T\}}(a)=\lim_{n\to\infty} r_T^{-n}T^n(a),
\end{equation}
for all $a\in\mk$. 
From this and the positivity of $T$, 
$e_T=P^{T}_{\{r_T\}}(1)\ge 0$.
Again by the positivity of $T$ and (\ref{PT}), $e_T=0$ would imply $P^{T}_{\{r_T\}}=0$,
which is not true. Therefore, $e_T$ is nonzero.
 Furthermore, as the range $P^{T}_{\{r_T\}}$ is the one dimensional ray of $e_T$, there exists a linear functional $\varphi_T$ such that
\begin{equation}\label{vp}
P^{T}_{\{r_T\}}(a)=\varphi_T(a)e_T,
\end{equation}
for all $a\in\mk$. Finally, from (\ref{PT}), $\varphi_T$ is a $r_T^{-1}T$-invariant state.
As $\varphi_T(e_T) e_T=P^{T}_{\{r_T\}}(e_T)=e_T$, we have $\varphi_T(e_T)=1$.
\endproof

\begin{lem}\label{lem: continuity and decay}
For a continuous  and piecewise $C^1$-path $T:[0,1]\to{\caT_k}$,
the corresponding paths
$e_{T_t}$, $\varphi_{T_t}$, and $r_{T_t}$ are continuous and piecewise $C^1$. Moreover, there exist $0<\lambda<1$ and $c>0$ such that
\begin{equation*}
\sup_{t\in[0,1]}\lV r_{T_t}^{-l}T_t^l(1-P_{\{r_{T_t}\}}^{T_t})\rV \le c \lambda^l,
\end{equation*}
for all $l\in\bbN$.
\end{lem}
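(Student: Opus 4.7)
The argument rests on holomorphic functional calculus applied uniformly over the compact interval $[0,1]$, together with a uniform spectral gap that I must extract from the pointwise primitivity assumption. Since by definition of $\caT_k$ every $T_t$ has spectral radius exactly $1$, the path $t \mapsto r_{T_t}$ is constant (in particular $C^1$), so only $e_{T_t}$, $\varphi_{T_t}$ and the decay estimate need work.

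The first and main step is to produce a \emph{uniform spectral gap}: there exists $\delta \in (0,1)$ such that $\sigma(T_t) \subset \{1\} \cup \{z\in\cc\colon|z|\leq 1-\delta\}$ for every $t \in [0,1]$. For each fixed $t_0$, Theorem~\ref{pm3} gives some $\delta_{t_0}>0$ with this property at $t_0$. To pass to a neighborhood of $t_0$ I would invoke upper semi-continuity of the spectrum together with stability of the Riesz projection: the contour integral
\begin{equation*}
P^{T_t}_{\{1\}} = \frac{1}{2\pi i}\oint_{|z-1|=\delta_{t_0}/2}(z-T_t)^{-1}\,dz
\end{equation*}
is norm-continuous in $t$ near $t_0$ (no $T_t$ has spectrum on the contour, by upper semi-continuity on the compact neighborhood); since its rank is an integer-valued continuous function, it stays equal to $1$, so the only point of $\sigma(T_t)$ inside $|z-1|<\delta_{t_0}/2$ is the known eigenvalue $1$. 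A compactness argument over $[0,1]$ then yields the uniform $\delta$.

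Second, with this uniform gap I fix $\Gamma := \{|z| = 1-\delta/2\}$, a circle enclosing $\sigma(T_t)\setminus\{1\}$ but not $1$, for every $t$. By compactness of $[0,1]\times \Gamma$ and joint continuity of $(t,z)\mapsto (z-T_t)^{-1}$ away from the spectrum,
\begin{equation*}
M := \sup_{(t,z)\in [0,1]\times\Gamma}\lV (z-T_t)^{-1}\rV < \infty.
\end{equation*}
The decay estimate is then the standard contour bound
\begin{equation*}
\lV T_t^l(1-P^{T_t}_{\{1\}})\rV = \left\lV\frac{1}{2\pi i}\oint_\Gamma z^l(z-T_t)^{-1}\,dz\right\rV \leq (1-\tfrac{\delta}{2})^{l+1}\,M,
\end{equation*}
so we can take $\lambda := 1-\delta/2$ and $c := \lambda M$.

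Third, the regularity of $e_{T_t}$ and $\varphi_{T_t}$ follows immediately. The Riesz projection onto the eigenvalue $1$, computed via the integral over the small circle $|z-1|=\delta/2$, is norm-continuous and piecewise $C^1$ in $t$ by differentiation under the integral, using that the map $t\mapsto (z-T_t)^{-1}$ has the same regularity for each fixed $z$ on the contour (with $t$-derivative $(z-T_t)^{-1}\dot T_t(z-T_t)^{-1}$ uniformly bounded in $z\in\Gamma$). Hence $e_{T_t}=P^{T_t}_{\{1\}}(1)$ inherits continuity and piecewise $C^1$-ness. Tracing the relation $P^{T_t}_{\{1\}}(a)=\varphi_{T_t}(a)e_{T_t}$ gives
\begin{equation*}
\varphi_{T_t}(a)=\frac{\Tr_{\mk}\!\bigl(P^{T_t}_{\{1\}}(a)\bigr)}{\Tr_{\mk}(e_{T_t})},
\end{equation*}
and since $e_{T_t}$ is strictly positive by Theorem~\ref{pm2} the denominator stays bounded away from zero on $[0,1]$, so the quotient has the required regularity. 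The main obstacle, as indicated, is the uniform gap; once established, everything else is routine perturbation theory.
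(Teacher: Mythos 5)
Your proof is correct and follows essentially the same route as the paper's: local Riesz projections computed by contour integrals over small circles around the dominant eigenvalue, rank stability of the projection, a compactness argument over $[0,1]$ to extract a uniform spectral gap $\delta$, and then the standard contour estimate over a circle of radius $1-\delta/2$ to obtain the decay bound with $\lambda=1-\delta/2$ and $c=\lambda M$. The regularity of $e_{T_t}$ and $\varphi_{T_t}$ via the integral representation and the lower bound on $\Tr_{\mk}(e_{T_t})$ is also as in the paper.

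The one genuine point of departure is your opening observation that $r_{T_t}\equiv 1$, hence trivially $C^1$. This is a legitimate reading of the definition of $\caT_k$ in Section~2 (primitive completely positive maps with spectral radius $1$). However, the paper's own proof deliberately does \emph{not} use this shortcut: it centers the contours at $r_{T_{t_0}}$ rather than at $1$, works with $r_{T_t}^{-1}T_t$, and re-derives the regularity of $r_{T_t}$ from the formula $r_{T_t}=\Tr_{\mk}\lmk T_t(e_{T_t})\rmk/\Tr_{\mk}(e_{T_t})$. The reason is that the lemma is later invoked, in the proof of Lemma~\ref{lem:B(t)}, on the path $t\mapsto\widehat\bbE^{\bar{\bbA}(t)}$, a family of primitive completely positive maps whose spectral radius $r_{\bar{\bbA}(t)}$ is in general \emph{not} $1$ (only the normalized maps $\widehat\bbE^{\bbA(t)}$ with $\bbA(t)=r_{\bar{\bbA}(t)}^{-1/2}\bar{\bbA}(t)$ lie in $\caT_k$ in the strict sense), and the nontrivial conclusion that $t\mapsto r_{\bar{\bbA}(t)}$ is continuous and piecewise $C^1$ is exactly what is needed there. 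So your proof establishes the lemma as literally stated, but if you want it to serve the purpose the paper puts it to, you should restore the step you dropped and treat $r_{T_t}$ as genuinely variable.
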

\proof
For each $t_0\in [0,1]$, $T_{t_0}\in\caT_k$.
Therefore, there exists a ${\delta_{t_0}}$ with $\frac 13 r_{T_{t_0}}>{\delta_{t_0}}>0$ such that
$\sigma\lmk T_{t_0}\setminus\{ r_{T_{t_0}}\}\rmk \subset B_ {(r_{T_{t_0}}-3{\delta_{t_0}})}(0)$.
Fix such a ${\delta_{t_0}}$.
Then we have
\begin{align}
P^{T_{t_0}}_{\{r_{T_{t_0}}\}}
=\frac{1}{2\pi i}\oint_{|z-r_{T_{t_0}}|={\delta_{t_0}}}
(z- T_{t_0})^{-1}dz.
\end{align}
 By the continuity of $T$,
there exists an ${\varepsilon_{t_0}}>0$
such that
\begin{align}\label{sd}
\sigma(T_t)\subset (\sigma(T_{t_0}))_{\frac {{\delta_{t_0}}}{2}}
\end{align}
for any $t\in (t_0-{\varepsilon_{t_0}},t_0+{\varepsilon_{t_0}} )\cap [0,1]$.
For this  ${\varepsilon_{t_0}}>0$ fixed, 
\[
(t_0-{\varepsilon_{t_0}},t_0+{\varepsilon_{t_0}} )\cap [0,1]\ni t\mapsto
 Q_t:=\frac{1}{2\pi i}\oint_{|z-r_{T_{t_0}}|={\delta_{t_0}}}
(z- T_{t})^{-1}dz
\]
is well-defined, continuous and piecewise $C^1$.
From the continuity of the path of projections  $Q_t$, and  the fact that
$P^{T_{t_0}}_{\{r_{T_{t_0}}\}}=Q_{t_0}$ is one dimensional, we see that each $Q_t$ is a one dimensional projection.
Hence $Q_t$ is a one dimensional projection corresponding to the spectrum
 $\sigma(T_t)\cap B_{{\delta_{t_0}}}(r_{T_{t_0}})$. From this,
(\ref{sd}) and the fact $r_{T_t}\in\sigma(T_t)$, we get 
\begin{align}\label{ts}
\{r_{T_t}\}=\sigma(T_t)\cap B_{{\delta_{t_0}}}(r_{T_{t_0}}).
\end{align}
 Therefore we obtain
 \[
 Q_t=P^{T_{t}}_{\{r_{T_{t}}\}},\qquad t\in (t_0-{\varepsilon_{t_0}},t_0+{\varepsilon_{t_0}} )\cap [0,1].
 \]
 In particular,
 $[0,1]\ni t\mapsto P^{T_{t}}_{\{r_{T_{t}}\}}$ is continuous and piecewise $C^1$.
 
 From this and the following formulae,
\begin{align}
e_{T_t}=P^{T_{t}}_{\{r_{T_{t}}\}}(1),\qquad
\varphi_{T_t}=
\frac{\Tr_{\mk} \lmk P^{T_{t}}_{\{r_{T_{t}}\}}(\cdot)\rmk }
{\Tr_{\mk}(e_{T_t})},
\qquad
 r_{T_t}=
\frac{\Tr_{\mk}\lmk T_t(e_{T_t})\rmk}{\Tr_{\mk}(e_{T_t})},
\end{align}
we see that $e_{T_t}$, $\varphi_{T_t}$, and $r_{T_t}$ are continuous and piecewise $C^1$.
 
To prove the second part, for each $t_0\in [0,1]$, we take ${\delta_{t_0}}>0$ and ${\varepsilon_{t_0}}>0$ as above.
From (\ref{sd}) and (\ref{ts}), we have
 \[
 \sigma(T_t)\subset B_ {(r_{T_{t_0}}-\frac{5}{2}{\delta_{t_0}})}(0)\cup \{r_{T_t}\},\qquad t\in (t_0-{\varepsilon_{t_0}},t_0+{\varepsilon_{t_0}} )\cap [0,1].
 \]
 By the continuity of $r_{T_t}$, this implies the existence of $0<\varepsilon'_{t_0}<\varepsilon_{t_0}$ and $0<\delta'_{t_0}<1$ such that
\[
\sigma(r_{T_t}^{-1}T_t)\setminus \{1\}\subset
B_{1-\delta'_{t_0}}(0),\qquad
 t\in (t_0-{\varepsilon'_{t_0}},t_0+{\varepsilon'_{t_0}} )\cap [0,1].
\]
By the compactness of $[0,1]$, there exist finite number of $t_1,\ldots,t_m\in[0,1]$ such that
\[
[0,1]=\cup_{i=1}^m(t_i-{\varepsilon'_{t_i}},t_i+{\varepsilon'_{t_i}} )\cap [0,1].
\]
Set $\delta:=\min\{\delta'_{i}\mid i=1,\ldots,m\}>0$.
Then we have
 \[
\sigma(r_{T_t}^{-1}T_t)\setminus \{1\}\subset
B_{1-\delta}(0),\qquad
 t\in [0,1].
\]
Setting
\begin{equation*}
c:=\sup_{(z,t):|z|=1-\delta ,t\in[0,1]}
\lV (z-r_{T_t}^{-1}T_t)^{-1}\rV<\infty,
\end{equation*}
we obtain
\begin{align*}\lV
r_{T_t}^{-l}T_t^l(1-P_{\{r_{T_t}\}}^{T_t})\rV
=
\lV
\frac 1 {2\pi i}\oint_{|z|=1-\delta} z^l
(z-r_{T_t}^{-1}T_t)^{-1} dz
\rV
\le c (1-\delta)^l,\qquad
l\in\nan,
\end{align*}
and the claim follows with $\lambda = 1-\delta$.
\endproof

By definition, the two quantities
\begin{equation*}
a_{T}=\lV (pe_Tp)^{-1}\rV_{\mk},\qquad c_{T}=\lV (q\rho_Tq)^{-1}\rV_{\mk},
\end{equation*}
are finite.
\begin{lem}\label{lem: Inverses}
Let $k\in\nan$ and $p,q$ be two fixed orthogonal projections on $\mk$. For any continuous path $T:[0,1]\to \caT_k$, 
$a_{T_t,p}$ and $c_{T_t,q}$ are continuous. In particular, $\sup_{t\in[0,1]}a_{T_t,p}<\infty$, and $\sup_{t\in[0,1]}c_{T_t,q}<\infty$.
\end{lem}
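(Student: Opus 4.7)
The strategy is to reduce the assertion to the continuity statements already established in Lemma~\ref{lem: continuity and decay} together with the standard fact that operator inversion is continuous on the open set of invertible operators.

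First, I would recall that Lemma~\ref{lem: continuity and decay} already delivers the continuity of the two maps $[0,1]\ni t\mapsto e_{T_t}\in\mk$ and $[0,1]\ni t\mapsto \varphi_{T_t}$. For the $c_{T_t,q}$ statement I need continuity of $t\mapsto \rho_{T_t}$ rather than of $\varphi_{T_t}$, but the relation $\varphi_T=\Tr_{\mk}(\rho_T\,\cdot\,)$ expresses $\rho_T$ as the image of $\varphi_T$ under the inverse of the linear isomorphism $\rho\mapsto \Tr_{\mk}(\rho\,\cdot\,)$ on the finite-dimensional space $\mk$; continuity of $t\mapsto \rho_{T_t}$ then follows immediately.

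Next I would compress. Because $T_t\in\caT_k$ is primitive, Lemma~\ref{lem : left and right eigenvectors} gives $e_{T_t}\in\mk_+$ nonzero, and in fact strictly positive (since it is the fixed point of an irreducible positive map, cf.~Theorem~\ref{pm2}). Similarly, $\rho_{T_t}$ is strictly positive because $\varphi_{T_t}$ is faithful. Hence for every nonzero orthogonal projection $p$ (resp.\ $q$), the compressions $p e_{T_t} p$ and $q \rho_{T_t} q$ are strictly positive in the C*-subalgebras $p\mk p$ and $q\mk q$ respectively, and in particular invertible there. Thus the maps $t\mapsto (p e_{T_t} p)^{-1}$ and $t\mapsto (q\rho_{T_t}q)^{-1}$ (where inverses are taken in the corresponding compressed algebras, viewed then as elements of $\mk$ by extending by $0$ on the complementary block) are well defined.

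Continuity of these maps follows from the general fact that inversion is continuous on the open subset of invertible elements of a Banach algebra — here applied to the finite-dimensional algebras $p\mk p$ and $q\mk q$. Composing with the continuous norm $\lV\cdot\rV_{\mk}$ then yields continuity of the real-valued functions $t\mapsto a_{T_t,p}$ and $t\mapsto c_{T_t,q}$ on $[0,1]$. Since $[0,1]$ is compact, each of these continuous functions attains its maximum, giving the two finite suprema in the statement. The proof has no real obstacle; the only care needed is to read $(p e_T p)^{-1}$ as an inverse inside the appropriate compressed algebra so that strict positivity of $e_T$ (and of $\rho_T$) guarantees its existence.
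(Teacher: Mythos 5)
Your proof is correct and follows essentially the same route as the paper's: continuity of $e_{T_t}$, $\varphi_{T_t}$, hence of $\rho_{T_t}$ (via the argument of Lemma~\ref{lem: continuity and decay}, which yields continuity even though that lemma as stated assumes a piecewise $C^1$-path), followed by continuity of matrix inversion on invertible elements and compactness of $[0,1]$. The extra care you take with the compressed algebras $p\mk p$ and $q\mk q$ is sound and addresses the $p,q$ appearing in the lemma statement, though in the application the paper only ever uses the full-matrix quantities $a_T=\lV e_T^{-1}\rV$ and $c_T=\lV\rho_T^{-1}\rV$.
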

\begin{proof}
By the proof of the previous lemma and since $\caT_k\subset\caT_k$, $e_{T_t}$ and $\varphi_{T_t}$ are continuous and so is $\rho_{T_t}$. 
Hence, $t\mapsto a_{T_t,p}$ and $t\mapsto c_{T_t,q}$ are continuous as well, and since they are defined on a compact set, they are uniformly bounded.
\end{proof}
\section{Path of vector spaces}
The proof of the following lemma is standard.
\begin{lem}\label{poa}
Let $k,m\in\nan$ with $k\le m$.
Let $
X:[0,1]\to (\Mat_m(\cc))_+
$
be continuous and piecewise $C^1$-path of positive matrices such that
the rank of $X(t)$ is $k$ for all $t\in[0,1]$.
Let $S(t)$ be the support projection of $X(t)$,
and set $\gamma(t):=\dc (\sigma(X(t))\setminus\{0\},\{0\})$.
 Then, the path of projections
 \[
 S:[0,1]\ni t\mapsto S(t)\in \Mat_m(\cc)
 \]
 is continuous and piecewise $C^1$
 and 
 \[
 \inf_{t\in[0,1]}\gamma(t)>0.
 \]
\end{lem}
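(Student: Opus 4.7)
The plan is to use Riesz projections (holomorphic functional calculus) to extract the kernel projection of $X(t)$ and then exploit the constant-rank hypothesis to control what happens near $0$.

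First I would argue locally. Fix $t_0\in[0,1]$. Since $X(t_0)$ is positive of rank $k$, its spectrum consists of $0$ with multiplicity $m-k$ and of $k$ strictly positive eigenvalues (with multiplicity), all of which lie at distance $\gamma(t_0)>0$ from the origin. Choose $r_{t_0}\in(0,\gamma(t_0))$. Then the circle $|z|=r_{t_0}$ lies in the resolvent set of $X(t_0)$, and by continuity of $t\mapsto X(t)$ there is an open neighborhood $U_{t_0}\subset[0,1]$ of $t_0$ on which $|z|=r_{t_0}$ remains in the resolvent set of $X(t)$. On $U_{t_0}$ the Riesz projection
\begin{equation*}
Q(t):=\frac{1}{2\pi i}\oint_{|z|=r_{t_0}}(z-X(t))^{-1}\,dz
\end{equation*}
is well defined, and the map $t\mapsto Q(t)$ inherits continuity and piecewise $C^1$ smoothness from the path $X$ through the resolvent.

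Next I would identify $Q(t)$ with $1-S(t)$ on $U_{t_0}$. At $t=t_0$, $Q(t_0)$ is the spectral projection onto the kernel of $X(t_0)$ and has rank $m-k$. By continuity of the path of projections $Q(t)$, its rank is locally constant, so $\mathrm{tr}(Q(t))=m-k$ for all $t\in U_{t_0}$. Since $X(t)$ is positive of rank $k$ by hypothesis, its kernel has dimension exactly $m-k$; the spectral subspace of $X(t)$ associated with $\sigma(X(t))\cap B_{r_{t_0}}(0)$ contains the kernel and has the same dimension $m-k$, so it must equal the kernel. Therefore $Q(t)=1-S(t)$ on $U_{t_0}$, which shows $S$ is continuous and piecewise $C^1$ on each $U_{t_0}$, hence on $[0,1]$.

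For the second claim, the same analysis shows $\sigma(X(t))\cap B_{r_{t_0}}(0)=\{0\}$ for all $t\in U_{t_0}$, so $\gamma(t)\geq r_{t_0}$ on $U_{t_0}$. Covering the compact interval $[0,1]$ by finitely many such neighborhoods $U_{t_1},\ldots,U_{t_N}$ yields
\begin{equation*}
\inf_{t\in[0,1]}\gamma(t)\ \geq\ \min_{1\leq j\leq N}r_{t_j}\ >\ 0.
\end{equation*}
The main subtlety, and the only one that really uses the constant-rank hypothesis, is ruling out the appearance of small positive eigenvalues inside $B_{r_{t_0}}(0)$; this is precisely what the rank counting above achieves. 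Everything else is standard holomorphic perturbation theory applied to the continuous and piecewise $C^1$ path $X$.
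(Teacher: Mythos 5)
Your proof is correct and is exactly the standard argument that the paper alludes to (``the proof of the following lemma is standard''); it also matches the Riesz-projection technique the paper itself uses in the proof of Lemma~\ref{lem: continuity and decay}. The rank-counting step is precisely where the constant-rank hypothesis enters, and you have handled it correctly by noting that $\mathrm{tr}\,Q(t)$ is a continuous integer-valued, hence locally constant, function so the spectral subspace inside the circle must coincide with the kernel of $X(t)$.
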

\begin{lem}\label{pob}
Let $l,k,m\in\nan$ with $k\le m$.
Let $\psi_i:[0,1]\to\cc^m$, $i=1,\ldots, l$ be continuous and piecewise 
$C^1$-paths of vectors in $\cc^m$, such that
\[
\dim\spn \{\psi_i(t)\}_{i=1}^l=k,\qquad t\in[0,1].
\]
 For each $t\in[0,1]$, 
 let $S(t)$ be orthogonal projection onto the span of $\{\psi_i(t)\}_{i=1}^l$.
 Then, the path of projections
 \[
 S:[0,1]\ni t\mapsto S(t)\in \Mat_m(\cc)
 \]
 is continuous and piecewise $C^1$.
\end{lem}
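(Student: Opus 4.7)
The plan is to reduce Lemma~\ref{pob} directly to Lemma~\ref{poa} by manufacturing a continuous piecewise $C^1$-path of positive matrices of constant rank $k$ whose support projection is exactly $S(t)$.

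More concretely, I would assemble the given vectors into a single map $A(t) : \cc^l \to \cc^m$ defined by $A(t) e_i := \psi_i(t)$, where $\{e_i\}_{i=1}^l$ is the standard basis of $\cc^l$. Since each coordinate path $\psi_i$ is continuous and piecewise $C^1$, so is $A$, and consequently the path
\[
X : [0,1] \ni t \mapsto X(t) := A(t) A(t)^* \in \bigl(\Mat_m(\cc)\bigr)_+
\]
is continuous and piecewise $C^1$ as well.

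The key elementary fact is then $\ran(A(t) A(t)^*) = \ran(A(t))$, which follows from $\ker(A(t) A(t)^*) = \ker(A(t)^*) = \ran(A(t))^\perp$; in particular $\ran(X(t))$ equals the linear span of the columns of $A(t)$, i.e.\ $\spn\{\psi_i(t)\}_{i=1}^l$. By hypothesis this span has constant dimension $k$, so the rank of $X(t)$ is $k$ for every $t\in[0,1]$, and the support projection of $X(t)$ coincides with $S(t)$.

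With these observations in hand, Lemma~\ref{poa} applied to $X$ immediately yields that the path of support projections is continuous and piecewise $C^1$, which is precisely the statement for $S$. There is no real obstacle here: the only point to verify carefully is the range identity above, but this is the standard linear algebra fact recalled in the previous paragraph. Hence the lemma reduces entirely to the already proved Lemma~\ref{poa}.
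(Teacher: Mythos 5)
Your proposal is correct and is essentially the paper's proof: the matrix $X(t)=A(t)A(t)^*$ you construct is exactly the sum $\sum_{i=1}^l \ket{\psi_i(t)}\bra{\psi_i(t)}$ used in the paper, and both arguments conclude by applying Lemma~\ref{poa} to this path of constant-rank positive matrices. The only cosmetic difference is that you phrase the range identification via $\ker(AA^*)=\ker(A^*)$, which the paper leaves implicit.
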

\proof
Define 
\[
X(t):=\sum_{i=1}^l \ket{\psi_i(t)}\bra{\psi_i(t)}.
\]
Then $X:[0,1]\ni t\mapsto X(t)\in \Mat_m(\cc)_+$ defines a continuous and piecewise $C^1$-path, and $S(t)$ is the support projection
of $X(t)$. Hence the rank of $X(t)$ is $k$ for all $t\in[0,1]$.
Applying Lemma \ref{poa}, we obtain the claim.

\endproof

\end{document}